\def\FullBox{\hbox{\vrule width 8pt height 8pt depth 0pt}}
\newcommand{\QED}{\;\;\;\FullBox}
\renewenvironment{proof}{\noindent{\bf Proof:~}}{\hfill\QED}
\newenvironment{proofof}[1]{\noindent{\bf Proof of {#1}:~}}{\hfill\(\QED\)}
\def\IS{\textsf{IS}_G}
\def\DO{\textsf{D}}
\def\ALG{\texttt{Alg}}
\title{Nearly optimal edge estimation with independent set queries}
\author{
Xi Chen\thanks{Columbia University, email: \texttt{xichen@cs.columbia.edu}.}
\and
Amit Levi\thanks{University of Waterloo, email: \texttt{amit.levi@uwaterloo.ca}.}
\and
Erik Waingarten\thanks{Columbia University, email: \texttt{eaw@cs.columbia.edu}.}}
\begin{document}         
\maketitle

\begin{abstract}
We study the problem of estimating the number of edges of an unknown, undirected graph $G=([n],E)$ with 
  access to an independent set oracle. 
When queried about a subset $S\subseteq [n]$ of vertices,
  the independent set oracle answers whether $S$ is an independent set in $G$ or not.
%each query An independent set oracle answers for each subset of vertices, whether or not the subset is an independent set in $G$. 
Our first main result is an algorithm that computes a $(1+\eps)$-approximation
  of the number of edges $m$ of the graph using $\min(\sqrt{m},n / \sqrt{m})\cdot\poly(\log n,1/\eps)$
  independent set queries.
This improves the upper bound of $\min(\sqrt{m},n^2/m)\cdot \poly(\log n,1/\eps)$
  by Beame et al. \cite{BHRRS18}.
Our second main result shows that ${\min(\sqrt{m},n/\sqrt{m}))/\polylog(n)}$ independent set queries are necessary, thus establishing that our algorithm is optimal up to a factor of ${\poly}(\log n, 1/\epsilon)$.
%We show that for any given approximation parameter $\epsilon \in (0,1)$, the algorithm provides a $(1+\eps)$-approximation of the number of edges of the graph $m$ with high probability. The algorithm makes at most $\min(\sqrt{m},n / \sqrt{m})\cdot\poly(\log n,1/\eps)$ queries to the independent set oracle of the graph. In addition, we prove that $\widetilde{\Omega}(\min(\sqrt{m},n/\sqrt{m}))$ queries to the independent set oracle are necessary, thus establishing that the algorithm is optimal up to ${\poly}(\log n, 1/\epsilon)$.

\end{abstract}

\thispagestyle{empty}

\newpage
\thispagestyle{empty}
\setcounter{tocdepth}{2}
\tableofcontents
\thispagestyle{empty}
\newpage
\setcounter{page}{1}

\newcommand{\ISO}{\texttt{IS}}
\newcommand{\BISO}{\texttt{BIS}}

\section{Introduction}

We study the problem of estimating the number of edges of a simple undirected graph $G = ([n], E)$ in the context of sublinear-time graph algorithms. %More specifically, t
The goal is to design a highly-efficient randomized algorithm that, %is 
given a certain type of oracle access to %an oracle that can answer certain types of
%queries about 
an underlying graph $G$, outputs a number $\wt{m} $ that approximates the number of edges of $G$. The first result in this direction was by Feige~\cite{F06},  %in which the oracle is 
who studied this problem when the oracle is a \emph{degree~oracle}: the degree oracle
% of a graph $G = ([n], E)$ is given a vertex $v \in [n]$ and 
answers queries of the form ``what is the degree of a given vertex $v$?'' The algorithm of Feige makes $O(n/\sqrt{m})$ queries to the degree oracle, where $m$ denotes the number of edges of the input graph $G$,~and outputs a $(2+\eps)$-approximation to $m$ for any constant $\eps > 0$.
Moreover, Feige showed that the upper bound of $n/\sqrt{m} $ is tight for a $(2+\eps)$-approximation,
and indeed $\Omega(n^2/m)$ degree queries are necessary for a $(2-o(1))$-approximation.
% the number of edges of a graph requires . 
Soon thereafter, Goldreich and Ron~\cite{GR08} considered~an oracle that, in addition to degree queries, can answer \emph{neighbor} queries (i.e., given a vertex $v\in [n]$ and an index $j$, the oracle returns the $j$th neighbor of $v$ according to some fixed ordering).
%\footnote{If the $j$-th neighbor does not exist, the oracle returns a special symbol $\bot$.}. 
Their algorithm uses $\smash{\widetilde{O}(n/\sqrt{m})}$\hspace{0.04cm}\footnote{\label{footnoteone}We use $\widetilde{O}(f(n))$ and $\widetilde{\Omega}(f(n))$ to surpress $\polylog(f(n))$ factors.} queries and 
outputs a $(1+\eps)$-approximation to $m$ for any constant $\eps>0$;
they further showed that the upper bound is tight up to a $\polylog(n)$ factor.

Since then,  sublinear-time algorithms have been developed for a variety of 
graph problems, including estimating the number of stars \cite{GRS11,ABGPRY16}, triangles~\cite{ELRS17},
$k$-cliques \cite{ERS18}, and arbitrary small subgraphs \cite{AKK19},  finding forbidden graph minors \cite{KSS18,KSS19}, sampling edges almost uniformly \cite{ER18}, approximating the minimum weight spanning tree \cite{CRT05,CS09,CEFMNRS05}, maximum matching \cite{NO08,YYI09}, and minimum vertex cover \cite{PR07,MR09,NO08,YYI09,HKNO09,ORRR12}. 
%^  These algorithms interact with an oracle that performs queries such as degree queries, neighbor queries, and \emph{edge existence} queries (such as ``is $u$ connected to $v$?'')\footnote{We note that \cite{AKK19} uses uniform edge sampling in addition to the above specified queries.}. 
As noted in a recent work of Beame, Har-Peled, Ramamoorthy, Rashtchian, and Sinha \cite{BHRRS18},
all these algorithms interact with oracles that provide only \emph{local} information about
the underlying graph (such as degree, neighbor, and \emph{edge existence} queries where an algorithm
can ask ``is vertex $u$ connected to vertex $v$?'')\footnote{One exception is
	that \cite{AKK19} also uses uniform edge sampling in addition to the above specified queries.}.
They suggested that 
% noticed that the oracles considered (degree, neighbor, and edge existence) in previous work give only \emph{local} information about the underlying graph, and suggested that 
\emph{non-local} oracle models may be natural in certain
scenarios of graph parameter estimation and their non-locality may enable
more efficient graph algorithms.

Along this line of investigation, \cite{BHRRS18} introduced both the \emph{independent set} oracle and the \emph{bipartite independent set} oracle {and studied the problem of estimating the number of edges under these two query models}. The independent set oracle for a graph $G = ([n], E)$ can be queried with a set $S \subseteq [n]$ of vertices  and outputs whether or not $S$ is an independent set in $G$, i.e. whether or not there exist vertices $u, v \in S$ with $(u, v) \in E$. The bipartite independent set oracle, on the other hand, can be queried with a pair of disjoint sets $S, T \subset [n]$ and outputs whether or~not~$(S, T)$ is a bipartite independent set in $G$, i.e. whether or not there exist $u \in S$ and $v \in T$ with $(u, v) \in E$.\footnote{We remark that the bipartite independent set oracle is at least as powerful, up to poly-logarithmic factors, as~the independent set oracle. Consider a graph $G = ([n], E)$, a set $S \subseteq [n]$ of vertices, and the question of whether or not $S$ is an independent set. Letting $(\bS_1,\bS_2)$ be a uniformly random partition of $S$, we may query the bipartite independent set oracle with $\bS_1$ and $\bS_2$. If $S$ is an independent set, then $(\bS_1, \bS_2)$ will be a bipartite independent set; if $S$ is not~an independent set, then $(\bS_1, \bS_2)$ will not be a bipartite independent set with probability at least $1/2$. Thus, $O(\log(1/\delta))$ bipartite independent set queries can simulate an independent set query with probability at least $1 - \delta$.
} 
%Equivalently, the bipartite independent set oracle is given access to the binary adjacency matrix $A$ of a graph $G = ([n], E)$ and evaluates the ``or'', $\bigvee$, of all entries within any combinatorial rectangle $S \times T \subset A$; the independent set oracle is restricted to combinatorial squares of $A$.

{The problem of edge estimation using (bipartite) independent set queries 
	shares resemblance to the classical problem of \emph{group testing}, which dates back to 1943 \cite{D43}
	and has found many recent applications in computer science \cite{S85,CS90,DH00,ngo2000survey,MAP04,CM05,INR10}. 
	%which has applications in areas such as data streams \cite{see Efficiently Decodable Non-adaptive Group Testing}
	%and data forensics \cite{see same paper}.  
	In group testing
	one needs to recover an unknown subset $S$ of a known universe $U$ by making \emph{subset queries}: an algorithm can 
	pick a subset $T$ of $U$ and ask  whether $T$ contains any element from $S$.
	%  one needs to efficiently approximating the number of defective items in a population, where the queries are subsets of items in the population. Given a query, the oracle outputs $1$ if and only if there is a defective item in the query set. 
	The~graph setting of the current~paper~is~a natural generalization of group testing by considering the unknown object
	as a binary relation over a known universe $U$.
	The goal of estimating the number of edges, on the other hand, is a relaxation of group testing because it suffices to obtain
	an approximation of the size of the unknown binary relation, instead of recovering the relation itself exactly.
	The same relaxation on the original group testing setting (i.e., using subset queries to estimate the size of an unknown subset $S\subseteq U$)
	was studied by Ron and Tsur \cite{RT16}.
	Besides group testing, edge estimation using independent set queries % the number of edges using independent set queries  
	is motivated by connections to problems that arise in computational geometry and counting complexity,
	which we refer the interested reader to \cite{BHRRS18}.}
%\xnote{Let's chat some time to see how to strengthen this part.
%It is a generalization since 
%In the graph setting, group testing suggests testing pairwise interactions between many items instead of single items. In addition, estimating the number of edges using an independent set oracle is a special case of a more general problem or approximating the size of a hidden set using subset queries considered by Ron and Tsur \cite{RT16}. In this problem, one needs to approximate the size of an unknown set $S$ in a known universe $U$ using \emph{subset queries}. The algorithm can specify subsets $T\subseteq U$, and get as an answer whether $T\cap S$ is nonempty. Indeed, when $S$ is the set of edges in $G$ and $T$ is any subset of pairs of vertices, subset queries correspond to independent set queries.
%We refer the interested reader to \cite{BHRRS18} for 
% motivation behind these two independent set oracle models as well as
%t%heir connections to problems that naturally arise in group testing, computational geometry, and counting complexity.\xnote{Let's chat %some time to see how to strengthen this part.}} 

Perhaps surprisingly, \cite{BHRRS18} gave an algorithm that returns a $(1+\eps)$-approximation to the number of edges by making only $\poly(\log n, 1/\eps)$ queries to the \emph{bipartite} independent set oracle.~So in this setting, the non-locality indeed brings down the query complexity significantly for the edge estimation problem (compared to \cite{F06} and \cite{GR08}, both of which use local queries only).
For~the independent~set oracle,
\cite{BHRRS18} obtained an algorithm for a $(1+\eps)$-approximation 
of the number $m$ of edges  with query complexity $\min(\sqrt{m}, n^2/m) \cdot \poly(\log n, 1/\eps)$. %
%  (which is about $n^{2/3}$ in the worst case).
%Note that the complexity of the algorithm is $n^{2/3}$ in the worst case, which is worse
%  than that of \cite{GR08} with local queries.
It was left as an open problem in \cite{BHRRS18} to improve current understanding of 
edge estimation under independent set queries.

\subsection{Our results}

%\textbf{Upper bound.} 
%We first give an edge estimation algorithm with independent set queries.
%The main theorem of this work settles the optimal query complexity of edge estimation with an independent set oracle up to $\poly(\log n, 1/\eps)$ factors. 

\begin{theorem}[Upper bound]\label{thm:intro1}
	There is a randomized algorithm that takes as input
	(1)~an accuracy parameter $\eps>0$, (2) a positive integer $n$ as the number of vertices and (3) access to
	the independent set oracle~of an undirected graph $G =([n],E)$ with $m=|E|\ge 1$.\footnote{The assumption of $m\ge 1$ is merely for convenience; it avoids the issue that the query complexity upper bound claimed would be $0$ when $m=0$.
		We note that whether a graph is empty or not can be determined by a single independent set query.}
	\hspace{-0.05cm}With probability at least $1-o(1)$, the algorithm makes
	no more than $\min(\sqrt{m},n/\sqrt{m})\cdot \poly(\log n, 1/\eps)$ many independent set queries and outputs  
	a number $\widetilde{m}$ that satisfies
	$%\begin{equation}\label{output2}
	(1-\eps) m\le \widetilde{m}\le (1+\eps) m$.
	%\end{equation} 
	%For any fixed $\eps \in (0,1)$,  the query complexity of approximating the number of edges of a graph $G = ([n], E)$ up to a multiplicative factor of $(1\pm\eps)$ given an independent set oracle is $\widetilde{\Theta}\left(\min(\sqrt{m} ,n / \sqrt{m})\right)$, where $m$ is the number of edges in $G$.
\end{theorem}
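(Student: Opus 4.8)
The plan is to handle two regimes separately, depending on whether $m$ is small (so that $\sqrt{m} \le n/\sqrt m$, i.e.\ $m \le n$) or large ($m > n$), and then to design one algorithm whose cost is the minimum of the two bounds up to $\poly(\log n, 1/\eps)$ factors. In both cases the high-level strategy is the classical one for sublinear edge estimation: guess a value $M$ for $m$ (by iterating over geometrically increasing powers of two, $M = 1, 2, 4, \ldots, \binom n2$), build an unbiased-ish estimator whose accuracy is good when $M \approx m$, and use a sanity check to detect when the guess is roughly correct. The key primitive we get for free is the reduction in the footnote showing that $O(\log(1/\delta))$ bipartite-independent-set (BIS) queries simulate one IS query with failure probability $\delta$; but since we only have IS queries natively, I would instead build everything directly on IS queries, or emulate a ``coarse'' BIS-style query by random partitioning and amplification.

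First, for the \emph{small-$m$ regime}, I would aim for an $O(\sqrt m)\cdot\poly(\log n,1/\eps)$ bound. The natural approach: sample a random subset $S$ of vertices where each vertex is included independently with probability $p$, and ask whether $S$ is independent. If $p$ is tuned so that the expected number of edges induced inside $S$ is $\Theta(1)$ — i.e.\ $p^2 m = \Theta(1)$, so $p = \Theta(1/\sqrt M)$ for the current guess $M$ — then the probability that $S$ is \emph{not} independent is bounded away from $0$ and $1$, and moreover concentrated enough (via inclusion–exclusion / Janson-type bounds or a direct second-moment argument on the number of induced edges) that the empirical frequency of ``not independent'' over $\poly(\log n,1/\eps)$ independent trials pins down $p^2 m$ and hence $m$ to within $(1\pm\eps)$. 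Subtleties: the event ``$S$ is not independent'' only tells us whether the induced edge count is $\ge 1$, not its exact value, so one must either subsample within $S$ to distinguish ``exactly one edge'' from ``several'' (a group-testing-flavored step), or argue that at the right scale the Poisson approximation makes $\Pr[\text{not independent}] = 1 - e^{-p^2 m(1+o(1))}$ accurate enough; this is the first place where the argument gets delicate, because high-degree vertices can make the induced-edge count far from Poisson. I expect the fix is to first cap/handle high-degree vertices separately (estimate the contribution of the, say, $O(\sqrt m)$ highest-degree vertices by a separate bucketing procedure, since there can be at most $O(\sqrt m)$ of them with degree $\ge \sqrt m$), then run the sampling estimator on the remaining low-degree graph where concentration is clean.

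Second, for the \emph{large-$m$ regime} ($m > n$), the target is $O(n/\sqrt m)\cdot \poly(\log n, 1/\eps)$, which is \emph{smaller} than $\sqrt m$ and even smaller than $n^{1/2}$. Here I would exploit that the graph is dense: the complement graph $\bar G$ has $\binom n2 - m$ edges, and an IS query ``is $S$ independent in $G$?'' is the same as ``is $S$ a clique in $\bar G$?'' — so a query that returns ``not independent'' is a very weak signal when $m$ is large (almost every pair of vertices is an edge), and the informative regime is small sets. Concretely, with $|S| = k$ and $k^2 m / n^2 = \Theta(1)$, i.e.\ $k = \Theta(n/\sqrt M)$, the set $S$ is independent with non-negligible probability, and the same empirical-frequency-of-independence estimator recovers $m$. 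This matches the claimed $n/\sqrt m$ bound because a single such query only ``costs'' us in the accounting when we need $\poly(\log n,1/\eps)$ repetitions — wait, that's $O(1)$ sized queries, so naively it is only $\poly\log$ queries, not $n/\sqrt m$; the $n/\sqrt m$ must come from needing to \emph{also} estimate the near-$\binom n2$ bulk precisely (a $(1\pm\eps)$ estimate of $m$ when $m = \binom n2 - t$ requires estimating $t$ to relative accuracy, and $t$ can be as large as $\binom n2$), or from a sampling lemma where each query touches $\Theta(1)$ vertices but one needs $\Theta(n/\sqrt m)$ vertices-worth of information total. I would reconcile this by following the \cite{BHRRS18} template more closely: their $n^2/m$ bound in the dense regime likely comes from $O(n^2/m)$ queries each of constant size failing to cover all vertices, and the improvement to $n/\sqrt m$ presumably requires a smarter, adaptive scheme — perhaps importance sampling guided by coarse degree estimates obtained via binary search with IS queries (an IS query on $\{v\} \cup T$ detects whether $v$ has a neighbor in $T$, giving a BIS-type probe into $v$'s neighborhood, so one can binary-search-estimate $\deg(v)$ in $O(\log n)$ IS queries). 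I would then combine per-vertex degree estimates à la Feige/Goldreich–Ron: sample $O(n/\sqrt M)$ vertices, estimate each of their degrees to constant factor with $O(\log n)$ IS queries apiece, and average, handling the heavy/light vertex split to get $(1\pm\eps)$.

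The main obstacle, I expect, is \textbf{the dense regime and the transition between regimes}: making the degree-estimation-via-IS-queries primitive accurate enough to yield $(1+\eps)$ (not just constant-factor) approximation without blowing up the query budget, correctly handling heavy vertices in both regimes, and — crucially — ensuring the algorithm does not need to know $m$ in advance, so that the geometric search over guesses $M$ succeeds with an appropriate stopping/verification test while the total query count telescopes to $\min(\sqrt m, n/\sqrt m)\cdot \poly(\log n, 1/\eps)$ rather than summing badly over all guesses. I would structure the writeup as: (i) the IS-based degree-probe lemma; (ii) the heavy-vertex handling lemma; (iii) the sparse-regime estimator and its analysis; (iv) the dense-regime estimator and its analysis; (v) the guessing/verification wrapper tying it together, with a union bound over $O(\log n)$ guesses for the $1 - o(1)$ success probability.
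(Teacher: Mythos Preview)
Your high-level structure matches the paper: the $\sqrt{m}$ bound is quoted directly from \cite{BHRRS18} (their Lemma~5.6) with no new work, and the new content is an $n/\sqrt{m}$-query algorithm; the final algorithm simply runs both in parallel and returns whichever finishes first. The geometric search over guesses $M$ and the $1-o(1)$ union bound are also essentially what the paper does.

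The genuine gap is in your dense-regime plan. Your degree-estimation primitive --- sample $T$ at rate $p$ and test whether $T$ is independent but $\{v\}\cup T$ is not --- only works when $p\lesssim 1/\sqrt{m}$ (otherwise $T$ itself fails to be independent), which means it can only detect $\deg(v)\gtrsim\sqrt{m}$ in $\poly(\log n,1/\eps)$ queries. For a vertex $v$ with $\deg(v)=d\ll\sqrt{m}$ you would need $\sim\sqrt{m}/d$ trials just to hit a neighbor, and even that does not suffice: if the neighbors of $v$ happen to have high degree, then conditioning on $T$ containing one of them makes $T$ itself unlikely to be independent, so the probe does not isolate $\deg(v)$ at all. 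Your ``sample $O(n/\sqrt{M})$ vertices and estimate each degree in $O(\log n)$ queries'' step therefore breaks on every low-degree vertex you sample, and the heavy/light split you mention does not repair this because the light side is exactly where the primitive fails.

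The paper gets around this with two ideas you are missing. First, for low-degree vertices (degree below $\sqrt{\ol m}$) it buckets them not by $\deg(v)$ but by $\deg(v,L)$, the number of neighbors that are themselves low-degree; this is estimable because when $T$ picks up a low-degree neighbor of $v$, $T$ is still likely independent. Second, for the high-degree buckets $H_{k,\ell}$ --- which can be far too small to hit by uniform vertex sampling --- the paper does not estimate degrees one vertex at a time. Instead it draws a random set $\bS$ of size $\approx n/\sqrt{\ol m}$ and a second random set $\bT$ tuned to the bucket's degree, and uses a short sequence of IS queries plus a binary search to decide whether $\bS$ contains \emph{any} vertex of $H_{k,\ell}$; repeating this gives a size estimate for the bucket. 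This genuinely non-local use of the IS oracle is the step that brings the complexity down from the $n^2/m$ of \cite{BHRRS18} to $n/\sqrt{m}$; a per-vertex Feige/Goldreich--Ron style scheme with IS-based degree probes will not reach it.
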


The improvement over the upper bound of \cite{BHRRS18} is due to a new algorithm for edge estimation that uses $(n/\sqrt{m})\cdot \poly(\log n,1/\eps)$ independent set queries
(Theorem \ref{thm:main}).
Note that the query complexity achieved by the algorithm underlying Theorem~\ref{thm:intro1} is essentially the same as \cite{GR08}; however, the two algorithms access the graph with very different ways %\enote{Do we want to say as a footnote that we don?t have a simulation between the two, so comparing the oracles is not straight forward?
%\anote{I think that the footnote in the previous page already suggests that, am I wrong?}}
(independent set oracle versus degree and neighbor oracles). The proof of Theorem~\ref{thm:intro1} requires new ideas and algorithmic techniques that are developed for independent set queries. %, so the ideas and techniques behind these two algorithms are very different because of differences between
%  the two oracle models .
%$We further discuss the difference when we sketch our algorithm 
See further discussion in Section \ref{sec:overview}.\medskip\vspace{0.05cm}

%\noindent\textbf{Lower bound.}
%We prove a lower bound that matches the algorithm to a factor of $\poly(\log n,1/\eps)$.

\begin{theorem}[Lower bound]\label{thm:intro2}
	Let $n$ and $m$ be two positive integers with $m\le {n\choose 2}$.
	Any randomized algorithm with access to the independent set oracle of an undirected graph 
	$G=([n],E)$ must make at least $\smash{{\min(\sqrt{m},n /\sqrt{m})/\polylog(n)}}$ queries in order to determine whether $|E|\le m/2$ or $|E|\ge m$~with probability at least $2/3$.
\end{theorem}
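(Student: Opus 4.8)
The plan is to use Yao's minimax principle and reduce to a distributional lower bound: construct distributions $\mathcal{D}_{\mathrm{no}},\mathcal{D}_{\mathrm{yes}}$ on graphs over $[n]$, with every graph in the support of $\mathcal{D}_{\mathrm{no}}$ having at most $m/2$ edges and every graph in the support of $\mathcal{D}_{\mathrm{yes}}$ having at least $m$ edges, such that no deterministic algorithm issuing $q=\min(\sqrt m,n/\sqrt m)/\polylog(n)$ independent set queries can tell the two apart with advantage bounded away from $0$. Because $\min(\sqrt m,n/\sqrt m)$ equals $\sqrt m$ for $m\le n$ and $n/\sqrt m$ for $m\ge n$, I would handle the two regimes with two families of instances that agree at the common endpoint $m=n$.

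The design principle is that an independent set query on $S$ reveals only the single bit ``$S$ contains an edge'', so it cannot locate an edge, count edges inside $S$, or distinguish one from many. I would therefore hide the edge set of each instance at a random location and at a chosen \emph{scale}, so that a query is informative only when its size lies in a narrow window around $n/\sqrt m$: a much smaller query almost surely sees no edge under either distribution (a birthday-type bound, since the edges fill a $\Theta(1)$ fraction of the pairs only inside a small region), while a much larger query returns ``not independent'' deterministically under \emph{both} distributions (the edges, though few, are arranged so as to meet every large set). The crux is to match $\mathcal{D}_{\mathrm{no}}$ and $\mathcal{D}_{\mathrm{yes}}$ on everything visible in the intermediate window — i.e., to make the probability that a random $\ell$-subset is independent nearly identical under the two distributions, for \emph{every} $\ell$, despite a factor-two gap in the number of edges. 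I expect this to force each of $\mathcal{D}_{\mathrm{no}},\mathcal{D}_{\mathrm{yes}}$ to be a carefully weighted mixture over several planted scales — some looking ``dense'' to a query, some ``sparse'' — whose contributions cancel out, rather than a single clean planted family (a single planted clique or matching at one scale is easily seen to be distinguishable with only $\polylog(n)$ queries, so it cannot be the hard instance). In the regime $m\ge n$ the planted structure additionally has to coexist with $\Theta(n)$ ``background'' clusters that prevent the algorithm from localizing it, and tracking their effect is what degrades the bound from $\sqrt m$ down to $n/\sqrt m$.

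The technical core is then a bound on the total variation distance between the transcripts of answers produced under $\mathcal{D}_{\mathrm{no}}$ and under $\mathcal{D}_{\mathrm{yes}}$ after $q$ adaptive queries. I would establish this by revealing the instance's randomness lazily — maintaining the conditional law of the hidden structure given the answers so far — and showing that each query either (i) falls in the ``small'' regime and, with probability $1-1/\poly(n)$, returns ``independent'' and barely moves the conditional law; or (ii) falls in the ``large'' regime and returns ``not independent'' with no information; or (iii) falls in the window, where its answer carries only $O(1/\sqrt m)$ (respectively $O(\sqrt m/n)$) bits of useful information because the two mixtures were matched there. A chain-rule/hybrid argument over the $q$ queries then gives a transcript distance of $o(1)$, contradicting $q=\min(\sqrt m,n/\sqrt m)/\polylog(n)$.

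The main obstacle is precisely case (iii) in the presence of adaptivity: I must rule out \emph{every} query size and \emph{every} adaptive strategy — including heavily overlapping or nested queries, which couple their answers through the same hidden structure — from amplifying the factor-two separation faster than $\Theta(\sqrt m)$ (or $\Theta(n/\sqrt m)$) queries. This demands a quantitatively tight understanding of the ``resolution'' of an independent set query as a function of $|S|$, and correspondingly a construction whose two mixtures agree not just on edge densities ($2$-tuple collision probabilities) but on $t$-tuple collision probabilities for all $t$ up to about $\polylog(n)$, since an $\ell$-subset query probes all of these at once. Achieving that matching while maintaining an honest factor-two gap in $|E|$, and keeping the planted structure small enough to fit inside $[n]$, is the heart of the difficulty.
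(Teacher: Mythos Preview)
Your high-level framework is sound --- Yao's principle, two hard distributions, and the observation that queries much larger than $n/\sqrt{m}$ are almost surely non-independent under both distributions and hence uninformative. But the construction you are heading toward is both more complicated than needed and never actually specified, and your dismissal of ``single-scale'' instances is the wrong turn.

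The paper's construction (for $m\ge n$; the case $m<n$ reduces to it by embedding an $m$-vertex instance into $[n]$ with isolated vertices) uses a \emph{single} scale and no moment matching. Both $\Dyes$ and $\Dno$ share the \emph{same} random bipartite background: a uniform bipartition $(\bA,\ol{\bA})$ of $[n]$ with each cross pair an edge independently with probability $d/n$, where $d=m/n$. The distribution $\Dno$ additionally plants a random $\bB\subseteq\bA$ of expected size $\Theta(d\log n)$, each of whose vertices is joined to all of $\ol{\bA}$. The shared background already forces any query of size $\gtrsim(n/\sqrt m)\log n$ to be non-independent with high probability in \emph{both} distributions, so large queries are useless --- no collision-probability matching is required for this. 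A query of size $\lesssim(n/\sqrt m)\log n$ hits $\bB$ with probability $O(\sqrt{d/n}\cdot\polylog n)$, so $\widetilde{\Omega}(n/\sqrt m)$ such queries are needed before any query touches $\bB$; until that happens the two graphs can be coupled to be identical on every queried vertex. There is no ``intermediate window'' in which each query leaks $\Theta(1/\sqrt m)$ bits: the information is zero until $\bB$ is hit, and your proposed mixture over scales with matched $t$-tuple statistics is unnecessary. Your claim that a single planted structure is $\polylog(n)$-distinguishable is true only against an empty background; the random bipartite background is precisely what defeats it.

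Adaptivity is handled not by a chain-rule bound on transcript distance but by passing to an \emph{augmented} oracle that, on each query $Q$, reveals for a set of size $t=\Theta((n/\sqrt m)\log n)$ the full $\ol{\bA}/\bA\setminus\bB/\bB$ labeling and all induced edges (the set being $Q$ itself if $|Q|\le t$, or a uniformly random size-$t$ subset of $Q$ otherwise, which is non-independent with high probability and hence certifies the answer). A one-line coupling then shows that reaching any transcript node whose revealed vertices all miss $\bB$ is at least as likely under $\Dyes$ as under $\Dno$, and a union bound over $q$ rounds shows $\bB$ is missed throughout with probability $1-o(1)$. This replaces your lazy-revelation/hybrid argument entirely and avoids having to control conditional laws under adaptive nested queries.
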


Theorems \ref{thm:intro1} and \ref{thm:intro2} essentially
settle the query complexity of edge estimation with independent set queries at $\min(\sqrt{m},n/\sqrt{m})$.
Theorem~\ref{thm:intro1} brings down the overall complexity~of the problem  from $n^{2/3}$ \cite{BHRRS18} to $\sqrt{n}$; the worst case is when the number of edges $m$ is linear in $n$. %, which matches
%  the performance of the algorithm of \cite{GR08} with both degree and neighbor queries
%  (we compare our algorithm with~that~of \cite{GR08} in the overview below).
%On the other hand, 
%Theorem~\ref{thm:intro} shows that the non-local nature of the independent set oracle \emph{does} give significant savings on the complexity of edge estimation when compared to oracle which access degree, neighbor, and edge existence information. 
Theorem~\ref{thm:intro2},~on the other hand, shows that no algorithm with independent set queries
can achieve sub-polynomial query complexity.
This gives an exponential 
separation between the power of the bipartite independent set oracle and the independent set oracle
for the task of edge estimation.

\subsection{Overview of techniques}\label{sec:overview}

We first give a high-level overview of the lower bound because some key ideas from the lower bound will be helpful in understanding the main algorithm later. 
{For convenience we will~slightly~abuse~the notation $\widetilde{O}$ and $\widetilde{\Omega}$
	to hide
	%in addition to $\text{polylog{(f(n))$ factors, 
	factors of $\poly(\log n,1/\eps)$ in the discussion below.
	Outside of Section  \ref{sec:overview} they follow the convention 
	%  \footnote{\red{We hide $\poly(\log n,1/\eps)$ factors in   $\widetilde{O}$ and $\widetilde{\Omega}$  only in Section~\ref{sec:overview}. Outside of this section they follow the convention 
	described in footnote \ref{footnoteone}.}
%\medskip\vspace{0.06cm}
%\noindent\textbf{Lower bound.} %The construction of our lower bound will follow two cases which correspond to the values of $n\in\N$ and $m$. 
\subsubsection{Lower bound}
We describe our construction for the case when $m\ge n$, where we seek a lower bound of 
$\widetilde{\Omega}(n/\sqrt{m})$. The complement case follows from a  reduction to this case. %When $m\ge n$, the lower bound we aim for is $\smash{\widetilde{\Omega}(n/\sqrt{m})}$.

The plan is to follow Yao's principle. We construct two distributions $\Dyes$ and $\Dno$ over
graphs with vertices $[n]$ so that $\bG\sim \Dyes$ has no more than $m/2$ edges with
probability at least $1-o(1)$ and $\bG\sim \Dno$ has at least $m$ edges with probability at least $1-o(1)$.
We then show that no deterministic algorithm with access to an independent set oracle
can distinguish these two distributions.

%Consider the following two distributions $\Dyes$ and $\Dno$.
A graph $\bG\sim\Dyes$ is generated by first sampling a uniformly random partition of vertices into $(\bA, \ol{\bA})$ and then forming the bipartite graph by including each pair $(i, j)$ with $i \in \bA$ and $j \in \ol{\bA}$ as an edge independently with probability $d/n$, where $\smash{d\eqdef m/n}$. 
In expectation 
$\bG \sim \Dyes$ has about $m/4$ edges and thus, has no more than $m/2$ edges with probability $1-o(1)$.
On the other hand, a graph $\bG\sim\Dno$ is generated by sampling a uniformly random partition $(\bA, \ol{\bA})$ of $[n]$, as well as a subset $\bB \subseteq \bA$ by including each vertex of $\bA$ independently with probability $d \log n / n$. Similar  to $\Dyes$, a pair $(i, j)$ where $i \in \bA \setminus \bB$ and $j \in \ol{\bA}$  is included as an edge independently with probability $d/n$. 
The main difference compared to $\Dyes$ is that every pair $(i, j)$, where $i \in \bB$ and $j \in \ol{\bA}$, is included as an edge (so $(\bB, \ol{\bA})$ form a complete bipartite graph). 
Given that $|\ol{\bA}|=\Omega(n)$ and $|\bB|=\Omega(d\log n)$ with high probability,
the number of edges in the graph is $\Omega(dn\log n)=\Omega(m\log n)\ge m$ with  probability
at least $1-o(1)$.
%o the number of edges is $|\bB| \cdot |\ol{\bA}| \approx m \log n$ with high probability. Any algorithm estimating the number of edges of a graph up to factor $o(\log n)$ can distinguish between $\Dyes$ and $\Dno$.

%To give some intuition about why these two distributions are hard to distinguish,
We make the following two observations.
The first is that a graph $\bG\sim \Dno$ can be generated~by~first
drawing a graph $\bG'\sim \Dyes$ with partition $(\bA,\ol{\bA})$, then sampling $\bB\subseteq \bA$
by including each vertex in $\bA$ independently with probability $d\log n/n$, and finally adding
all pairs between $\bB$ and $\ol{\bA}$ as edges~in $\bG$.
This suggests that, in order for an algorithm to distinguish $\Dno$ from $\Dyes$,
a (seemingly quite weak) necessary condition is for one of its queries to overlap with $\bB$
when it runs on $\bG\sim \Dno$.

For the second observation,  
we consider a query set $S\subseteq [n]$ of size larger than $(n/\sqrt{m})\cdot \log n$.
In both $\Dyes$ and $\Dno$, we have $|S\cap\bA|,\hspace{0.03cm}|S\cap \ol{\bA}|\ge\Omega((n/\sqrt{m})\cdot \log n)$ with high probability and~when this happens,
$S$ is not an independent set with high probability, given that there are~at least
$$\Omega\left((n^2/m)\cdot\log^2 n\right)=\Omega\left((n/d)\cdot \log^2 n\right)$$ pairs between
$S\cap \bA$ and $S\cap \ol{\bA}$ and each is included in the graph with probability $d/n$.
Since $S$ is not an independent set in both $\Dyes$ and $\Dno$ with high probability,
such a query conveys very little information in distinguishing the two distributions.
Thus, a reasonable algorithm should only make queries
of size smaller than $(n/\sqrt{m})\cdot \log n$. This intuition, that algorithms should not make queries of~size larger than $n/\sqrt{m}$, will be helpful in our discussion of the algorithm later, and we will frequently refer to the quantity $n / \sqrt{m}$ as the \emph{critical threshold}.
However, if all the queries an algorithm makes are smaller than $(n/\sqrt{m}) \cdot \log n$, then 
$\smash{\widetilde{\Omega}(n/\sqrt{m})}$ queries are necessary for at least one of them to overlap with $\bB$;
otherwise, given that $|\bB|=O(d\log n)$,
the probability that one of the queries overlaps with $\bB$ is negligible.

To formalize the above intuition and simplify the presentation of our lower bound proof,
we introduce the notion of an \emph{augmented} (independent set) oracle in Section \ref{sec:proof}. 
We first show that any algorithm with access to the standard independent set oracle
can be simulated using an augmented oracle with the same query complexity. Then, we prove an $\smash{\widetilde{\Omega}(n/\sqrt{m})}$ lower bound 
for algorithms that distinguish $\Dyes$ and $\Dno$ with access to an augmented oracle.
%\medskip\vspace{0.06cm}
%\noindent\textbf{Upper bound.} %\enote{I am finding this section a bit ``chatty'' ? I feel it could be compressed if we wanted to -- I guess this is a personal style thing, so if you guys are happy with it, I?m okay -- just let me know if you would like me to make a pass and compress things a bit?
%\anote{Maybe we can chat about it once everything else is done?}}
\subsubsection{Upper Bound}
Our goal is to obtain a $(1+\eps)$-approximation algorithm for edge estimation 
with $ \widetilde{O}(n/\sqrt{m})$ independent set queries, where $m$ denotes
the number of edges~of~the~input graph (Theorem \ref{thm:main}). 
Theorem \ref{thm:intro1} follows by combining it with the 
algorithm of \cite{BHRRS18} by running both algorithms in parallel and 
outputting the result of whichever finishes first.

%, whenever $m=o(n)$ the running time becomes $\omega(\sqrt{m})$, which means that for this case, invoking the algorithm described in \cite{BHRRS18} will have a better query  complexity than the algorithm described above. Therefore, we could . This will imply an algorithm such that with probability at least $1-o(1)$ makes $\widetilde{O}(\min(\sqrt{m},n/\sqrt{m}))$ queries to the independence set oracle and outputs a $(1\pm\eps)$ approximation to the number of edges of $G$.
In the sketch of the algorithm below, we assume that a rough estimate $\ol{m}$ of the number 
of edges $m$ is given, satisfying $m=\Theta(\ol{m})$. The goal is to refine it to
obtain a $(1+\eps)$-approximation $\widetilde{m}$ of $m$.
%This is not too far away from obtaining an algorithm that works without such an estimate.

%Without loss of generality, one may assume the algorithm receives a rough estimate $\ol{m}$ of the number of edges, and the task is to refine this estimate. 

\paragraph{An Initial Plan:}

At a high level, we partition the vertex set $[n]$ into $O((\log n)/\eps)$ many buckets
according to their degrees: a vertex $u\in [n]$ belongs to the $i$th bucket $B_i$ if $\deg(u)$ is between $(1+\eps)^i$ and $(1+\eps)^{i+1}$. We refer to $(1+\eps)^i$ as the degree of bucket $B_i$ for convenience.
Our initial plan is to develop efficient algorithms for the following two tasks:
% proceeds by considering $O(\log n/\eps)$ buckets which partition the vertices according to their degree, each bucket containing vertices with a multiplicative range of degrees, and estimating the number of vertices in each bucket. There are two tasks to consider for this plan: 
\begin{enumerate}
	%\item[]\hspace{-1.05cm} \textbf{Initial Plan:}\vspace{-0.2cm}
	\item[]\hspace{-0.7cm} \textbf{Task 1}: Develop a subroutine that, given a vertex $u$
	and an index $i$, % bucket $B_i$, 
	checks if $u$ belongs to $B_i$.\footnote{The goal of the subroutine as described
		above may not sound reasonable. If $\deg(u)$ lies very close to the boundary of two buckets $B_i$ and $B_{i+1}$, determining which of the two buckets $u$ lies in may be expensive with independent set queries. 
		This is indeed one source of errors we need to handle. 
		%More specifically, we should expect the subroutine to return either $V_{i-1},V_i$ or $V_{i+1}$ 
		%  if $V_i$ is the true bucket to which $u$ belongs.
		We focus on high-level ideas behind the algorithm and skip details such as errors
		most of time, and discuss briefly how we analyze the algorithm in the presence of errors
		at the end of the sketch.}
	%we skip the details here but come back to discuss more about how we handle such issues with errors at 
	%the end of the algorithm sketch.} 
	\vspace{-0.25cm}
	
	%Sampling vertices so that the probability of a vertex lying in a bucket is predictively proportional to the size of the bucket, and
	\item[]\hspace{-0.7cm} \textbf{Task 2}: Use  the first subroutine 
	to estimate the size of each bucket $B_i $. %$ Certifying that a particular vertex $u$ belongs to a particular bucket. 
\end{enumerate}
We point out that this initial plan looks very similar to the framework of the algorithm of \cite{GR08},
where ideally one would like to estimate the size of each $B_i$ by drawing
enough random samples and running the subroutine in Task 1 on each sample to obtain an 
estimate of $|B_i|$.
The similarity, however, stops here as we start discussing more details about how to implement the plan
with an independent set oracle.

We consider Task 1 first (which is trivial with a degree oracle). % and we have good news and bad news. 
Note that when $d\ge \sqrt{\ol{m}}$, 
checking whether a vertex $u$ has $\deg(u)\ge (1+\eps)d$ or $\deg(u)\le d$ 
requires %belongs to 
%a bucket $B_i$ or not when the degree of $B_i$ is larger than $\sqrt{\ol{m}}$ only costs 
$\widetilde{O}(1)$ independent set queries.
As a result, it requires $\widetilde{O}(1)$ to tell if $u\in B_i$ 
when the degree of $B_i$ is at least $\sqrt{\ol{m}}$.
The bad news is that the same task  becomes significantly more challenging as $d$ goes down from $\sqrt{\ol{m}}$.
This challenge leads to a major revision of our initial plan.   
%  . when $d\le \sqrt{n}$.
%  when $(1+\eps)^i$ is larger than 
%  $\sqrt{\ol{m}}$ only costs $\widetilde{O}(1)$ independent set queries.

To gain some intuition we consider the task of distinguishing  
$\deg(u)\ge (1+\eps)d$ and $\deg(u)\le d$ when $d\gg \sqrt{m}$.\footnote{For convenience we consider the case of
	$d\gg\sqrt{m}$ in the sketch but the same idea works when $d\ge \sqrt{m}$.}
%Certifying whether a vertex $u$ has degree approximately $d$, can be approached similarly, where we will solve it by distinguishing the case that a vertex $u$ has degree significantly larger than $d$, or  smaller than $d$. 
Suppose we sample a set $\bT$ from $[n]\setminus \{u\}$ by including each vertex with probability $1/d$ and then make two independent set queries on $\bT$ and $\bT\cup \{u\}$. 
Let $\cal{E}$ denote the event that~$\bT$ is an independent set but $\bT\cup \{u\}$ is not
(so $\bT$ contains at least one neighbor of $u$).
Then we claim that there is a significant gap 
in the probability of $\cal{E}$ when $\deg(u)\ge (1+\eps)d$ versus $\deg(u)\le d$.
%This uses the following two observations: (1) There is a significant gap in the probability
%of $\bT$ containing a neighbor of $u$, and this probability in the case of $\deg(u)\ge (1+\eps)d$
%is $\Omega(1)$.
%(2) The probability that $\bT$ is not an independent set is $o(1)$ (as every edge survives
%in $\bT$ with probability $1/d^2$ and we assumed $d\gg \sqrt{\ol{m}}$ and $m=\Theta(\ol{m})$) and thus,
%has negligible impact on the probability of $\cal{E}$.
This gap in the probability of $\calE$ is large enough so that one can repeat the experiment $\widetilde{O}(1)$ times (each time making two independent set queries)
to distinguish the two cases with high probability.
%This then leads to a $\widetilde{O}(1)$-query subroutine for checking whether 
%  $u$ belongs to $B_i$ or not when $(1+\eps)^i\gg \sqrt{m}$.

%  that $\bT$ contains a neighbor of $u$ when $u$ has degree significantly higher than $d$ or  lower than $d$. The algorithm could query $\bT$ and $\bT \cup \{u\}$; if $\bT$ was an independent set, but $\bT \cup \{u \}$ was not an independent set, then $\bT$ contains a neighbor of $u$; if $\bT$ and $\bT\cup \{u\}$ are both independent sets, then $\bT$ does not contain a neighbor of $u$. If the probability that the random set $\bT$ was not an independent set was negligible (relative to the gap between $\bT$ containing a neighbor of $u$ and not containing a neighbor of $u$), then estimating the probability that $\bT$ is an independent set and $\bT \cup \{u\}$ is not independent would distinguish between the case $u$ has degree significantly greater than $d$ or  less than $d$. When $d \gsim \sqrt{\ol{m}}$, the size of $\bT$ is less than the critical size, roughly $\wt{O}(n/\sqrt{m})$, so such a strategy will work.

Now we turn to the case when %bad news.
%Consider the same task of distinguishing $\deg(u)\ge (1+\eps)d$ and 
$d\ll \sqrt{\ol{m}}$.
%To perform the same experiment and hope that $\bT$ is most likely an independent set,
In this case, the algorithm is limited to query sets $\bT$ of size much smaller than $n/d$.
% Thus, the probability that a random set $\bT$ includes a neighbor of $u$ will be small. This implies that in order to differentiate between the case $u$ has degree larger than $d(1+\eps)$ or smaller than $d$, the algorithm must devote more queries.
%In this case one may not want to include each vertex of $[n]\setminus \{u\}$ in $\bT$ with probability $1/d$
%  because of the following two reasons: 
%(1) Since $d\ll \sqrt{\ol{m}}$,~we can no longer use the same analysis as above to
%  conclude that  $\bT$ is an independent set with probability $1-o(1)$;
%(2) Recall the critical threshold from the lower bound, when $d\ll \sqrt{\ol{m}}$ and we sample
%  each vertex with probability $1/d$, $\bT$ is of size $n/d$ and is much bigger than $n/\sqrt{\ol{m}}$ and can be
%  useless for graphs used in the lower bound construction.
Therefore,
we limit $\bT$ to include each vertex with probability $1/\sqrt{\ol{m}}$ instead of $1/d$.
Two issues arise. 
The first (minor) issue is that, given that the size of $\bT$ is roughly $n/\sqrt{\ol{m}}$,
even to hit a neighbor of $u$ (with degree roughly $d$) one needs to draw 
$\bT$ at least $ \sqrt{\ol{m}}/d$ many times.
This suggests that $ \sqrt{\ol{m}}/d$ queries are needed for Task~1 when the degree
$d$ of the bucket we are interested is less than $\sqrt{\ol{m}}$.  

%   of size much smaller than $n/d$. Thus, the probability that a random set $\bT$ includes a neighbor of $u$ will be small. In order to differentiate between the case $u$ has degree significantly larger than $d$ or  smaller than $d$, the algorithm must devote more queries. Note, however, that when $d \lsim \sqrt{\ol{m}}$, Task 1 becomes easier.\footnote{This is because when $d$ is small, then number of vertices of degree $d$ must be large if this bucket is to affect the total number of edges in the graph, and when this happens, we can easily find a vertex with a random sample.}

There is, however, a more serious issue that is subtle but leads to a major revision of the initial plan.
Consider the scenario where $u$ has $(1+\eps)d$ neighbors and every neighbor has degree $\gg \sqrt{\ol{m}}$.
If we sample $\bT$ by including each vertex with probability $1/\sqrt{\ol{m}}$,
it is very unlikely that $\bT$ contains a neighbor of $u$ but $\bT$ is at the same time independent
(since when conditioning on $\bT$ containing a neighbor $v$ of $u$,  most likely $\bT$ also
contains a neighbor of $v$ given the large degree of $v$).
%. Suppose $u$ is a vertex of degree $d$ (and $d$ is much smaller than $\sqrt{\ol{m}}$), but that the neighbors of $u$ have degree which is large (think of the case when the neighbors have degree larger than $\sqrt{\ol{m}}$). Then, it may be the case that whenever $\bT$ contains a neighbor $v$ of $u$, the set $\bT$ is also likely to contain a neighbor of $v$. As a result, the event that $\bT$ is an independent set, but $\bT \cup \{u\}$ is not an independent set is not likely to happen. In particular, when the degree of $u$ is smaller than $\sqrt{\ol{m}}$, we can only estimate the number of neighbors with degree smaller than $\sqrt{\ol{m}}$. 
Because~of the second issue, we change the goal of the subroutine in Task 1 from
finding the right bucket of $u$ according to the degree of $u$ to finding
the right bucket according to the \emph{number of neighbors of $u$ with degree at most} $\sqrt{\ol{m}}$,
when $\deg(u)<\sqrt{\ol{m}}$.
For vertices with degree at least $\sqrt{\ol{m}}$, we still would like to partition them into buckets
according to their degrees. %to estimating
%  the number of neighbors of $u$ with degree lower than $\sqrt{m}$,
%  instead of estimating $\deg(u)$.

\paragraph{A Revised Plan:} By the above, we arrived at the following revised plan:
\begin{flushleft}\begin{enumerate}
		%\item[]\hspace{-0.7cm} \textbf{Revised Plan:}\vspace{-0.15cm}
		\item[]\hspace{-0.7cm} \textbf{Task 0}: Develop a subroutine that, given a vertex $u$, decides\footnote{Again we need to handle errors when $\deg(u)$ is close to $\sqrt{\ol{m}}$.} if $\deg(u)\ge \sqrt{\ol{m}}$ (which we refer \\to as
		high-degree vertices and denote the set by $H$) or $\deg(u)<\sqrt{m}$ (which we refer to as low-degree vertices
		and denote the set by $L$).
		High-degree vertices are further partitioned into buckets $H_i$ according to their degrees.
		Low-degree vertices, on the other hand, are partitioned into buckets $L_i$ according to their 
		degrees to low-degree vertices, denoted by $\deg(u,L)$ for a vertex $u$.\vspace{-0.2cm}
		\item[]\hspace{-0.7cm} \textbf{Task 1}: 
		Develop a subroutine that, given a vertex $u\in H$ (or $u\in L$) and an index $i$,
		decides\\ if $u$ belongs to the bucket $H_i$ (or $L_i$).
		\vspace{-0.2cm}

		%Sampling vertices so that the probability of a vertex lying in a bucket is predictively proportional to the size of the bucket, and
		\item[]\hspace{-0.7cm} \textbf{Task 2}: Use  the two subroutines 
		to obtain $(1+\eps)$-estimations of the size of each $L_i$ and $H_i$. %$ Certifying that a particular vertex $u$ belongs to a particular bucket. 
\end{enumerate}\end{flushleft}

Looking ahead, with $(1+\eps)$-approximations $\ell_i$ and $h_i$ for $|L_i|$ and $|H_i|$,
one can compute \begin{eqnarray*}&\sum_i \ell_i\cdot (1+\eps)^i +\sum_i h_i\cdot (1+\eps)^i&\end{eqnarray*}
as roughly a $2$-approximation of the number of edges $m$.
The reason that we only get $2$-approximation follows by the fact that in the sum,
edges between  vertices in $L$ and edges between vertices in $H$ are counted twice 
but edges between $L$ and $H$ are only counted once.
We will discuss more about how to further revise the plan to obtain a $(1+\eps)$-approximation; for now let us consider Task 2.

Note that Task 2 for buckets $L_i$ is easy.
Consider a low-degree bucket $L_i$ with $d=(1+\eps)^i\le \sqrt{\ol{m}}$.
Unless $|L_i|=\Omega(\ol{m}/d)$, $L_i$ has negligible impact on the final estimate.
When $|L_i|=\Omega(\ol{m}/d)$, it takes $\widetilde{O}(nd/\ol{m})$ samples to 
get a sufficient number~of vertices in $L_i$.
We can then get a good estimation of $|L_i|$ by 
running subroutines for Task 0 and 1 on these vertices.
%we get a good estimation about $|L_i|$.
We pay $\widetilde{O}(\sqrt{\ol{m}}/d)$ queries for each vertex so the overall query complexity
is 
$$
\widetilde{O}(nd/\ol{m})\cdot \widetilde{O}(\sqrt{\ol{m}}/d)=\widetilde{O}(n/\sqrt{\ol{m}})
$$
as desired.
In contrast,  uniformly sampling vertices and checking individually if each of them
lies~in $H_i$ is too inefficient for high-degree buckets,
given that $nd/\ol{m}\gg n/\sqrt{\ol{m}}$ when $d\gg \sqrt{\ol{m}}$. 
%%Task 2 when the buckets contain high degree vertices. 
%In particular, for a bucket of degree roughly $d\ge \sqrt{m}$, 
%  one only need $\Theta( m/d)$ vertices to have an impact on the final estimation.
%As a result, $\Omega(nd/m)$  
%uniform vertices are necessary to sample even one in the bucket $H_i$, and 
%  this is too expensive for our goal of $n/\sqrt{m}$ when $d \gg \sqrt{m}$.
% (uniform sampling will work for buckets with $d \lsim \sqrt{m}$). 

Estimating the size of each high-degree bucket $H_i$ is where we fully take advantage of
the \emph{non-locality} of independent set queries.
To explain the intuition, let us consider the task of distinguishing $|H_i|\ge (1+\eps)r$ versus $|H_i|\le r$ for 
some parameter $r=\Theta(\ol{m}/d)$ where $d=(1+\eps)^i\gg \sqrt{\ol{m}}$ denotes the degree of the bucket $H_i$.
To this end, it suffices to~have a procedure that can take
a random set $\bS\subseteq [n]$ of size $n/(\sqrt{\ol{m}} \log n)$ and 
answers the question ``does there exist $u\in \bS$ that belongs to $H_i$?'' with $\widetilde{O}(1)$
queries.
With such a procedure it suffices to draw $\bS$ and run the procedure on $\bS$ for 
$$
\widetilde{O}\left(\frac{n}{( {n}/({\sqrt{\ol{m} }\log n))}\cdot {(\ol{m}}/{d)}}\right)=\widetilde{O}
\left(\frac{d}{\sqrt{\ol{m}}}\right)\le \widetilde{O}\left(\frac{n}{\sqrt{\ol{m}}}\right)
$$
many times in order to obtain a good estimation of $|H_i|$.

As discussed earlier, 
the revised plan ultimately leads to a $(2+\eps)$-approximation
algorithm with $\smash{\widetilde{O}(n/\sqrt{m})}$ independent set queries.
% Intuitively, this limitation of a $(2+\eps)$-approximation occurs because for each vertex $u$ with degree greater than $\sqrt{\ol{m}}$, we do not know whether the edges of $u$ are connected to  vertices of degree greater than $\sqrt{\ol{m}}$ or vertices of degree less than $\sqrt{\ol{m}}$. Considering the sum over all buckets of the estimated number of edges in a bucket, an edge $(u, v)$ with degree of $v$ larger than $\sqrt{\ol{m}}$ is counted twice, while an edge $(u, w)$ with degree of $w$ smaller than $\sqrt{\ol{m}}$ is counted once.\footnote{In fact, the algorithm has not yet used any information aside from approximate degree, so the lower bound from \cite{F06} shows this cannot achieve better than $(2+\eps)$-approximation without knowing $\Omega(n)$ degrees.} 
We achieve $(1+\eps)$-approximation by revising the plan~further.
First we  divide high-degree vertices $u$ into buckets $H_{i,j}$ where $i$ 
is related to the degree of $u$ (as usual), but the second index $j$ is related 
to the fraction of neighbors of $u$ in $L$; see Definition \ref{def:degree-part} for details. 
Task 1 is updated to develop a subroutine that can decide whether $u$ belongs to $H_{i,j}$~or not.  
%can not only return the 
%  the bucket $H_i$ that a high-degree vertex $u$ belongs to, but also
%  estimate the \emph{fraction} of its neighbors  in $L$. 
Task 2 is updated to estimate the size of each $H_{i,j}$
(with similar ideas in the approximation of $|H_i|$ sketched above) and $L_i$.
Together they lead to a $(1+\eps)$-approximation of the number of edges between low-degree and high-degree vertices,
and ultimately a $(1+\eps)$-approximation of $m$.
% these edges in the estimation explicitly would lead to a $(1+\eps)$-approximation. 

Now extra care must be taken to handle errors 
when executing the above plan.
As~alerted~in~two footnotes, one cannot hope for a subroutine
that returns the true bucket of a vertex $u$.
%In particular, when a vertex $u$ has degree around $\sqrt{m}$, our first subroutine
%  in Task 1 can go either way. 
%The situation becomes even worse since we partition low-degree vertices into buckets $L_i$
% according to their degrees to low-degree vertices.
To simplify the presentation of the algorithm and its analysis, we 
introduce the notion of $(\ol{m},\eps)$-\emph{degree~oracles}
(see Definition \ref{def:degree-oracle}).
An $(\ol{m},\eps)$-degree oracle can answer questions listed in Tasks 0 and 1 
consistently and accurately up to certain errors (as captured by the notion of 
an $(\ol{m},\eps)$-\emph{degree partition} in Definition \ref{def:degree-part} underlying each  $(\ol{m},\eps)$-{degree oracle}).
We first present an algorithm  in~Section~\ref{sec:estimate}~that 
has query access to a $(\ol{m},\eps)$-{degree oracle}.
We finish the proof of Theorem \ref{thm:main} by giving an efficient implementation of a $(\ol{m},\eps)$-{degree oracle} using an independent set oracle in Section \ref{sec:check-degrees}.

\section{Preliminaries}

Given a positive integer $n$, we write $[n]$ to denote $\{1,\ldots,n\}$. Similarly, for two non-negative integers $i\le j$, we write $[i:j]$ to denote $\{i,\ldots,j\}$.
All graphs considered in this paper are undirected and simple 
(meaning that there are no parallel edges or loops),
and have $[n]$ as its vertex set. 

\begin{definition}[Independent set oracle]
	Given an undirected graph $G = ([n], E)$, its \emph{independent set oracle} is a map $\IS \colon 2^{[n]} \to \{0,1\}$ which satisfies that for any set of vertices $U \subseteq [n]$, $\IS(U) = 1$ if~and only if $U$ is an independent set of $G$ \emph{(}i.e., $(u, v) \notin E$ for all $u, v \in U$\emph{)}.
\end{definition}

%Given a parameter $\eps\in (0,1)$, letting $\ell=O(\log n/\eps)$, 
%  we use $\calI_{\eps,n}=\{I_0,I_1,\ldots,I_\ell\}$ to denote the following partition of integers $[0:n]$:
%$I_0=\{0\}$ and $I_i$ contains all integers in $[(1+\eps)^{i-1},(1+\eps)^i)$.
%Note that some sets at the beginning of $\calI_{\eps,n}$ might be empty but we 
%  keep them for notational convenience. 
%We also skip the subscripts $\eps$ and $n$ of $\calI_{\eps,n}$ when they are clear from the context.

\newcommand{\BinarySearch}{\texttt{Binary-Search}}
%\begin{definition}
%Let $G=([n],E)$ be an undirected graph. 
We use $\deg_G(v)$ to denote the degree of a vertex $v\in [n]$.
Given $v\in [n]$ and $U\subseteq [n]$, we let
$$\Gamma_G(v,U)=\big\{u\in U: (u,v)\in E\big\}\quad\text{and}\quad \deg_G(v,U)\eqdef\big|\Gamma_G(v,U)\big|.$$  
Note that $v$ can lie in $U$, but since we only consider simple graphs, $\Gamma_G(v,U)=\Gamma_G(v,U\setminus \{v\})$.
%\end{definition}
For the sake of brevity, we write $\Gamma_G(v)=\Gamma_G(v,[n])$. We usually skip the subscript in $\IS,\Gamma_G$ and $\deg_G$ when the underlying graph $G$ is clear from the context.

The following simple lemma will be used multiple times.
\begin{lemma}\label{lem:indset}
	Let $G=([n],E)$ be an undirected graph, $S\subseteq [n]$ be a 
	set of vertices, and $r \in \N$ be {an upper bound on the number of edges} %\enote{writing this lemma with an upper bound on the number of edges to avoid the sentence about applying this lemma given an upper bound on the number of edges}
	in the subgraph induced by $S$.
	%For any $\alpha > 0$, any set $S \subset [n]$, and any $p \leq \sqrt{\frac{\alpha}{\ol{m} \log n}}$, let 
	Let $\bT\subseteq S$ be a random subset given by independently including each vertex of $S$ with probability $p$. Then,
	\[ \Prx_{\bT \subseteq S}\big[ \bT \text{ is an independent set of $G$ } \big] \geq 1 - rp^2. \]
\end{lemma}
\begin{proof}
	The expected number of edges where both vertices lie in  $\bT$ is at most $rp^2$. By Markov's inequality the probability that $\bT$ contains at least one edge is at most $rp^2$.
\end{proof}

%\medskip
%We usually apply this lemma when we have an upper bound $\overline{r}$ for $r$
% and set $p$ to be $o(1/\sqrt{\overline{r}})$ so that $\bT$ is an independent set with probability at least $1-o(1)$.

\subsection{Binary search using the independent set oracle}\label{sec:binary}

\begin{figure}[t!]
	\begin{framed}
		\noindent Subroutine $\BinarySearch\hspace{0.04cm}(n,G,T,\delta)$
		\begin{flushleft}
			\noindent {\bf Input:} A positive integer $n$, access to 
			the independent set oracle of a graph $G = ([n], E)$, a set $T\subseteq [n]$ with a promise that $T$ is not an independent set of $G$, {and an error parameter $\delta > 0$}.\\
			{\bf Output:} An edge $(u, v) \in E$ with $u,v\in T$, or ``fail.'' \\
			%	\medskip
			\begin{enumerate}
				% 
				%$T = \{ u ,v \}$ where $(u, v) \in E$ since $T$ is not an independent set, so output $(u, v)$.
				
				\item {Let $\bA \leftarrow T$}.
				
				\item Repeat the following for $ t=O(\log n+\log (1/\delta))$ iterations:
				\begin{enumerate}
					\item If $|\bA|=2$, output the two vertices in $\bA$ %(since $T$ is not an independent set).\vspace{0.05cm}
					\item Randomly partition $\bA$ into $\bA_1 \cup \bA_2$ where $|\bA_1|$ and $|\bA_2|$ differ by at most 1. \\Query $\IS(\bA_1)$ and $\IS(\bA_2)$ to see if one of them is not an independent set.
					\\If $\bA_b$ is not an independent set for some $b\in \{1,2\}$, 
					set $\bA \leftarrow \bA_b$.
				\end{enumerate}
				\item Output ``fail''. 
			\end{enumerate}

		\end{flushleft}\vskip -0.14in
	\end{framed}\vspace{-0.25cm}
	\caption{Description of the $\BinarySearch$ subroutine.} \label{fig:binary-search}
\end{figure}
%\enote{Modified algorithm a bit. This will make the proof a bit more explicit, since there is the fixed set $T$ which comes from the lemma statement, and then the random variable $\bA$ which changes throughout the execution.}

We present a subroutine based on binary search for finding an edge
using independent set queries:

\begin{lemma}\label{lem:binary-search}
	There is a randomized algorithm, $\emph{\BinarySearch}\hspace{0.04cm}(n,G,T,\delta),$ that takes as input
	(1) a posi\-tive integer $n$,
	(2) access to the independent set oracle $\IS$ of an undirected graph $G=([n],E)$,
	(3) a~set $T\subseteq [n]$ of vertices such that $T$ is not an independent set of $G$,
	and (4) an error parameter $\delta>0$.
	$\emph{\BinarySearch}$ makes 
	%Let $G=([n],E)$ be an undirected graph and let 
	%  $T \subseteq [n]$ such that $T$ is not an independent set of $G$. 
	%  then $\BinarySearch(T)$ makes 
	$O(\log n+\log (1/\delta))$ queries to $\IS$ and outputs $u,v \in T$ with $(u, v) \in E$ 
	with probability at least $1 - \delta$.%\xnote{Would $1/n^2$ be good enough? I 
\end{lemma}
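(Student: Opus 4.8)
The plan is to analyze the subroutine $\BinarySearch$ described in Figure~\ref{fig:binary-search} directly. The query complexity is immediate from the code: each of the $t = O(\log n + \log(1/\delta))$ iterations makes exactly two queries to $\IS$ (querying $\IS(\bA_1)$ and $\IS(\bA_2)$), plus the base case checks, so the total is $O(\log n + \log(1/\delta))$ queries as claimed. The substance is in the correctness guarantee.

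First I would establish the key invariant: at the start of every iteration, $\bA$ is not an independent set of $G$. This holds initially by the promise on $T$, and is preserved because we only update $\bA \leftarrow \bA_b$ when $\bA_b$ is confirmed (via an $\IS$ query) not to be an independent set. Consequently, when $|\bA| = 2$ and we output the two vertices of $\bA$, they are necessarily the endpoints of an edge, so any non-``fail'' output is a correct edge with both endpoints in $T$. This means there is no correctness issue with false positives; the only failure mode is outputting ``fail'' after $t$ iterations without ever reaching $|\bA| = 2$.

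Next I would bound the probability of the ``fail'' output. The crucial observation is that since $\bA$ contains an edge, there is at least one edge $(u,v)$ with $u, v \in \bA$; under a random balanced partition of $\bA$ into $\bA_1 \cup \bA_2$, with probability $1/2$ the two endpoints $u, v$ land on the same side (roughly—more precisely at least $1/2 - o(1)$, or one can fix a specific edge and compute exactly), so that side $\bA_b$ is not an independent set and we recurse into it, strictly decreasing $|\bA|$ by roughly a factor of $2$. If instead $u$ and $v$ are separated, it is still possible that one of $\bA_1, \bA_2$ contains some other edge, in which case we still make progress; the point is that with probability at least $1/2$ the size of $\bA$ at least halves. (One must be slightly careful: it could happen that \emph{both} $\bA_1$ and $\bA_2$ fail to be independent sets, in which case the code picks one—say the first—and we still halve the size; and it could happen that $u, v$ are split and no other edge is isolated on one side, in which case the iteration is ``wasted'' but $\bA$ is unchanged, so we lose nothing except a round.) Hence the number of ``successful'' halvings needed to drive $|T| \le n$ down to $2$ is at most $\log_2 n$, and since each iteration succeeds independently with probability at least $1/2$, a standard Chernoff bound shows that after $t = O(\log n + \log(1/\delta))$ iterations we accumulate at least $\log_2 n$ successes with probability at least $1 - \delta$. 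When that happens, we reach $|\bA| = 2$ and output a valid edge rather than ``fail''.

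The main obstacle—really the only delicate point—is making the ``progress'' argument airtight: I want to argue that in each iteration, conditioned on the current $\bA$ containing at least one edge, with probability at least $1/2$ we pass to a set $\bA_b$ that is both (i) not an independent set and (ii) of size at most $\lceil |\bA|/2 \rceil$. The clean way is to fix any single edge $(u,v) \subseteq \bA$ and note that in a uniformly random near-balanced bipartition, $\Pr[u, v \text{ on the same side}] \ge \lfloor |\bA|/2\rfloor/(|\bA|-1) - $ (a lower-order correction) which is at least $1/2 - o(1)$; to get a clean constant one can instead observe $\Pr[u,v \text{ separated}] = \frac{2 \lceil |\bA|/2\rceil \lfloor |\bA|/2\rfloor}{|\bA|(|\bA|-1)} \le \frac{|\bA|}{2(|\bA|-1)}$, which is at most $3/4$ for $|\bA| \ge 3$, giving success probability at least $1/4$ per iteration; absorbing the constant into the $O(\cdot)$ in $t$ costs nothing. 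With the per-iteration success probability bounded below by an absolute constant, the Chernoff/union-bound wrap-up is routine, and the lemma follows.
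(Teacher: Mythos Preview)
Your proposal is correct and follows essentially the same approach as the paper: establish the invariant that $\bA$ always contains an edge, argue that each iteration halves $|\bA|$ with constant probability (because a fixed edge lands entirely on one side of the random balanced partition with constant probability), and finish with a Chernoff bound over the $t=O(\log n+\log(1/\delta))$ iterations. You are in fact more careful than the paper on one point: the paper asserts the per-iteration success probability is at least $1/2$, whereas for a uniformly random balanced bipartition of a set of size $k$ the probability that two fixed vertices land together is $(\lceil k/2\rceil-1)/(k-1)$, which can be as small as $1/3$ (e.g.\ when $k=4$); your observation that any absolute constant suffices is the right fix and costs nothing in the $O(\cdot)$.
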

\begin{proof} We consider an execution of $\BinarySearch(n, G, T, \delta)$ in Figure~\ref{fig:binary-search}. Note that we maintain the invariant that $\bA$ is never an independent set. This is because $T$ is not an independent set in step 1, and whenever $\bA$ is updated in step 2(b), it is never assigned an independent set. It suffices to show that after $t$ iterations, $|\bA| = 2$ with high probability.
	
	An iteration of step 2 makes progress if the size of the set $\bA$ decreases by {at least constant factor}.
	%\enote{Before, this was \emph{half}, but this is not really true when $\bA$ is odd and we pick the larger side}.
	If, in any iteration of step 2(b), the partition of $\bA$ into $\bA_1$ and $\bA_2$ has at least one edge fully contained in $\bA_1$ or $\bA_2$, then that iteration will make progress. Since there is always at least one edge in $\bA$, this occurs independently in each iteration with probability at least $1/2$. Since it only takes $O(\log n)$ rounds for the size of $\bA$ to drop to $2$, it follows from Chernoff bound that the subroutine fails with probability at most $\delta$.
\end{proof}

\begin{remark} We will {always} invoke \emph{\BinarySearch} with the parameter $\delta=1/\poly(n)$.\footnote{{For example, setting $\delta = 1/n^{10}$ will suffice for our purposes.}} The subroutine will always make $O(\log n)$ 
	queries, and~will fail with probability at most $1/\poly(n)$.\end{remark}

\section{Upper bound}\label{sec:upperbound}
\newcommand{\mbar}{\overline{m}}
\newcommand{\CheckDegree}{\texttt{Check-Degree}}
\newcommand{\CheckHiDegree}{\texttt{Check-High-Degree}}
\newcommand{\CheckLoDegree}{\texttt{Check-Low-Degree}}
\newcommand{\CheckH}{\texttt{CheckH}}
\newcommand{\CheckL}{\texttt{CheckL}}
\newcommand{\HiDegEvent}{\texttt{High-Degree-Event}}
\newcommand{\HiDegBucket}{\texttt{High-Degree-Bucket}}
\newcommand{\LowDegEvent}{\texttt{Low-Degree-Event}}
\newcommand{\LowDegBucket}{\texttt{Low-Degree-Bucket}}
\newcommand{\TwoApprox}{\texttt{Estimate-With-Advice}}
\newcommand{\TwoApproxNoAdvice}{\texttt{Estimate-Edges}}

%\begin{theorem}[2-approximation]\label{thm:2appr}
%For any $\eps > 0$ and any graph $G = ([n], E)$, there exists an algorithm $\TwoApproxNoAdvice$ making $\tilde{O}(n / \sqrt{m} \cdot \poly(1/\eps))$ independent set queries which outputs a number $\hat{m} \in \N$ satisfying
%\[ |E|(1-\eps) \leq \hat{m}\leq 2(1 + \eps) |E|, \]
%with probability at least $1-\frac{1}{\poly(n)}$.
%\end{theorem}

In this section we prove the following upper bound:

\begin{theorem} \label{thm:main}
	There is a randomized algorithm $\emph{\TwoApproxNoAdvice}\hspace{0.04cm}(\eps,n,G)$ that takes as input
	(1)~an accuracy parameter $\eps\in (0,1)$, (2) a positive integer $n$, and (3) access to
	the independent set oracle of a  graph $G=([n],E)$ with $m=|E|\ge 1$. 
	With probability at least $1-o(1)$, $\emph{\TwoApproxNoAdvice}$~makes
	$(n/\sqrt{m})\cdot \poly(\log n, 1/\eps)$ queries and outputs  
	a number $\widetilde{m}$ satisfying
	%\begin{equation}\label{output2}
	$(1-\eps) m\le \widetilde{m}\le (1+\eps) m.$
\end{theorem}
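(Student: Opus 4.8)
The plan is to follow the roadmap sketched in Section~\ref{sec:overview}, reducing the theorem to two pieces: an estimation algorithm (Section~\ref{sec:estimate}) that works given query access to an $(\ol{m},\eps)$-degree oracle, and an implementation (Section~\ref{sec:check-degrees}) of such an oracle using only independent set queries, with cost $(n/\sqrt{\ol{m}})\cdot\poly(\log n,1/\eps)$ per simulated query. I would first handle the issue that we are not actually given a rough estimate $\ol{m}$ with $m=\Theta(\ol{m})$: run the whole procedure for all $O(\log n)$ guesses $\ol{m}\in\{2^0,2^1,\ldots,2^{\lceil\log\binom n2\rceil}\}$ in a standard geometric search, using a consistency/sanity check to detect when $\ol{m}$ is in the right range; only correct guesses are kept, and a wrong guess either is rejected by the check or produces an estimate that is itself close to $m$. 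Since each guess costs $(n/\sqrt{\ol m})\cdot\poly(\log n,1/\eps)$ and the dominant term is governed by the smallest valid $\ol m = \Theta(m)$, the total is $(n/\sqrt m)\cdot\poly(\log n,1/\eps)$.

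\medskip
For a fixed $\ol m$ with $m=\Theta(\ol m)$, I would carry out the following steps. \textbf{(i)} Fix the $(\ol m,\eps)$-degree partition of $[n]$ into low-degree buckets $L_i$ (indexed by $\deg(u,L)\in[(1+\eps)^i,(1+\eps)^{i+1})$) and high-degree buckets $H_{i,j}$ (indexed by $\deg(u)$ and by the fraction of neighbours of $u$ lying in $L$), exactly as in Definition~\ref{def:degree-part}; the $(\ol m,\eps)$-degree oracle of Definition~\ref{def:degree-oracle} answers, consistently, which bucket a queried vertex lies in, with errors confined to vertices near bucket boundaries. \textbf{(ii)} Estimate $|L_i|$ for each relevant $i$ by uniform sampling: only buckets with $|L_i|=\Omega(\ol m/(1+\eps)^i)$ matter, so $\widetilde O(n(1+\eps)^i/\ol m)$ samples suffice, and each sample costs $\widetilde O(\sqrt{\ol m}/(1+\eps)^i)$ oracle-queries, giving $\widetilde O(n/\sqrt{\ol m})$ total per bucket and hence overall. \textbf{(iii)} Estimate $|H_{i,j}|$ using the non-local trick: to distinguish $|H_{i,j}|\gtrless(1+\eps)r$ for $r=\Theta(\ol m/(1+\eps)^i)$, draw a random set $\bS$ of size $\approx n/(\sqrt{\ol m}\log n)$ and use the oracle to decide whether $\bS$ meets $H_{i,j}$ with $\widetilde O(1)$ queries; repeating $\widetilde O((1+\eps)^i/\sqrt{\ol m})=\widetilde O(n/\sqrt{\ol m})$ times and doing a geometric search over $r$ yields a $(1+\eps)$-estimate $h_{i,j}$. \textbf{(iv)} Combine: output
\begin{equation*}
\widetilde m \;=\; \sum_{i}\Bigl(\ell_i\cdot\tfrac12(1+\eps)^i\Bigr)\;+\;\sum_{i,j}\Bigl(h_{i,j}\cdot(1+\eps)^i\cdot g_{j}\Bigr),
\end{equation*}
where $\ell_i$ and $h_{i,j}$ are the bucket-size estimates and the weights are chosen (using the definition of the $H_{i,j}$ partition, which records the $L$-fraction $g_j$) so that every edge is counted with total weight $1\pm O(\eps)$: an $L$--$L$ edge contributes $\tfrac12$ from each endpoint, an $L$--$H$ edge contributes $\tfrac12$ from the $L$ side and $\tfrac12$ from the $H$ side via the $g_j$ factor, and an $H$--$H$ edge contributes $\tfrac12$ from each $H$ endpoint through the complementary $(1-g_j)$ mass; rescaling $\eps$ by a constant gives the claimed bound.

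\medskip
The main obstacle, and the bulk of the work, is step~\textbf{(iii)} together with implementing the oracle: proving that a single random set $\bS$ of size near the critical threshold $n/\sqrt{\ol m}$ can be tested for membership in a \emph{high-degree} bucket $H_{i,j}$ with only $\widetilde O(1)$ independent set queries, and that this is consistent across the many repetitions. The subtlety is exactly the one flagged in the overview: when we sample a set at density $1/\sqrt{\ol m}$, a neighbour $v$ of a target vertex $u$ that itself has degree $\gg\sqrt{\ol m}$ will tend to bring along one of \emph{its} neighbours, so the naive ``$\bT$ independent but $\bT\cup\{u\}$ not'' test is useless for detecting high-degree structure. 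Handling this is precisely why the partition splits on the $L$-fraction and why Task~0 separates $H$ from $L$ first; making the error analysis go through requires the consistency guarantees baked into the $(\ol m,\eps)$-degree oracle abstraction. A secondary source of technical pain is error propagation: since the oracle misclassifies boundary vertices, I would argue (as the overview promises) that the total mass of such boundary vertices, weighted by degree, is an $O(\eps)$-fraction of $m$ — using that the buckets are geometrically spaced so only $O(1/\eps)$ of them can be ``close'' to any given degree scale — so the cumulative effect on $\widetilde m$ is within the target tolerance after rescaling $\eps$. The remaining estimates (Chernoff bounds for the sampling steps, a union bound over the $O(\log^2 n/\eps^2)$ buckets, and the $1/\poly(n)$ failure probability of \BinarySearch) are routine.
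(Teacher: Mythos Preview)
Your overall architecture matches the paper's: reduce to a version with advice $\ol m\ge m$ (Lemma~\ref{lem:up-given}), implement an $(\ol m,\eps)$-degree oracle via independent-set queries (Lemma~\ref{lem:oracle-implementation}), estimate the bucket sizes (Lemmas~\ref{lem:lowestimation} and~\ref{lem:highestimation}), and combine. Two concrete points, however, are broken as written.

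\textbf{The geometric search.} You propose to run the with-advice procedure for \emph{all} $O(\log n)$ guesses $\ol m$ and then claim the total cost is dominated by the smallest valid $\ol m=\Theta(m)$. But the cost of one run is $(n/\sqrt{\ol m})\cdot\poly(\log n,1/\eps)$, which \emph{increases} as $\ol m$ decreases; executing the run with $\ol m=1$ already costs $n\cdot\poly(\log n,1/\eps)$ and blows the budget. The paper avoids this by searching from $\ol m=\binom n2$ \emph{downward} and halting the first time the output $\hat m$ satisfies $4\hat m\ge\ol m$. The guarantee $(1-5\eps)m-O(\eps\ol m/\log n)\le\hat m\le(1+\eps)m$ of Lemma~\ref{lem:up-given} then forces termination with $m\le\ol m\le 5m$, so no run with $\ol m<m$ is ever attempted and the total is a geometric series dominated by its last (most expensive) term. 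Your ``consistency/sanity check'' must be made precise enough to guarantee termination \emph{before} $\ol m$ drops below $m$.

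\textbf{The combination formula.} Your displayed $\widetilde m$ does not count $H$--$H$ edges. The $L$-buckets are indexed by $\deg(u,L)$ (as you yourself state in step~(i)), so the first sum is $\approx\tfrac12\sum_{u\in L}\deg(u,L)=|E(L,L)|$; in particular an $L$--$H$ edge contributes $0$ from the $L$ side, not $\tfrac12$. Since $g_j$ is the $L$-fraction, the second sum is $\approx\sum_{u\in H}\deg(u)\cdot g_j\approx\sum_{u\in H}\deg(u,L)=|E(L,H)|$. There is no ``$(1-g_j)$ mass'' term anywhere in your formula, so $|E(H,H)|$ is simply missing and $\widetilde m\approx m-|E(H,H)|$. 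The paper instead writes $m=\tfrac12(m_1+m_2+m_3)$ with $m_1=\sum_{u\in L}\deg(u,L)$, $m_2=\sum_{u\in H}\deg(u)$, $m_3=\sum_{u\in H}\deg(u,L)$, and estimates each piece: $\hat m_1$ from the $\kappa_i$'s, $\hat m_2=\sum_k\gamma_k\alpha^k$ with $\gamma_k=\sum_\ell\gamma_{k,\ell}$, and $\hat m_3=\sum_{k,\ell<\tau}\gamma_{k,\ell}\alpha^{k-\ell}$. The second index $\ell$ is used only for $\hat m_3$; the $H$--$H$ edges are captured through the full-degree weight $\alpha^k$ in $\hat m_2$, not through any complementary fraction.
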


{We recall the following lemma from \cite{BHRRS18}.
	\begin{lemma}[Lemma~5.6 from \cite{BHRRS18}]\label{lem:bhhrs}
		There is a randomized algorithm that takes as input (1) an accuracy parameter $\eps \in (0,1)$, (2) a positive integer $n$, and (3) access to the independent set oracle of a graph $G = ([n], E)$ with $m = |E| \geq 1$. With probability at least $1-o(1)$, the algorithm makes $\sqrt{m} \cdot \poly(\log n, 1/\eps)$ queries and outputs a number $\wt{m}$ satisfying $(1-\eps) m \leq \wt{m} \leq (1+\eps) m$.
	\end{lemma}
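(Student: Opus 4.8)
The plan is to reduce the lemma to a \emph{known-scale} subroutine and then prove that subroutine by combining a degree-based splitting identity with the non-locality of the oracle. First I would build a routine $\texttt{Sub}(\hat m)$ that, given a guess $\hat m$, uses at most $\sqrt{\hat m}\cdot\poly(\log n,1/\eps)$ queries and either outputs a number $\widetilde m$ or ``aborts'', subject to two guarantees: it \emph{never} outputs a value outside $[(1-\eps)m,(1+\eps)m]$, and it does not abort (with high probability) once $\hat m\ge m$. Given such a routine, the lemma follows by running $\texttt{Sub}(2^j)$ for $j=1,2,\dots$ and halting at the first $j$ that produces an output: correctness is the first guarantee, termination by $j=\lceil\log m\rceil$ is the second, and the total cost is $\sum_{j\le\lceil\log m\rceil}\sqrt{2^j}\cdot\poly(\log n,1/\eps)=\sqrt{m}\cdot\poly(\log n,1/\eps)$ because the geometric sum is dominated by its last term and $2^{\lceil\log m\rceil}\le 2m$.

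Inside $\texttt{Sub}(\hat m)$ I would fix a sampling rate $p=\Theta(1/\sqrt{\hat m})$, call a vertex \emph{heavy} if $\deg_G(v)\gtrsim\sqrt{\hat m}$ and \emph{light} otherwise, write $H,L$ for the two classes, and write $m_{LL},m_{LH},m_{HH}$ for the number of edges inside $L$, between $L$ and $H$, and inside $H$. The identity to exploit is $m=m_{LL}+\sum_{v\in H}\deg_G(v)-m_{HH}$, which follows from $m=m_{LL}+m_{LH}+m_{HH}$ together with $\sum_{v\in H}\deg_G(v)=m_{LH}+2m_{HH}$, and which holds for \emph{any} threshold separating $H$ from $L$; so it suffices to estimate the three quantities on the right. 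For $m_{LL}$ I would use the direct sampling estimator: keep each vertex of $[n]$ independently with probability $p$ to form $\bS$, recover the (with high probability only $\polylog(n)$ many) edges inside $\bS\cap L$ by a recursive find-all-edges procedure built on \emph{\BinarySearch}, and scale by $1/p^2$. Because every light vertex has degree below $\sqrt{\hat m}$, one has $\sum_{v\in L}\deg_G(v,L)^2\le\sqrt{\hat m}\cdot 2m$, so Lemma~\ref{lem:indset} and a second-moment computation give this estimator variance $O(1)$ and $\poly(\log n,1/\eps)$ repetitions suffice (reporting $0$ whenever the quantity falls below $\eps m$, which is harmless). The term $m_{HH}$ is estimated the same way, sub-sampling $H$ at rate $\Theta(1/|H|)$, which is legitimate once $H$ is known explicitly and $|H|=O(\sqrt{\hat m}\cdot\polylog(n))$.

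The heart of the argument is building $H$ and estimating $\sum_{v\in H}\deg_G(v)$ within the budget, and this is where non-locality is indispensable. The number of heavy vertices is $O(m/\sqrt{\hat m})=O(\sqrt{\hat m})$ once $\hat m\ge m$, so they can be enumerated. To do so, draw an independent set $\bT$ at rate $p$ — verifying independence with a few $\IS$ calls and aborting if it fails too often, which correctly flags $\hat m\ll m$ — and note that its neighbourhood $\Gamma_G(\bT)=\{v:\IS(\bT\cup\{v\})=0\}$ contains each heavy vertex with constant probability, while its expected size is $\sum_v(1-(1-p)^{\deg_G(v)})\le p\cdot 2m=O(\sqrt{\hat m})$. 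Taking $\polylog(n)$ independent copies of $\bT$ yields a candidate set $C\supseteq H$ with $|C|=O(\sqrt{\hat m}\cdot\polylog(n))$ (abort if it grows larger). Extracting one $\Gamma_G(\bT)$ is done by first splitting $[n]$ into $\sqrt{\hat m}$ blocks — at that granularity the expected number of edges with both endpoints in a common block is only $m/\sqrt{\hat m}=O(\sqrt{\hat m})$ — querying $\IS(\bT\cup(\text{block}))$ on each block, and running a \emph{\BinarySearch}-style recursion inside each non-independent block, for a total of $O\big(\sqrt{\hat m}+|\Gamma_G(\bT)|+(\#\text{within-block edges})\big)\cdot\polylog(n)=\sqrt{\hat m}\cdot\polylog(n)$ queries. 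Finally I would filter $C$ down to $H$ by testing each candidate's degree against $\sqrt{\hat m}$ (again through $\bT\cup\{v\}$ queries), and for each surviving $v$ estimate $\deg_G(v)$ to within $(1\pm\eps)$ by a binary search over the sampling rate; since $|C|=O(\sqrt{\hat m}\cdot\polylog(n))$ and each vertex costs $\poly(\log n,1/\eps)$ queries, this stays within $\sqrt{\hat m}\cdot\poly(\log n,1/\eps)$.

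I expect two places to demand the most care. The first is the error analysis of all the degree tests near the threshold $\sqrt{\hat m}$: a vertex whose degree straddles the boundary can legitimately be classified either way, so rather than recovering the ``true'' $H$ one must show that \emph{any} internally consistent classification keeps the identity and all three estimators accurate to within $\eps$ — this is exactly the role played by the ``degree-partition''/``degree-oracle'' abstractions in \cite{BHRRS18} and in the present paper, and reproducing a lightweight version of it here is the main technical burden. The second is verifying the ``abort, or output a correct value, but never output a wrong value'' guarantee of $\texttt{Sub}$ at \emph{every} scale $\hat m$, in particular for $\hat m\ll m$ (where the identity still holds but the estimators may be too coarse, $\bT$ may fail to be independent, or $C$ may blow up): pinning down that each of these failure modes is detected and triggers an abort is what makes the outer geometric search sound.
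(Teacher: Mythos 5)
This statement is not proved in the paper at all: it is Lemma~5.6 of \cite{BHRRS18}, quoted verbatim and used as a black box (the paper only needs it to combine with its own $n/\sqrt{m}$ algorithm in the proof of Theorem~\ref{thm:intro1}). So there is no in-paper proof to compare against; what you have written is a reconstruction of the cited result. As such, your architecture is credible and contains the right ingredients: the exact identity $m=m_{LL}+\sum_{v\in H}\deg(v)-m_{HH}$ valid for \emph{any} high/low split, the threshold $\sqrt{\ol{m}}$ so that $|H|\le 2m/\sqrt{\ol{m}}=O(\sqrt{\ol{m}})$ can be enumerated explicitly, the second-moment bound $\sum_{v\in L}\deg(v,L)^2\le\sqrt{\ol{m}}\cdot 2m$ for the light part, and the use of non-local queries both to enumerate $\Gamma(\bT)$ by block-splitting and to test individual degrees. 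These closely parallel the machinery the present paper builds for its $n/\sqrt{\ol{m}}$ algorithm (the $H/L$ split of Definition~\ref{def:degree-part}, $\CheckHiDegree$, and the ``good string'' device that turns noisy per-vertex tests into a single consistent partition), and your two flagged difficulties are exactly the ones that machinery exists to solve.

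Two points deserve more care than your sketch gives them. First, your upward doubling search genuinely needs the strong ``abort or be correct, never be silently wrong'' guarantee at every scale $\hat m<m$; note that you cannot instead copy the paper's downward wrapper (Figure~\ref{fig:Two-Approx-No-Advice}), because a $\sqrt{\hat m}$-cost subroutine started at $\hat m=\binom{n}{2}$ already costs $\Theta(n)$ queries. You should therefore verify concretely that each failure mode at too-small $\hat m$ is detected: when $\hat m\ll m$ the set $\bT$ is essentially never independent and the edge count inside $\bS$ explodes, so aborts fire, but in the intermediate regime $\hat m=\Theta(m)$ you must check the constants so that either the aborts fire or all estimators remain $(1\pm\eps)$-accurate. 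Second, every degree test of the form ``$\bT$ is independent but $\bT\cup\{v\}$ is not'' carries a bias from conditioning on $\bT$ being independent; with sampling rate $\Theta(1/d)$ this bias is a constant factor, not $1\pm o(\eps)$. The fix, as in the paper's $\CheckHiDegree$, is to sample at rate $\eps/(d\log n)$ so the signal is $\Theta(\eps/\log n)$ while the probability that $\bT$ is dependent is $O(\eps^2/\log^2 n)$, a lower-order term; your heavy-vertex enumeration and your per-vertex degree estimation both need this adjustment (with the corresponding $\poly(\log n,1/\eps)$ repetition count) to reach $(1+\eps)$ rather than $O(1)$ accuracy.
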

	
	The upper bound claimed in Theorem~\ref{thm:intro1} of $\min\{ \sqrt{m}, n / \sqrt{m}\} \cdot \poly(\log n, 1/\eps)$ follows by running the algorithm of Theorem~\ref{thm:main} and the algorithm of Lemma~\ref{lem:bhhrs} in parallel. Specifically, we alternate queries between the two algorithms until one of them terminates. Once one terminates~with~an~estimate $\wt{m}$ to $m$, we output $\wt{m}$. }

\subsection{Reduction to edge estimation with advice}

%\begin{lemma}[Estimation with Advice]\label{lem:up-given}
%There exists an algorithm ($\TwoApprox$) receiving as inputs $\eps > 0$, $\mbar\ge|E|$, making $\widetilde{O}(n / \sqrt{\overline{m}} \cdot \poly(1/\eps))$ many queries which outputs a number $\hat{m} \in \N$ satisfying
%\[ |E| -\frac{2(1+10\eps)\mbar}{\log n}\leq \mbar \leq 2(1+20\eps) |E| \]
%with probability at least $1 - 1/\poly(n)$.
%\end{lemma}

We prove Theorem \ref{thm:main} using the following lemma stated next. We will provide an algorithm, which we call
{\tt Estimate-With\--Advice}, for estimating $|E|$ given an extra parameter $\ol{m}$ which is promised to be an upper bound for $|E|$.

\begin{lemma}[Estimation with advice]\label{lem:up-given}
	There is a randomized algorithm, $\emph{\TwoApprox}$, %(\eps, n, \ol{m}, G)$, 
	that takes four inputs: (1) an accuracy 
	parameter $\eps\in (0,1)$,
	(2) two positive integers $n,\ol{m}$, and~(3) \mbox{access~to~an} 
	independent set oracle of $G=([n],E)$ with $1 \leq m=|E|\le \ol{m}$.
	$\emph{\TwoApprox} $ makes $(n/\sqrt{\ol{m}})\cdot \poly(\log n,1/\eps)$ 
	queries and with probability at least $1-1/n$ outputs $\hat{m}$ that satisfies
	%receiving as inputs $\eps > 0$, $\mbar\ge|E|$, making $\widetilde{O}(n / \sqrt{\overline{m}} \cdot \poly(1/\eps))$ many queries which outputs a number $\hat{m} \in \N$ satisfying
	%\[ |E| -\frac{2(1+10\eps)\mbar}{\log n}\leq \mbar \leq 2(1+20\eps) |E| \]
	\begin{equation}\label{output}
	(1-5\eps) {m} -O\left(\frac{\eps \ol{m}}{\log n}\right)\le \hat{m}\le (1+\eps) m.
	\end{equation}
	%with probability at least $1 - 1/\poly(n)$.
\end{lemma}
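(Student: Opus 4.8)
The plan is to derive Theorem~\ref{thm:main} from Lemma~\ref{lem:up-given} by a standard ``guess-and-double'' search over the advice parameter $\ol{m}$, and then to prove Lemma~\ref{lem:up-given} by implementing the revised plan from the overview: (i)~build an $(\ol{m},\eps)$-degree oracle from the independent set oracle (deferred to Section~\ref{sec:check-degrees}), (ii)~use it to bucket vertices into low-degree buckets $L_i$ and high-degree buckets $H_{i,j}$, (iii)~estimate $|L_i|$ by uniform sampling and estimate $|H_{i,j}|$ by the non-local ``does $\bS$ hit $H_{i,j}$?'' subroutine run on random sets $\bS$ of size $\approx n/(\sqrt{\ol m}\log n)$, and (iv)~combine the estimates into $\hat m = \sum_i \ell_i (1+\eps)^i + \sum_{i,j} h_{i,j}(1+\eps)^i \cdot(\text{appropriate weighting})$ so that each edge is counted the right number of times up to the $(1+\eps)$ and additive $O(\eps\ol m/\log n)$ slack in~\eqref{output}.

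\textbf{Key steps.} First I would fix the degree partition from Definition~\ref{def:degree-part}: high-degree vertices ($\deg(u)\ge\sqrt{\ol m}$) split by degree into $O((\log n)/\eps)$ buckets $H_i$, each $H_i$ further split by the fraction of neighbors lying in $L$ into $O((\log n)/\eps)$ sub-buckets $H_{i,j}$; low-degree vertices split by $\deg(u,L)$ into buckets $L_i$. Then, granting an $(\ol m,\eps)$-degree oracle that on query $(u,i)$ (resp.\ $(u,i,j)$) consistently reports whether $u\in L_i$ (resp.\ $u\in H_{i,j}$) correctly except for vertices near bucket boundaries, I would: (1)~for each low-degree index $i$ with $(1+\eps)^i\le\sqrt{\ol m}$, sample $\widetilde O(n(1+\eps)^i/\ol m)$ vertices uniformly, run the oracle on each (at cost $\widetilde O(\sqrt{\ol m}/(1+\eps)^i)$ queries apiece, per the overview), and output an estimate $\ell_i$ of $|L_i|$, noting that buckets with $|L_i|=o(\ol m/(1+\eps)^i)$ contribute negligibly and can be ignored; total query cost $\widetilde O(n/\sqrt{\ol m})$; (2)~for each high-degree pair $(i,j)$, repeatedly sample $\bS\subseteq[n]$ of size $n/(\sqrt{\ol m}\log n)$, use the oracle (plus binary search via Lemma~\ref{lem:binary-search}) to decide whether some $u\in\bS$ lies in $H_{i,j}$ using $\widetilde O(1)$ queries, and use the empirical hitting frequency over $\widetilde O((1+\eps)^i/\sqrt{\ol m})\le\widetilde O(n/\sqrt{\ol m})$ repetitions to get $h_{i,j}$; (3)~combine via a weighted sum that counts each $L$--$H$ edge once, each $L$--$L$ and $H$--$H$ edge in a way that the two partial sums reconstruct $m$ up to $(1\pm\eps)$ — here the sub-bucketing of $H$ by $|\Gamma(u)\cap L|/\deg(u)$ is exactly what lets us recover the $L$--$H$ cross-edge count accurately rather than only up to a factor of $2$. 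Finally I would do a union bound over all $O((\log n/\eps)^2)$ buckets and all sampling events, tracking how the near-boundary misclassification errors of the degree oracle inflate the estimate by at most a $(1\pm O(\eps))$ multiplicative factor and an $O(\eps\ol m/\log n)$ additive term, yielding~\eqref{output}; since there are $O(\log n)$ scales of $\ol m$ to try in the reduction from Theorem~\ref{thm:main}, replacing $\eps$ by $\eps/O(\log n)$ throughout absorbs the additive slack.

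\textbf{Main obstacle.} The delicate part is the error analysis for the high-degree estimates and the final aggregation. Because the degree oracle can misclassify vertices whose degree (or $L$-neighbor fraction) straddles a bucket boundary, a vertex may be counted in an adjacent bucket, and when we sum $h_{i,j}(1+\eps)^i$ these displacements must only cause a $(1\pm O(\eps))$ distortion — this forces the buckets to be geometric with ratio $1+\eps$ and requires showing the oracle is \emph{consistent} (same answer on repeated queries on the same vertex) so that the induced ``effective partition'' is well-defined, which is why Definition~\ref{def:degree-oracle} bakes consistency in. A second subtlety is that the hitting-set estimator for $|H_{i,j}|$ is only accurate when $|H_{i,j}|\gtrsim\ol m/(1+\eps)^i$, i.e.\ when the expected number of $H_{i,j}$-vertices in a random $\bS$ is not too small; buckets below this threshold contribute at most $O(\eps\ol m/\log n)$ in aggregate and are safely dropped, but verifying this accounting across all $O((\log n/\eps)^2)$ buckets simultaneously — and checking that the dropped mass, the misclassification mass, and the sampling error never compound beyond the budget in~\eqref{output} — is the technical heart of the argument. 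The actual reduction (guess $\ol m=n^2, n^2/2,\dots$, run {\tt Estimate-With-Advice}, stop when $\hat m$ is consistent with the current guess) is comparatively routine.
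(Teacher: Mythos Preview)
Your proposal is correct and follows essentially the same route as the paper: reduce to an algorithm with access to an $(\ol m,\eps)$-degree oracle (the paper's Lemma~\ref{lem:deg-oracle-access}), estimate $|L_i|$ by uniform sampling with per-sample cost $\widetilde O(\sqrt{\ol m}/(1+\eps)^i)$, estimate $|H_{k,\ell}|$ via a non-local hitting test on random sets, and aggregate using the decomposition $2m = \sum_{u\in L}\deg(u,L) + \sum_{u\in H}\deg(u) + \sum_{u\in H}\deg(u,L)$. Two small remarks: (i) the paper's actual high-degree estimator (\texttt{High-Degree-Event}$^*$) samples \emph{two} random sets $\bS,\bT$ with different inclusion probabilities and increments a counter when $\bT$ is independent, $\bS\cup\bT$ is not, and binary search locates an edge with one endpoint in $\bS\cap H_{k,\ell}$ having a neighbor in $\bT$---this is the concrete mechanism behind your ``decide whether some $u\in\bS$ lies in $H_{i,j}$ in $\widetilde O(1)$ queries,'' and it is the one place where your sketch is thin; (ii) in the reduction to Theorem~\ref{thm:main} the paper scales $\eps$ by a constant ($\eps/11$), not by $1/\log n$---the additive $O(\eps\ol m/\log n)$ is absorbed because at termination $\ol m\le 5m$, not by shrinking $\eps$.
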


Before proving Lemma \ref{lem:up-given}, we show that it implies Theorem~\ref{thm:main}.\medskip\vspace{0.06cm}

\begin{figure}[t!]
	\begin{framed}
		\noindent Algorithm $\TwoApproxNoAdvice\hspace{0.05cm}(\eps,n,G)$
		\begin{flushleft}
			\noindent {\bf Input:} An accuracy parameter $\eps\in (0,1)$, a positive integer
			$n$, and access to 
			the independent set oracle of an undirected graph $G=([n],E)$.\\ %$ and a parameter $\epsilon>0$.\\
			%	\medskip
			{\bf Output:} A number $\widetilde{m}$ as an estimation of $m=|E|$.
			\begin{enumerate}
				\item Set $\ol{m}={n\choose 2}$.\vspace{-0.06cm}
				\item While $\ol{m}\ge  1$: \vspace{-0.05cm}				\begin{enumerate}
					\item Invoke $\TwoApprox\hspace{0.04cm}(\epsilon/11 ,n,\ol{m},G)$.\vspace{0.03cm}
					\item Let $\hat{m}$ denote the output. If $4\hspace{0.02cm}\hat{m}\ge \mbar $ {\bf return} $\hat{m}$ as $\widetilde{m}$;
					otherwise set $\ol{m}$ to be $\lfloor \ol{m}/2\rfloor$. 
				\end{enumerate}
				\item {\bf Return} $0$ as $\widetilde{m}$ (this line is reached with low probability). 
			\end{enumerate}
			
		\end{flushleft}\vskip -0.14in
	\end{framed}\vspace{-0.25cm}
	\caption{Description of the $\TwoApproxNoAdvice$ algorithm.\vspace{-0.25cm}} \label{fig:Two-Approx-No-Advice}
\end{figure}
\begin{proofof} {Theorem~\ref{thm:main} Assuming Lemma~\ref{lem:up-given}} 
	We present $\TwoApproxNoAdvice$ in Figure \ref{fig:Two-Approx-No-Advice}.
	%, which uses
	%  $\TwoApprox$ as a subroutine. 
	%  We then run $\TwoApproxNoAdvice$ in parallel to the algorithm from \cite{BHRRS18} (see Theorem 5.8 in \cite{BHRRS18}) and output the result of of whichever finished first.

	Note that at the end of each iteration of step 2 in Figure~\ref{fig:Two-Approx-No-Advice},
	either the algorithm terminates or $\ol{m}$ is halved. Since $\ol{m}$ is initially $\binom{n}{2}$, the maximum number of iterations of the step 2 (before $\ol{m} < 1$) in $\TwoApproxNoAdvice$ is $O(\log n)$.
	It follows from Lemma~\ref{lem:up-given} and a union bound  that, with probability at least 
	$1-o(1)$, every execution of 
	$\TwoApprox$ in step 2(a) of $\TwoApproxNoAdvice$ returns a \emph{correct} value
	(meaning that if $\ol{m}$ of this run indeed satisfies $\ol{m}\ge m=|E|$, then its output $\hat{m}$
	satisfies (\ref{output}) but with $\eps$ set to $\eps/11$).
	We show  that the following holds when this is the case:
	\begin{flushleft}\begin{enumerate}
			\item[] ($*$): $\TwoApproxNoAdvice$ terminates in the while loop (instead of going to line 3)\\ with the 
			final value of $\ol{m}$ satisfying $m\le \ol{m}\le 5m$.
			%\item The final output $\widetilde{m}$ of
			%  $\TwoApproxNoAdvice$ satisfies (\ref{output2}).
	\end{enumerate}\end{flushleft}
	{Assume that ($*$) holds, and let $\wt{m}$ be the output of $\TwoApproxNoAdvice(\eps, n, G)$. Since $m \leq \ol{m}$ in every iteration of step 2 and the final iteration also satisfies $\ol{m}\leq 5m$, Theorem \ref{thm:main} would follow from two observations.}%\enote{Re-worded this a bit} 
	(i) The query complexity of $\TwoApproxNoAdvice$ can be bounded using $\ol{m}\ge m$, and (ii) since the final run of $\TwoApprox$ is correct, we have (using $\ol{m}\le 5m$)
	\begin{equation}\label{samesame}
	(1-\eps/2)m< (1-5\eps/11) m -O\left(\frac{\eps m}{\log n}\right)\le 
	\widetilde{m}\le (1+\eps/{11} )m< (1+\eps)m.
	\end{equation}
	%\enote{Before, there was a value of $\eps / 10$; while it is still correct, it suggests that something more (other than just substitution) is taking place.}
	It suffices to show that 
	($*$) holds when every run of $\TwoApprox$ returns a correct value.
	
	Assuming for contradiction of ($*$) that the final value of $\ol{m}$ is smaller than $m$.
	This implies that $m\le \ol{m}$ $\le 2m$ in one of the runs of $\TwoApprox$ in $\TwoApproxNoAdvice$.
	Since it returns a correct value $\hat{m}$ (and note that for this run we still have $m\le \ol{m}$), 
	the same calculation in (\ref{samesame}) implies that $\hat{m}\ge (1-\eps/2)m$ and thus,
	$4\hat{m}\ge 4\cdot 0.5\cdot m=2m\ge \ol{m}$ and the algorithm should have terminated at the end of this run,
	a contradiction.
	On the other hand, assume for a contradiction of ($*$) that the final value of $\ol{m}$ is larger than $5m$.
	Since the final run returns a correct value $\hat{m}$, 
	$\hat{m}\le (1+\eps/{11})m\le 1.1m$
	and thus, $4\hat{m}<5m<\ol{m}$; {however, step 2(b) should have terminated if $4\hat{m} \geq \ol{m}$,} a contradiction. This finishes the proof of the theorem.  
	%By fact that the algorithm from \cite{BHRRS18} outputs with probability at least $1-o(1)$ a $(1+\eps)$ approximation to the number of edges and uses $\widetilde{O}(\min(\sqrt{m},n^2/m))$, we have that whenever $m=o(n)$ the algorithm from \cite{BHRRS18} finishes before $\TwoApproxNoAdvice$, and when $m=\Omega(n)$, $\TwoApproxNoAdvice$ terminate first. Overall, we get that with probability at least $1-o(1)$, the algorithm uses $\min(\sqrt{m},n/\sqrt{m})\cdot\poly(\log n,1/\eps)$ queries and outputs a $(1+\eps)$ approximation to $m$, and the proof is complete.
	% If $m\ge  n\log n$, then whenever $m\le\ol{m}\le 5m$, we terminate and output a correct value. If $m<n\log n$, we either terminate in the while loop when $m<\ol{m}\le 5m$ and $\ol{m}>n\log n$ (in such case we have that $m<n\log n<\ol{m}\le 5m$, so the query complexity is $\widetilde{O}(n/\sqrt{\ol{m}})=\widetilde{O}(\sqrt{m})$), or 
	%$\ol{m}<n\log n$, in such case we invoke the algorithm from \cite{BHRRS18} (see Theorem 5.8 in \cite{BHRRS18}), which outputs with probability at least $1-o(1)$ a $(1+\eps)$ approximation to the number of edges and uses $\widetilde{O}(\min(\sqrt{m},n^2/m))=\widetilde{O}(\sqrt{m})$. 
	%Combining the above two cases (which correspond to the value of $m$) gives that with probability at least $1-o(1)$, the algorithm uses  ${O}(\min(\sqrt{m},n/\sqrt{m}))\cdot \poly(\log n, 1/\eps)$ queries and outputs $(1+\eps)$ approximation to $m$.
\end{proofof}\medskip\vspace{0.1cm}

We prove Lemma \ref{lem:up-given} in the rest of the section.
From now on, let $\eps\in (0,1)$ be the accuracy~parameter,  
$\ol{m}\le {n\choose 2}$ be a positive integer, and $G=([n],E)$ be a 
graph with $1\le m=|E|\le \ol{m}$ as in the statement of Lemma \ref{lem:up-given}.
%\footnote{Xi: Move this to the intro later. The case $|E| = 0$ is trivial; one may simply perform one independent set query containing all vertices. Then, $|E| = 0$ if and only if the whole vertex set is an independent set.}
%Let $\eps\in (0,1)$ be the error parameter in Theorem~\ref{lem:up-given}
Let $\alpha = 1+\eps$ and let $s$ be the unique positive integer such that 
\begin{align}
	\alpha^{s-1}&\le \sqrt{\ol{m}}<\alpha^s. \label{eq:def-s}
\end{align}
We also write $\beta=\Theta((\log n)/\eps)$ to denote the smallest integer such that 
$\alpha^\beta\ge n$, and $\tau$ to denote the smallest integer 
such that {$\alpha^\tau\ge \log n/\eps$ (so $\alpha^\tau = \Theta(\log n/\eps)$)}.
%\enote{Previously, it was $\alpha^{\tau} \geq \log^2 n / \eps$. I don?t actually think the square here is necessary.} 
It may be helpful to the reader to consider the case when $\ol{m}$ is only a constant factor larger than $|E|$, so the algorithm's task is to refine an approximation to the number of edges given a crude approximation; however, the proof of Lemma~\ref{lem:up-given} assumes just the upper bound $\ol{m} \geq |E|$.

\subsection{Degree oracles and the high-level plan}\label{sec:oracles-and-plan}

To simplify the presentation and analysis of our algorithm, $\TwoApprox$, we introduce
the notion of \emph{$(\ol{m},\eps)$-degree partitions} and \emph{$(\ol{m},\eps)$-degree oracles}.
Roughly speaking, an $(\ol{m},\eps)$-degree partition 
$P=(L_i,H_{k,\ell}: i\in [0:s], k\in [s+1:\beta]\ \text{and}\ \ell\in [0:\tau])$ of an undirected graph $G=([n],E)$
is a partition of  $[n]$ (so $L_i$'s and $H_{k,\ell}$'s  
are pairwise disjoint subsets of $[n]$ whose union is $[n]$)
such that the placement of a vertex $v$ reveals important degree information of $v$
(see Definition \ref{def:degree-part} for details).
An $(\ol{m},\eps)$-degree oracle, on the other hand, contains an underlying
$(\ol{m},\eps)$-degree partition and the latter can be accessed via queries such as
``does $v$ belong to $L_i$'' or ``does $v$ belong to $H_{k,\ell}$.''
There is also a cost associated with each such query (see Definition \ref{def:degree-oracle} for details).

%  the notion of \emph{$\eps$-accurate degree oracles} (see Definition \ref{}).
%An $\eps$-accurate degree oracle for an undirected graph $G=(V,E)$ is a (deterministic) map $\DO:V\times [0:\ell]\rightarrow \{0,1\}$ (recall $\ell=O(\log n/\eps)$ 
%  in the definition of $\calI_{\eps,n}$).
%Ideally we would like $D(v,i)=1$ if and only if the degree of $v\in V$ lies in $I_i$.
%Two complications arise.
%First, $D(v,i)=1$ only implies that the degree of $v\in V$ lies in $I_{i-1}\cup I_i$
%  (errors like this are not surprising given that it is difficult to distinguish $I_{i-1}$ from $I_i$
%  when the degree lies right on their boundary);
%Second, the cost an algorithm pays to reveal an entry $\DO(v,i)$ depends on $i$ (i.e., the
%  interval we are interested in), which is roughly $\max(\sqrt{\ol{m}}/(1+\eps)^i,1)$.

With the definition of degree partitions and 
degree oracles, our proof of Lemma~\ref{lem:up-given}
proceeds in the following two steps.
First we present in Lemma~\ref{lem:deg-oracle-access} an algorithm $\TwoApprox^*$~that achieves the same
goal as $\TwoApprox$, namely (\ref{output}) in Lemma~\ref{lem:up-given} with high probability.
The difference, however, is that $\TwoApprox^*$ is given access to not only~an~independent set oracle 
but also an
%\emph{an}\footnote{We use the word ``an'' instead of ``the'' here because
%in general $(\ol{m},\eps)$-degree partitions (and so are $(\ol{m},\eps)$-degree oracles)
%are not unique for a given graph $G$.} 
$(\ol{m},\eps)$-degree oracle. % and its total cost ($1$ per query on the independent set oracle;
%  the cost of each query on the degree oracle is specified in Definition \ref{})
%  is $O(?)$.
Next, we show in Lemma~\ref{lem:oracle-implementation} that an $(\ol{m},\eps)$-degree oracle 
can be implemented efficiently using 
access to the independent set oracle. %, where the number of queries on the independent set oracle
%  needed to simulate the query to an $\eps$-accurate degree oracle matches the cost of the latter
%  given in Definition \ref{}. 
This allows~us to {convert}
$\TwoApprox^*$ into $\TwoApprox$ with a similar performance guarantee, and Lemma \ref{lem:up-given} follows directly from Lemma~\ref{lem:deg-oracle-access} and Lemma~\ref{lem:oracle-implementation}.

We start with the definition of $(\ol{m},\eps)$-degree partitions:

\begin{definition}\label{def:degree-part}
	Let $G=([n],E)$ be a graph. %Let $\eps\in (0,1)$ and $\ol{m}:1\le \ol{m}\le {n\choose 2}$. 
	%Let $\eps\in (0,1)$ and $\ol{m}\ge |E|$. Let $\ell$ be the smallest
	%  integer such that $(1+\eps)^\ell\ge n$.
	An \emph{$(\ol{m},\eps)$-degree partition} of $G$ is a partition 
	$$P=\Big(L_i,H_{k,\ell}: i\in [0:s], k\in [s+1:\beta]\ \text{and}\ \ell\in [0:\tau]\Big)$$
	of its vertex set $[n]$ \emph{(}so the sets in $P$ are disjoint and their union is $[n]$\emph{)} such that
	\begin{flushleft}\begin{enumerate}
			\item Let $L=\cup_i L_i$ and $H=\cup_{k,\ell} H_{k,\ell}$ \emph{(}so we have $L\cup H=[n]$\emph{)}.
			Every vertex $u\in L$ satisfies\\ $\deg(u)\le \alpha^{s+1}$ and 
			every vertex $u\in H$ satisfies $\deg(u)\ge \alpha^{s}$.
			
			\item Every vertex $u\in L_0$ satisfies $\deg(u,L)=0$ and every vertex $u\in L_i$, $i\in [s]$,
			satisfies
			\begin{equation}\label{exp2}
			\alpha^{i-1}\le \deg(u,L)\le \alpha^{i+1}.
			\end{equation}
			\item Let $H_k=\cup_\ell H_{k,\ell}$ for each $k\in [s+1:\beta]$.
			Then every vertex $u\in H_k$ satisfies 
			\begin{equation}\label{exp1}
			\alpha^{k-1}\le \deg(u)\le  \alpha^{k+1}.
			\end{equation}
			Moreover, every vertex $u\in H_{k,\ell}$ for some $\ell\in [0:\tau-1]$ satisfies 
			\begin{equation}\label{exp3}
			\alpha^{k-\ell-1}\le \deg(u,L)\le \alpha^{k-\ell+1} 
			\end{equation}
			and every $u\in H_{k,\tau}$ satisfies
			$ 
			\deg(u,L)\le \alpha^{k-\tau+1}.
			$
	\end{enumerate}\end{flushleft}
\end{definition}\vspace{-0.15cm}
\begin{remark}
	It is worth pointing out that intervals used in \emph{(\ref{exp2})}, \emph{(\ref{exp1})}, {and \emph{(\ref{exp3})}} are not disjoint
	\emph{(}and so~are the~conditions on $\deg(u)$ in the first item\emph{)}.
	As a result, such partitions are not unique for a given~$G$ in general. For example, a vertex with degree 
	between $\alpha^s$ and $\alpha^{s+1}$
	can lie in either $L$ or~$H$.
	%On the other hand, given that the degree of each vertex is at most $n-1$,
	%  $H_i$ is trivially empty when $\sqrt{\ol{m}}\cdot (1+\eps)^{i-1}\ge n$.
	%We keep using the same $\ell$ (with $(1+\eps)^\ell\ge n$) to simplify the notation.
	%The same comment applies to $L_i$'s when $(1+\eps)^{i-1}>\sqrt{\ol{m}}\cdot (1+\eps)$.
\end{remark}

Next we define $(\ol{m},\eps)$-degree oracles:

\begin{definition}\label{def:degree-oracle}
	Let $G=([n],E)$ be an undirected graph.
	%. Let $\eps\in (0,1)$ and $\ol{m}\ge |E|$. Let $\ell$ be the smallest
	%  integer such that $(1+\eps)^\ell\ge n$.
	An \emph{$(\ol{m},\eps)$-degree oracle} $\DO=(\DO_{\text{low}},\DO_{\text{high}})$ of $G$~contains an underlying $(\ol{m},\eps)$-degree 
	partition $P=(L_i,H_{k,\ell}:i,k,\ell)$ of $G$ and can be accessed via
	two maps $\DO_{\text{high}}:[n]\times [s+1:\beta]\times [0:\tau]\rightarrow \{0,1\}$ and $\DO_{\text{low}}:[n]\times [0:s]
	\rightarrow \{0,1\}$, where
	\begin{flushleft}\begin{enumerate}
			\item For every vertex $u\in [n]$, $\DO_{\text{high}}(u,k,\ell)=1$ if $u\in H_{k,\ell}$
			and $\DO_{\text{high}}(u,k,\ell)=0$ otherwise.\vspace{-0.1cm}
			%L=L_0\cup L_1\cup \cdots \cup L_\ell$
			%and $\DO_1(u)=1$ if $u\in H=H_1\cup \cdots \cup H_\ell$.
			\item For every vertex $u\in [n]$, $\DO_{\text{low}}(u,i)=1$ if $u\in L_i$
			and $0$ otherwise.
	\end{enumerate}\end{flushleft}
	The \emph{cost} of each query on $\DO_{\text{high}}$ is $1$ and the cost of each query $\DO_{\text{low}}(u,i)$
	is $\alpha^{s-i}$.
	%   is $1$, and the cost of each query $DO_2(u,0,i)$ is $\sqrt{\ol{m}}/(1+\eps)^i$.
\end{definition}

We will be interested in algorithms that have access to both the independent set oracle $\IS$
and an $(\ol{m},\eps)$-degree oracle $\DO$ of a graph $G=([n],E)$.
For such an algorithm $\ALG^*$ (for clarity we always use $*$ to mark algorithms that 
have access to such a pair of oracles), we are interested in its \emph{total 
	cost}. The cost of each query on the independent set oracle 
is $1$, and the cost of each query on the degree oracle is specified in Definition \ref{def:degree-oracle}. {The total cost of an algorithm is the sum of the costs of individual queries.}%\enote{I thought this minor detail was missing}

We are ready to state Lemma \ref{lem:deg-oracle-access} and Lemma 
\ref{lem:oracle-implementation} which together imply Lemma \ref{lem:up-given}.

\begin{lemma}[Estimation with degree oracles]\label{lem:deg-oracle-access}
	There is a randomized algorithm, $\emph{\texttt{Estimate-With-}}$ $\emph{\texttt{-Advice}}^* (\epsilon,n,\ol{m},G)$, that takes four inputs: an accuracy parameter $\eps\in (0,1)$,
	two positive integers $n$ and $\ol{m}$, and access to both the independent~set oracle $\IS$ 
	and an $(\ol{m},\eps)$-degree oracle $\DO$ of a graph $G=([n],E)$ with $1\le m=|E|\le \ol{m}$.
	%Let $\eps\in (0,1)$ and $\ol{m}$ be a positive integer.
	%There is an algorithm $\TwoApprox^*$ that runs on graphs $G=(V,E)$ with $|E|\le \ol{m}$,
	%  via access to the independent set oracle and any $(\ol{m},\eps)$-degree oracle,
	Its worst-case total cost  
	%Its worst-case total cost  
	is $(n/\sqrt{\ol{m}})\cdot \poly(\log n,1/\eps)$ and with probability at least $1-1/n^2$,   it
	returns $\hat{m} $ satisfying
	\begin{align} \label{eq:guarantees}
		(1-5\eps)m-O\left(\frac{\eps\ol{m}}{\log n}\right)
		&\leq \hat{m} \leq (1+\eps)m.
	\end{align}
	%with probability at least $1 - 1/\poly(n)$.
	%The worst-case total cost of $\TwoApprox^*$ 
\end{lemma}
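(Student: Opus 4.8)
The plan is to implement the revised three-task plan described in Section~\ref{sec:overview}, using the degree oracle $\DO$ to abstract away Tasks~0 and~1, so that this lemma only has to deal with Task~2 (size estimation of the buckets $L_i$ and $H_{k,\ell}$) together with the final arithmetic that converts bucket sizes into an edge count. Concretely, recall that for a vertex $u\in L_i$ we have $\deg(u,L)\approx\alpha^i$, for $u\in H_k$ we have $\deg(u)\approx\alpha^k$, and for $u\in H_{k,\ell}$ we have $\deg(u,L)\approx\alpha^{k-\ell}$. Summing the bucket contributions, $\sum_i |L_i|\alpha^i$ counts each $L$–$L$ edge roughly twice, $\sum_{k,\ell}|H_{k,\ell}|\alpha^{k-\ell}$ counts each $H$–$L$ edge roughly once, and $\sum_k |H_k|\alpha^k$ counts $\deg(u)$ for every high-degree vertex, which double counts $H$–$H$ edges and single-counts $H$–$L$ edges. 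The right linear combination to recover $m=|E|$ up to $(1\pm O(\eps))$ is therefore
\begin{align*}
\hat m \;=\; \tfrac12\Big(\textstyle\sum_i \ell_i\,\alpha^i\Big)\;+\;\tfrac12\Big(\textstyle\sum_k h_k\,\alpha^k\Big)\;+\;\tfrac12\Big(\textstyle\sum_{k,\ell}h_{k,\ell}\,\alpha^{k-\ell}\Big),
\end{align*}
where $\ell_i,h_k,h_{k,\ell}$ are the estimated sizes: the $L$–$L$ edges get $\tfrac12\cdot 2=1$, the $H$–$H$ edges get $\tfrac12\cdot 2=1$ (from the $\sum_k h_k\alpha^k$ term), and the $H$–$L$ edges get $\tfrac12+\tfrac12=1$. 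I would make this bookkeeping precise, tracking the multiplicative $\alpha^{O(1)}=(1+\eps)^{O(1)}$ slack from the degree intervals in Definition~\ref{def:degree-part}; this is where the constant $5$ in $(1-5\eps)m$ and an additive term for buckets that are too small to sample will come from.

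The core of the proof is the two size-estimation subroutines. For a low-degree bucket $L_i$ with degree $d=\alpha^i\le\sqrt{\ol m}$, a bucket only matters if $|L_i|\gtrsim \ol m/(d\log n)$ (otherwise its total contribution $|L_i|\alpha^i$ is $O(\eps\ol m/\log n)$ and can be absorbed into the additive error), so I would draw $\widetilde O(nd/\ol m)$ uniform vertices, test membership in $L_i$ with $\DO_{\text{low}}(\cdot,i)$ at cost $\alpha^{s-i}=\widetilde\Theta(\sqrt{\ol m}/d)$ each, and use a Chernoff bound to get a $(1+\eps)$-multiplicative estimate $\ell_i$; the cost per bucket is $\widetilde O(nd/\ol m)\cdot\widetilde O(\sqrt{\ol m}/d)=\widetilde O(n/\sqrt{\ol m})$. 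For a high-degree bucket $H_{k,\ell}$ (and similarly the coarser $H_k$), uniform sampling is too weak, so instead I would exploit non-locality: design a procedure $\CheckH$ that takes a random set $\bS\subseteq[n]$ of size $\widetilde\Theta(n/\sqrt{\ol m})$ and, with $\widetilde O(1)$ independent-set queries plus $\DO_{\text{high}}$ calls, decides whether $\bS$ contains a vertex of $H_{k,\ell}$; by Lemma~\ref{lem:indset} a random set of size $n/(\sqrt{\ol m}\log n)$ sampled at this rate is an independent set w.h.p.\ when restricted to low-degree vertices, so a sampled high-degree vertex $u\in H_{k,\ell}$ reveals itself by adding a random subset of its neighbors and using $\BinarySearch$ to locate one. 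Running $\CheckH$ on $\widetilde O(d/\sqrt{\ol m})\le\widetilde O(\sqrt{\ol m})$ independent fresh sets $\bS$ and counting successes gives a $(1+\eps)$-estimate $h_{k,\ell}$, again at total cost $\widetilde O(n/\sqrt{\ol m})$ per bucket. Since there are only $\beta\cdot(\tau+1)+\,(s+1)=\poly(\log n,1/\eps)$ buckets, the aggregate cost is $(n/\sqrt{\ol m})\cdot\poly(\log n,1/\eps)$, and a union bound over all buckets (each failing with probability $1/\poly(n)$, with $\BinarySearch$ invoked at $\delta=1/\poly(n)$) gives overall success probability $\ge 1-1/n^2$.

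The main obstacle I anticipate is the correctness of the high-degree procedure $\CheckH$ under the subtleties flagged in Section~\ref{sec:overview}: a sampled high-degree vertex $u$ may have all its neighbors also high-degree, in which case a random subset of $[n]$ at rate $1/\sqrt{\ol m}$ already fails to be independent regardless of $u$, so membership in $H_{k,\ell}$ cannot be read off by the naive "is $\bS$ independent but $\bS\cup\{u\}$ not" test — this is exactly why the partition indexes $H_{k,\ell}$ by the number of \emph{low-degree} neighbors of $u$ rather than by $\deg(u)$ alone. Getting the $\CheckH$ experiment to zero in on $\deg(u,L)\approx\alpha^{k-\ell}$ (using $\DO_{\text{high}}$ to confirm the bucket once a low-degree neighbor is found, and choosing the sampling rate so that the event "$\bS$ is independent and hits exactly the low-degree neighborhood of $u$" has a detectable probability gap between the yes- and no-cases) is the delicate part, and is presumably handled in the dedicated subroutines of Section~\ref{sec:estimate}. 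A secondary obstacle is ensuring the three bucket-sum estimates compose into the claimed one-sided guarantee \eqref{eq:guarantees} rather than a symmetric $(1\pm O(\eps))$ bound — the upper bound $\hat m\le(1+\eps)m$ should follow because every edge is counted with weight at most $1$ up to the $(1+\eps)$ estimation slack and the $(1+\eps)$ interval slack can be made to cancel in the $\alpha^{i}/\alpha^{k}/\alpha^{k-\ell}$ weights by choosing representative values at the appropriate end of each interval, while the lower bound loses a few factors of $(1+\eps)$ and the $O(\eps\ol m/\log n)$ from the dropped small buckets.
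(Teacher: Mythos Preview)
Your overall plan matches the paper's: estimate $|L_i|$ by uniform sampling plus calls to $\DO_{\text{low}}$, estimate $|H_{k,\ell}|$ by a non-local procedure, and combine via exactly the formula $\hat m=\tfrac12(\hat m_1+\hat m_2+\hat m_3)$ with $\hat m_1=\sum_i\kappa_i\alpha^i$, $\hat m_2=\sum_k\gamma_k\alpha^k$, $\hat m_3=\sum_{k,\ell}\gamma_{k,\ell}\alpha^{k-\ell}$. The low-degree part and the final arithmetic are essentially as in the paper.

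The gap is in your $\CheckH$. You claim a procedure that, given a random $\bS$ of size $\widetilde\Theta(n/\sqrt{\ol m})$, decides with $\widetilde O(1)$ queries whether $\bS\cap H_{k,\ell}\ne\emptyset$. As stated this does not work: querying $\IS(\bS)$ and running $\BinarySearch$ finds \emph{some} edge in $\bS$, but the endpoint you locate is biased toward the highest-degree buckets, and a vertex of $H_{k,\ell}$ present in $\bS$ need not be incident to the edge you find (or to any edge inside $\bS$ at all). You cannot afford to test every vertex of $\bS$ against $\DO_{\text{high}}$ either. The paper sidesteps this by not trying to answer ``does $\bS$ meet $H_{k,\ell}$'' at all. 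Instead it samples \emph{two} sets at different rates, a very sparse $\bS$ (rate $p\approx\sqrt{\ol m}/(\eta n^2)$, tuned to the target size $\eta n\approx|H_{k,\ell}|$) and a second set $\bT$ at rate $q\approx 1/\alpha^{k}$, and counts the event ``$\bT$ is independent, $\bS\cup\bT$ is not, $\BinarySearch$ returns an edge with an endpoint $u^*\in\bS$ satisfying $\DO_{\text{high}}(u^*,k,\ell)=1$, and $\{u^*\}\cup\bT$ is not independent''. The rate $q$ is calibrated so that for $u^*\in H_k$ the probability $\bT$ hits $\Gamma(u^*)$ is $\Theta(\eps/\log n)$ independent of which $H_k$ we are in; this is what makes the count proportional to $|H_{k,\ell}|$ rather than degree-biased. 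A geometric search over $\eta$ then yields the estimate $\gamma_{k,\ell}$.

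A smaller point: your ``main obstacle'' paragraph is misplaced. The worry that a high-degree $u$ may have only high-degree neighbors, so that a set sampled at rate $1/\sqrt{\ol m}$ is already non-independent before $u$ is added, is the obstruction to \emph{implementing} the degree oracle (Lemma~\ref{lem:oracle-implementation}), not to using it. In the present lemma $\DO_{\text{high}}$ is given to you and answers $H_{k,\ell}$-membership for any single vertex at unit cost; the only difficulty here is the sampling/amplification issue above.
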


%We use $\cost(\ALG)$ to denote the worst-case total cost of $\ALG$.

We point out that, because $(\ol{m},\eps)$-degree partitions are \emph{not} unique,
${\TwoApprox}^*$ in Lemma \ref{lem:deg-oracle-access} needs to work with an $(\ol{m},\eps)$-degree oracle with \emph{any} underlying $(\ol{m},\eps)$-degree
partition (as long as it satisfies Definition \ref{def:degree-part}).
Lemma \ref{lem:oracle-implementation} below says that one can simulate a degree oracle efficiently using the independent set oracle.

\def\simD{\texttt{Sim-D}} \def\rr{\mathbf{r}}
\def\simDh{\simD_{\texttt{high}}}
\def\simDl{\simD_{\texttt{low}}}

\begin{lemma}[Simulation of degree oracles]\label{lem:oracle-implementation}
	Let $\eps\in (0,1)$ and $n,\ol{m}$ be positive integers. 
	There~are a  positive integer $q=q(\eps, n,\ol{m})$ and a pair of \emph{deterministic} algorithms \emph{$\simDl$}
	and \emph{$\simDh$},~where
	\emph{$\simDl$}$\hspace{0.04cm}(v,i,G,r)$ takes as input a vertex $v\in [n]$,
	$i\in [0:s]$, access to the independent set~\mbox{oracle}~of a graph $G=([n],E)$ with $1\le |E|\le \ol{m}$,
	and a string $r\in \{0,1\}^q$;
	\emph{$\simDh$}$\hspace{0.04cm}(v,k,\ell,G,r)$ takes~the same inputs but has 
	$i$ replaced by $k\in [s+1:\beta]$ and $\ell\in [0:\tau]$.
	Both algorithms output a value in $\{0,1\}$ and together have the following performance guarantee:
	\begin{flushleft}\begin{enumerate}
			\item \emph{$\simDl$}$\hspace{0.04cm}(v,i,G,r)$ makes $\alpha^{s-i}\cdot \poly(\log n,1/\eps)$ queries to $\IS$ and \emph{$\simDh$}$\hspace{0.04cm}(v,k,\ell,G,r)$ makes $\poly(\log n,1/\eps)$ queries to $\IS$.
			\item Given any graph $G$ with $1\le |E|\le \ol{m}$, when $\rr\sim \{0,1\}^q$ is drawn uniformly at random,
			\emph{$\simDl$}$\hspace{0.04cm}(v,i,G,\rr)$ viewed as a map from $[n]\times [0:s]\rightarrow \{0,1\}$
			and \emph{$\simDh$}$\hspace{0.04cm}(v,k,\ell,G,r)$ viewed as a map from $[n]\times [s+1:\beta]\times [0:\tau]\rightarrow \{0,1\}$ together form an $(\ol{m},\eps)$-degree\\ oracle of $G$ with probability at least $1-1/n^2$
			\emph{(}over the randomness of $\rr$\emph{)}.   
	\end{enumerate}\end{flushleft}
	%Let $\emph{\ALG}^*$ be an algorithm
	%  that takes four inputs:~an accuracy parameter $\eps\in (0,1)$,
	%  two positive integers $n$ and $\ol{m}$, and access to both the independent~set oracle 
	%  and an $(\ol{m},\eps)$-degree oracle of a graph $G=([n],E)$ with $1\le m=|E|\le \ol{m}$.
	%%be an algorithm that runs on graphs $G=(V,E)$ with $n$ vertices and $|E|\le \ol{m}$,
	%%  via access to the independent set oracle and any $(\ol{m},\eps)$-degree oracle
	%%  and returns a number $\hat{m}$ that satisfies (\ref{eq:guarantees}) with probability at least $1-\poly(n)$. 
	%Assuming that $ {\ALG}^*$ returns a number %\hat{m}$ that satisfies (\ref{eq:guarantees})
	%  with probability at least $1-\delta(n)$,
	%  then there is an randomized algorithm $\ALG$ that runs on the same inputs
	%  but via access only to the independent set oracle of $G$ with the following performance guarantee.
	%$\ALG$ has query complexity $\cost(\ALG^*)\cdot \poly(\log n, 1/\eps)$ and also
	%  returns a number $\hat{m}$ that satisfies (\ref{eq:guarantees}) with probability at least $1-1/\poly(n)$.
\end{lemma}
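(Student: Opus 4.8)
The plan is to implement the two maps $\simDl$ and $\simDh$ by reducing each membership test ``$v \in L_i$?'' or ``$v \in H_{k,\ell}$?'' to a small number of \emph{degree-estimation subroutines} built on top of the independent set oracle, and then argue that running all of them on a shared random string $\rr$ yields, with high probability, a consistent assignment of every vertex to exactly one part of a valid $(\ol{m},\eps)$-degree partition. The key primitive, following the intuition in Section~\ref{sec:overview}, is a test that, given a vertex $v$, a target degree value $d = \alpha^j$, and a restriction set $U \subseteq [n]$ (either $U = [n]$ or $U = L$), distinguishes $\deg(v,U) \ge \alpha \cdot d$ from $\deg(v,U) \le d/\alpha$ using $\max(1, \sqrt{\ol{m}}/d) \cdot \poly(\log n, 1/\eps)$ independent-set queries. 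When $d \ge \sqrt{\ol{m}}$ this is the ``easy'' regime: sample $\bT \subseteq [n]\setminus\{v\}$ including each vertex with probability $1/d$, query $\IS(\bT)$ and $\IS(\bT \cup \{v\})$, and use the gap in the probability that $\bT$ is independent while $\bT \cup \{v\}$ is not (Lemma~\ref{lem:indset} controls the event that $\bT$ itself is independent since the induced subgraph on $\bT$ has $O(\ol{m}/d^2) = O(1)$ edges in expectation). When $d < \sqrt{\ol{m}}$, sample with probability $1/\sqrt{\ol{m}}$ instead and repeat $\sqrt{\ol{m}}/d$ times to compensate; this is exactly where the cost $\alpha^{s-i}$ charged by Definition~\ref{def:degree-oracle} is spent, matching the claimed query bound in item~1 via $\alpha^{s-i} = \Theta(\sqrt{\ol{m}}/\alpha^i)$.

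Next I would assemble $\simDh(v,k,\ell,G,r)$: first run the degree test against thresholds near $\alpha^s$ to decide whether $v$ is ``high,'' then, if so, binary-search (using the monotone family of degree tests across $j = s, s+1, \dots, \beta$) to pin down a bucket index $k$ with $\alpha^{k-1} \le \deg(v) \le \alpha^{k+1}$, and finally run the $\deg(v,L)$ version of the test across $j = k-\tau, \dots, k$ to pin down $\ell$ — except that ``$U = L$'' is not directly available, so the sampling step must itself first decide, for each sampled vertex $w$, whether $w \in L$ (again using a degree test against $\alpha^s$). The map $\simDl(v,i,G,r)$ is analogous but only needs the $\deg(v,L)$ tests, with the special case $L_0$ handled by testing $\deg(v,L) = 0$ (equivalently, whether a sufficiently large independent set structure around $v$ forces no $L$-neighbor) via $O(\sqrt{\ol{m}})$-query amplification. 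Crucially, all randomness used across all these calls — for all vertices $v$, all indices, and all internal repetitions — is read from the single shared string $\rr$, so I would fix a layout of $\rr$ into independent blocks (one block per (vertex, test-instance) pair), making each individual call deterministic in $r$ while keeping the blocks mutually independent when $\rr$ is uniform.

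For correctness (item~2), the argument has two layers. The per-call layer: for each vertex $v$ and each degree test invoked, standard Chernoff/Markov bounds (with the $\poly(\log n,1/\eps)$ repetition factor tuned to beat a $1/n^{O(1)}$ failure probability) show the test returns the correct answer whenever the true degree is outside the ambiguous window $(d/\alpha, \alpha d)$; when the true degree lies inside such a window the test may return either answer, but \emph{any} answer is consistent with \emph{some} valid degree partition because Definition~\ref{def:degree-part} uses overlapping intervals of multiplicative width $\alpha^2$ — this is precisely why the Remark after Definition~\ref{def:degree-part} emphasizes non-uniqueness. The global layer: union-bound over all $n$ vertices and all $\poly(\log n, 1/\eps)$ index choices to conclude that with probability $\ge 1 - 1/n^2$ over $\rr$, every vertex gets assigned to a part whose defining inequalities it genuinely satisfies, and (by making the binary search monotone and the ``high vs.\ low'' decision a single fixed test per vertex) each vertex is assigned to exactly one part, so the collection of fibers of the two maps genuinely partitions $[n]$ and meets all of conditions 1--3 of Definition~\ref{def:degree-part}.

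The main obstacle I expect is the entanglement in the $\deg(v,L)$ tests: estimating $\deg(v,L)$ requires recursively deciding $L$-membership of the sampled vertices, and those sub-decisions are themselves only correct outside ambiguous windows, so one must show the recursion is shallow (one level: a vertex's $H$/$L$ status depends only on a $\deg(\cdot,[n])$ test, which does not recurse) and that the errors introduced by misclassifying boundary vertices $w$ near degree $\sqrt{\ol{m}}$ only perturb $\deg(v,L)$ by a factor absorbed into the $\alpha^{\pm 1}$ slack of (\ref{exp2})/(\ref{exp3}). Equivalently, the clean way to handle this is to have $\simDl/\simDh$ \emph{first} fix, once and for all, the $H$/$L$ bipartition of $[n]$ via the per-vertex threshold test (this induces a fixed set $L$ as a function of $r$), and then run all $\deg(\cdot, L)$ estimates against \emph{that} $L$; the induced-subgraph-on-$L$ edge count is still $\le \ol{m}$, so Lemma~\ref{lem:indset} still applies, and consistency of the whole partition is immediate. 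The remaining routine work — choosing the repetition counts, verifying the query-count arithmetic $\alpha^{s-i}\cdot\poly = (\sqrt{\ol{m}}/\alpha^i)\cdot\poly$, and the union bound bookkeeping — I would defer to the full proof.
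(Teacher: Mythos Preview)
Your overall architecture matches the paper's: first fix the $H/L$ bipartition once and for all via a per-vertex threshold test at $\alpha^s$ (the paper packages this as a separate ``high-low oracle,'' Lemma~\ref{lem:high-low-simul}), then run all $\deg(\cdot,L)$ estimates relative to that fixed $L$, with all randomness laid out in a single shared string and correctness obtained by union-bounding over the overlapping-interval slack of Definition~\ref{def:degree-part}. So conceptually you are on the paper's track.

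There is, however, one real gap in the query arithmetic, and it is not routine. You write that for the $\deg(v,L)$ test ``the sampling step must itself first decide, for each sampled vertex $w$, whether $w\in L$.'' If you actually filter the whole sample $\bT$ (of size $\Theta(n/\sqrt{\ol m})$) through the high-low test, each repetition of the low-degree estimator already costs $(n/\sqrt{\ol m})\cdot\poly(\log n,1/\eps)$ independent-set queries. After the $\Theta(\sqrt{\ol m}/\alpha^i)$ repetitions you correctly budget, a single call $\simDl(v,i,\cdot)$ costs $(n/\alpha^i)\cdot\poly$ rather than the claimed $\alpha^{s-i}\cdot\poly\approx(\sqrt{\ol m}/\alpha^i)\cdot\poly$. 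Propagated into the estimation of $|L_i|$ in Lemma~\ref{lem:lowestimation}, this blows the total up to $(n^2/\ol m)\cdot\poly$ --- precisely the \cite{BHRRS18} bound the whole paper is trying to improve. The same overcount bites $\simDh$ when you try to estimate $\deg(v,L)$ for $v\in H_k$ with $k$ close to $s+1$.

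The paper sidesteps this by \emph{never} filtering $\bT$. In both $\CheckLoDegree$ and $\CheckHLFrac$ it samples $\bT\subseteq[n]\setminus\{v\}$ unrestricted, checks via two $\IS$ queries whether $\bT$ is independent but $\bT\cup\{v\}$ is not, and only in that event runs a deterministic binary search inside $\bT\cup\{v\}$ to isolate a \emph{single} neighbor $w$ of $v$; it then queries the high-low oracle on that one vertex $w$ and increments a counter iff $w\in L$. The per-repetition cost is therefore $O(\log n)+\poly(\log n,1/\eps)$ rather than $(n/\sqrt{\ol m})\cdot\poly$, and the $\alpha^{s-i}$ bound goes through. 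The accompanying probability analysis (lower-bounding the event ``$\bT$ hits exactly one $L$-neighbor of $v$, no $H$-neighbor, and is itself independent'') is where Lemma~\ref{lem:indset} and the $\eps/\log n$ scaling of the sampling probability do their work. This find-one-neighbor-then-classify trick is the one nontrivial idea your plan is missing; once you add it, the rest of your outline is correct. (A minor aside: you do not need to binary-search for the bucket index $k$ in $\simDh$, since $k$ is given as input --- you only need to check the two thresholds $\alpha^{k-1}$ and $\alpha^k$.)
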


We use Lemma \ref{lem:deg-oracle-access} and \ref{lem:oracle-implementation}
to prove Lemma \ref{lem:up-given}.\medskip  

\begin{proofof}{Lemma \ref{lem:up-given} Assuming Lemma \ref{lem:deg-oracle-access} and \ref{lem:oracle-implementation}}
	The algorithm $\TwoApprox$ $\hspace{0.04cm}(\eps, n,$ $\ol{m},G)$ 
	draws a string $\rr\sim\{0,1\}^q$ uniformly at random, where $q=q(\eps,n,\ol{m})$ as in Lemma \ref{lem:oracle-implementation}, and  simulates $\TwoApprox^*$.
	When the latter makes a query on its given degree oracle, 
	$\TwoApprox$ runs either $\simD_{\texttt{low}}$ or $\simD_{\texttt{high}}$ using $\rr$
	and uses its output to continue the simulation of $\TwoApprox$.
	The query complexity of $\TwoApprox$~can~be bounded using the total cost of $\TwoApprox^*$ and 
	complexity of $\simD_{\texttt{low}}$ and $\simD_{\texttt{high}}$.
	The error probability of $\TwoApprox$ is at most $1/n^2$ (for the probability that $\rr$ fails 
	to produce an $(\ol{m},\eps)$-degree oracle) plus $1/n^2$ (for the error probability of $\TwoApprox^*$),
	which is smaller than $1/n$.
	This finishes the proof of Lemma \ref{lem:up-given}. 
\end{proofof}\medskip\vspace{0.06cm}

We prove Lemma~\ref{lem:deg-oracle-access} in the rest of this section and then prove Lemma~\ref{lem:oracle-implementation} in Section \ref{sec:check-degrees}.

\def\EstL{\texttt{Estimate-$L_i$}}
\def\EstH{\texttt{Estimate-$H_k$}}
\def\EstHH{\texttt{Estimate-$H_{k,\ell}$}}

\subsection{Estimation of $|L_i|$  and $|H_{k,\ell}|$. }\label{sec:estimate}

Let $G=([n],E)$ be the input graph with  $1\le m=|E|\le \ol{m}$.
We are given access to the independent set oracle $\IS$ and an 
$(\ol{m},\eps)$-degree oracle $\DO=(\DO_{\text{low}},\DO_{\text{high}})$ of $G$, where we use
$P=(L_i,H_{k,\ell}:i,k,\ell)$ to denote the degree partition underlying the degree oracle $\DO$. % we are given access to.
To obtain a good estimation of $|E|$, it suffices to obtain good estimations
of cardinalities of $L_i$'s and $H_{k,\ell}$'s (the latter would also lead to good estimations
of $|H_k|$; recall that $H_k=\cup_\ell H_{k,\ell}$).
Roughly speaking, estimations of $|L_i|$'s allow us to approximately count the number of 
edges in the subgraph induced by $L$;
estimations of $|H_k|$'s allow us to approximately count the total degree of 
vertices in $H$;
estimations of $|H_{k,\ell}|$'s allow us to approximately count the number of edges between $L$ and $H$.

We describe two subroutines for estimating $|L_i|$ and $|H_{k,\ell}|$ 
in Lemma \ref{lem:lowestimation} and \ref{lem:highestimation}, respectively, 
and then use them to prove Lemma \ref{lem:deg-oracle-access}.

%These three tasks are handled by three subroutines which we describe and analyze
%  in Section~\ref{sec:estimators}.
%Their properties are stated in the following three lemmas.

\begin{lemma}[Estimation of $|L_i|$]\label{lem:lowestimation}
	Let $\eps\in (0,1)$ and $\ol{m}$ be a positive integer.
	There is a randomized algorithm that runs on graphs $G=({[n]},E)$ %\enote{this was $V$ before}
	with $1\le |E|\le \ol{m}$ 
	via access to the independent set oracle and an $(\ol{m},\eps)$-degree oracle of $G$ with
	an underlying $(\ol{m},\eps)$-degree partition $P=(L_i,H_{k,\ell}:i,k,\ell)$.
	It has total cost $ (n/\sqrt{\ol{m}}) \cdot \poly(\log n, 1/\eps)$
	and returns a number $\kappa_i$ for each $i\in [0:s]$ satisfying
	\begin{equation}\label{hehe1}
	|L_i|- \frac{\eps^2\hspace{0.03cm} \ol{m}}{\alpha^i \log^2n}\le \kappa_i\le |L_i| 
	\end{equation}
	%  and returns a number 
	%  $\hat{m} \in \N$ satisfying
	%\[ |E| -\frac{2(1+10\eps)\mbar}{\log n}\leq \mbar \leq 2(1+20\eps) |E| \]
	with probability at least $1 - 1/n^3$.
	%There exists an algorithm such that for any $i\in[k]$ and $\eps>0$, makes $O\left(\frac{n}{\sqrt{\mbar}}\right)\cdot \poly(1/\eps,\log n)$ many independent set queries to $G$ and outputs a value $\beta_i \in (0,1)$ satisfying
	%\[\beta_i n  \leq |\bL_i| \leq (1+3\eps)\cdot\left(\beta_i n +\frac{\mbar}{(1+\eps)^i\log^2 n}\right)\]
	%with probability at least $1-1/\poly(n)$.
\end{lemma}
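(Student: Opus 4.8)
The plan is a standard ``sample-and-count'' for each bucket $L_i$, where the only twists are (i) choosing the sample size so that it is affordable given that a $\DO_{\text{low}}(\cdot,i)$ query costs $\alpha^{s-i}$, (ii) rounding the estimate downward so as to meet the one-sided requirement $\kappa_i\le|L_i|$, and (iii) handling the bucket $L_0$ by estimating its complement instead. Two structural facts will drive everything. First, $\alpha^{s-1}\le\sqrt{\ol m}<\alpha^s$ gives $\alpha^s=\Theta(\sqrt{\ol m})$. Second, for $i\in[s]$ every $u\in L_i$ has $\deg(u,L)\ge\alpha^{i-1}$, so $|L_i|\alpha^{i-1}\le\sum_{u\in L}\deg(u,L)=2|E(L,L)|\le 2m\le 2\ol m$, i.e. $|L_i|\le 4\ol m/\alpha^i$. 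Writing $\eta_i:=\eps^2\ol m/(\alpha^i\log^2 n)$ for the error budget of (\ref{hehe1}), the second fact says $|L_i|\le (4\log^2 n/\eps^2)\,\eta_i$ for $i\ge 1$: a \emph{relative} error of order $\eps^2/\log^2 n$ on $|L_i|$ already beats $\eta_i$.

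For each $i\in[s]$ I set $N_i:=\lceil(n\alpha^i/\ol m)\cdot Q\rceil$ for a sufficiently large $Q=\poly(\log n,1/\eps)$ fixed at the end. If $N_i\ge n$ I query $\DO_{\text{low}}(u,i)$ for all $u\in[n]$ and output $\kappa_i=|L_i|$ exactly; this case occurs only when $\ol m\lesssim\alpha^i Q\le\alpha^s Q=O(\sqrt{\ol m}\,Q)$, i.e.\ $\ol m=\poly(\log n,1/\eps)$, in which regime the cost $n\alpha^{s-i}\le n\alpha^s=O(n\sqrt{\ol m})$ still fits the budget $(n/\sqrt{\ol m})\cdot\poly$. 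Otherwise I draw $N_i$ uniformly random vertices, let $X_i$ be the number reported in $L_i$, set $\widetilde\kappa_i:=nX_i/N_i$, and output $\kappa_i:=\max(0,\widetilde\kappa_i-\eta_i/2)$. The cost here is $N_i\cdot\alpha^{s-i}=(n\alpha^s/\ol m)\cdot Q=O((n/\sqrt{\ol m})\,Q)$, and summing over the $s+1=O(\log n/\eps)$ values of $i$ keeps the total cost $(n/\sqrt{\ol m})\cdot\poly(\log n,1/\eps)$.

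Correctness of the sampled case splits on the size of $X_i$. Note $\mathbb{E}[X_i]=N_i|L_i|/n$ and $N_i\eta_i/n=\eps^2 Q/\log^2 n$ (up to constants). If $X_i\le \eps^2 Q/(2\log^2 n)$ then $\widetilde\kappa_i\le\eta_i/2$, so $\kappa_i=0$; and unless a $e^{-\Omega(\eps^2 Q/\log^2 n)}$-probability event occurs, a Chernoff lower-tail bound forces $|L_i|\le\eta_i$ (else $\mathbb{E}[X_i]>\eps^2 Q/\log^2 n$ would make $X_i$ concentrate above the threshold), so $0\in[|L_i|-\eta_i,|L_i|]$. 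If $X_i$ exceeds the threshold, then whp $\mathbb{E}[X_i]\ge X_i/2$ is at least $\eps^2 Q/(4\log^2 n)$, large; the multiplicative Chernoff bound then gives $\widetilde\kappa_i\in[(1-\delta)|L_i|,(1+\delta)|L_i|]$ with $\delta=O(\sqrt{\log^3 n/(\eps^2 Q)})$, and choosing $Q$ large enough that $\delta\le\eps^2/(8\log^2 n)$ together with $|L_i|\le 4\ol m/\alpha^i$ yields $|\widetilde\kappa_i-|L_i||\le\eta_i/2$; hence $\kappa_i=\max(0,\widetilde\kappa_i-\eta_i/2)$ lies in $[|L_i|-\eta_i,|L_i|]$. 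Taking $Q$ with enough logarithmic factors makes every failure event have probability $\le n^{-4}$, so a union bound over $i\in[0:s]$ gives success probability $\ge 1-1/n^3$.

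The one genuinely delicate bucket is $L_0$: the bound $|L_i|\le 4\ol m/\alpha^i$ degenerates at $i=0$ (since $L_0$ contains all isolated vertices, $|L_0|$ may be as large as $n$), and a direct estimate of a near-$n$ quantity cannot be made accurate to additive $\eta_0$ with our budget. Instead I would estimate the \emph{complement} $\overline{L_0}:=[n]\setminus L_0$, which is small: $|\overline{L_0}|=\sum_{i\ge 1}|L_i|+|H|\le 4\ol m/\eps+2\beta\sqrt{\ol m}=O(\ol m\log n/\eps)$ (using $\deg(u)>\sqrt{\ol m}$ for $u\in H$ to bound each $|H_k|\le 2\sqrt{\ol m}$). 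Sampling $N_0=\Theta((n/\ol m)\,Q)$ vertices and counting those outside $L_0$ now falls in the ``small count, multiplicative Chernoff'' regime, so one can produce an over-estimate of $|\overline{L_0}|$ exceeding it by at most $\eta_0$ and set $\kappa_0=n-(\text{that over-estimate})$, giving $|L_0|-\eta_0\le\kappa_0\le|L_0|$; when $N_0\ge n$ (only for $\ol m=\poly(\log n,1/\eps)$) one simply enumerates $[n]$. The main obstacle is bookkeeping: fixing one polynomial $Q$ that simultaneously makes all the Chernoff bounds tight, keeps the per-bucket cost within $(n/\sqrt{\ol m})\cdot\poly$, and — crucially — makes the $L_0$ complement estimate accurate to $\eta_0$ without any a priori control on $|L_0|$. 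The remainder is routine.
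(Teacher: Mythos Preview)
Your approach is the same as the paper's at its core: sample vertices uniformly, test membership in $L_i$ via the degree oracle, apply concentration, then shift the empirical estimate down by half the error budget to meet the one-sided requirement $\kappa_i\le|L_i|$. The paper executes this more directly: with $N_i\approx n\alpha^i\log^5 n/(\eps^5\ol m)$ samples it uses a single additive Chernoff bound (exploiting $c_i:=|L_i|/n\le O(\ol m/(n\alpha^i))$ for the variance) to get $|\hat c_i-c_i|\le\eta_i/(2n)$ with high probability, then outputs $\kappa_i=(\hat c_i-\eta_i/(2n))n$. No case split on the empirical count is needed, and instead of exact enumeration when $N_i\ge n$, the paper just outputs $\kappa_i=0$ whenever $\eta_i\ge n$.

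Your special treatment of $L_0$ via the complement is correct and does address a real gap in the paper's write-up: the key inequality $|L_i|\alpha^{i-1}\le 2\ol m$ is vacuous at $i=0$ (isolated vertices can make $|L_0|=\Theta(n)$ even when $\ol m$ is tiny), so the paper's Chernoff step does not literally go through there. That said, the only downstream use of these estimates, in the proof of Lemma~\ref{lem:deg-oracle-access}, is $\hat m_1=\sum_{i\in[s]}\kappa_i\alpha^i$, which starts at $i=1$; so $\kappa_0$ is never actually consumed and your $L_0$ detour, while a nice observation, is not needed for the main result.
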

\begin{proof}
	Fix an $i\in [0:s]$ and let $c_i=|L_i|/n$. We show how to compute $\kappa_i$.
	If
	\begin{equation}\label{hehe3}
	\frac{\eps^2\hspace{0.03cm}\ol{m}}{\alpha^i\log^2 n}\ge n,
	\end{equation}
	then we can set $\kappa_i=0$ and it satisfies (\ref{hehe1}) trivially.
	So we assume below that the inequality above does not hold.
	%If $\alpha^{i-1}>\ol{m}$, then $L_i$ is trivially empty (since $|E|\le\ol{m}$)
	%  and we can just set $\kappa_i=0$. %
	%We assume below that $\alpha^{i-1}\le \ol{m}$.
	To estimate $c_i$  we draw 
	(the equation uses the assumption that (\ref{hehe3}) does not hold)%$n/\sqrt{\overline{m}}$ vertices uniformly at random (with replacements)
	$$
	\left\lceil\frac{n\alpha^i\log^5 n}{\eps^5\hspace{0.03cm}\ol{m}}\right\rceil= O\left(\frac{n\alpha^i\log^5 n}{\eps^5\hspace{0.03cm}\ol{m}}\right)
	$$
	vertices uniformly at random from $[n]$ (with replacements).
	For each vertex sampled, we query the degree oracle with a cost of $\alpha^{s-i}=\Theta(\sqrt{\ol{m}}/\alpha^i)$ to tell
	if it belongs to $L_i$.
	{The fraction of times that a vertex sampled belongs to $L_i$} gives us an empirical estimate $\hat{c}_i$ of $c_i$ and it follows from Chernoff bound 
	(using $c_in\cdot \alpha^{i-1}= |L_i|\cdot \alpha^{i-1}\le {2} |E|\le {2} \ol{m}$) that
	$$
	|\hat{c}_i-c_i|\le \frac{\eps^2\hspace{0.03cm}\ol{m}}{2\alpha^i n \log^2 n} $$
	with probability at least $1-\eps/n^4$.
	Setting $\kappa_i$ to be
	$$
	\kappa_i=\left(\hat{c}_i-\frac{\eps^2\ol{m}}{2\alpha^i n\log^2 n}\right)n
	$$
	would satisfy (\ref{hehe1}). The total cost for obtaining $\kappa_i$ is
	$(n/\sqrt{\ol{m}})\cdot \poly(\log n,1/\eps)$. The algorithm works on
	each $i$ and succeeds with probability at least $1-(s+1)\eps/n^4\ge 1-1/n^3$ by a 
	union bound.
\end{proof}

\begin{lemma}[Estimation of $|H_{k,\ell}|$]\label{lem:highestimation}
	Let $\eps\in (0,1)$ and $\ol{m}$ be a positive integer.
	There is a randomized algorithm that runs on $G=({[n]},E)$ with $1\le |E|\le \ol{m}$ 
	via access to the independent set~oracle and an $(\ol{m},\eps)$-degree oracle of $G$ with
	an underlying degree partition $P=(L_i,H_{k,\ell}:i,k,\ell)$.
	It has total cost $(n/\sqrt{\ol{m}}) \cdot \poly(\log n, 1/\eps)$
	and returns $\gamma_{k,\ell}$ for each $k\in [s+1:\beta]$ and $\ell\in [0:\tau]$ satisfying
	%$$
	% \frac{|H_k|}{1+\eps}-\frac{\eps\ol{m}}{\alpha^k\log^2 n}\le  \gamma_k\le |H_k|%\big|\le \frac{\eps \ol{m}}{\alpha^i \log^2n}.
	\begin{equation}\label{hehe2}
	\frac{|H_{k,\ell}|}{(1+\eps)^4}-O\left(\frac{\eps^4\hspace{0.03cm}\ol{m}}{\alpha^{k}\log^3 n}\right)\le \gamma_{k,\ell}\le |H_{k,\ell}|%\big|\le \frac{\eps \ol{m}}{\alpha^i \log^2n}.
	\end{equation}
	%  and returns a number 
	%  $\hat{m} \in \N$ satisfying
	%\[ |E| -\frac{2(1+10\eps)\mbar}{\log n}\leq \mbar \leq 2(1+20\eps) |E| \]
	with probability at least $1 - 1/n^3$.
	%\red{Right now we actually proved 
	%\begin{equation}\label{hehe333}
	%\frac{|H_{k,\ell}|}{(1+\eps)^4}-O\left(\frac{\eps^4\hspace{0.03cm}\ol{m}}{\alpha^{k}\log^3 n}\right)\le \gamma_{k,\ell}\le |H_{k,\ell}|%\big|\le \frac{\eps \ol{m}}{\alpha^i \log^2n}.
	%\end{equation}
	%so everything needs to be updated.}
\end{lemma}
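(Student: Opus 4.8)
The plan is to mirror the strategy sketched in Section~\ref{sec:overview} for estimating the size of a high-degree bucket, but now adapted to the finer buckets $H_{k,\ell}$. Fix $k\in[s+1:\beta]$ and $\ell\in[0:\tau]$ and write $d=\alpha^k\ge\sqrt{\ol{m}}$ for the degree scale of the bucket. As in Lemma~\ref{lem:lowestimation}, first dispose of the trivial case: if $\ol{m}/(\alpha^k\log^3 n)$ times the relevant polylog factor already exceeds $n$ (equivalently, if even a perfect estimate is dominated by the allowed additive error), we simply output $\gamma_{k,\ell}=0$ and (\ref{hehe2}) holds vacuously. So assume $|H_{k,\ell}|$ could be as large as $\Theta(\ol{m}/d)$ — otherwise the bucket contributes negligibly.

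The core is a \textbf{membership-detection subroutine}: given a random set $\bS\subseteq[n]$ obtained by including each vertex independently with probability roughly $1/(\sqrt{\ol{m}}\log n)$ (so $|\bS|\approx n/(\sqrt{\ol{m}}\log n)$, safely below the critical threshold $n/\sqrt{m}$), decide whether $\bS$ contains a vertex of $H_{k,\ell}$ using $\poly(\log n,1/\eps)$ queries. This is where non-locality is exploited. To test whether a \emph{specific} candidate $u\in\bS$ lies in $H_{k,\ell}$, I would query the degree oracle $\DO_{\text{high}}(u,k,\ell)$, which costs $1$. But we cannot afford to test every $u\in\bS$ individually, since $|\bS|$ may be $\gg\poly(\log n,1/\eps)$. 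Instead, the idea is: first use an independent-set-query experiment (sampling a further random subset $\bT$ of $[n]$ at rate $\approx 1/d$ and checking whether $\bT$ is independent but $\bT\cup\{u\}$ is not — exactly the gap phenomenon from the overview) to identify, with $\poly(\log n,1/\eps)$ queries, whether \emph{some} vertex of $\bS$ has degree $\approx d$; then run \texttt{Binary-Search}-style localization (Lemma~\ref{lem:binary-search}) to pull out an actual witness vertex $u\in\bS$ with an edge into $\bT$, and finally confirm $u\in H_{k,\ell}$ via one degree-oracle query. Careful union bounding and repetition of the $\bT$-experiment $\poly(\log n,1/\eps)$ times drives the per-$\bS$ error down to $1/\poly(n)$.

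Given the subroutine, the estimator is a standard sampling argument: draw $\bS$ and run the detector $N$ times, where $N=\widetilde{O}(d/\sqrt{\ol{m}})\le\widetilde{O}(n/\sqrt{\ol{m}})$ as computed in Section~\ref{sec:overview}, let $\hat{p}$ be the empirical fraction of $\bS$'s that are detected to contain an $H_{k,\ell}$-vertex, and convert $\hat{p}$ into an estimate of $|H_{k,\ell}|$ via the (approximate) relation $\Pr[\bS\cap H_{k,\ell}\ne\emptyset]\approx |H_{k,\ell}|/(\sqrt{\ol{m}}\log n)$ when $|H_{k,\ell}|$ is not too large, using inclusion–exclusion (Bonferroni) to control the gap: this is the source of the $(1+\eps)^4$ multiplicative slack and of the $O(\eps^4\ol{m}/(\alpha^k\log^3 n))$ additive slack in (\ref{hehe2}). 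A Chernoff bound on the $N$ trials, combined with the choice of $N$, gives the claimed accuracy with failure probability $1/\poly(n)$; a union bound over all $(k,\ell)$ pairs (there are $O(\beta\tau)=\poly(\log n,1/\eps)$ of them) yields the $1-1/n^3$ guarantee. The total cost is $N\cdot\poly(\log n,1/\eps)=(n/\sqrt{\ol{m}})\cdot\poly(\log n,1/\eps)$.

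\textbf{Main obstacle.} The hard part is the membership-detection subroutine, specifically making it work \emph{uniformly} over all degree scales $k$ and all values of $\ell$ — i.e., over all possible fractions of neighbors of $u$ lying in $L$. The delicate point (the "more serious issue" flagged in the overview) is that when we sample $\bT$ at rate $1/d$ to catch a neighbor of a candidate $u\in H_{k,\ell}$, that neighbor $v$ may itself be high-degree, in which case $\bT$ almost surely also contains a neighbor of $v$ and fails the "$\bT$ independent" test — so the detection probability depends on $\deg(v)$ and not just on $u$'s membership. The resolution is precisely why the buckets are indexed by $\ell$: $u\in H_{k,\ell}$ pins down $\deg(u,L)$ (the number of \emph{low-degree} neighbors) up to an $\alpha$-factor, and by sampling $\bT$ only from $L$-vertices (or conditioning the analysis on the low-degree part of $u$'s neighborhood), the detection event becomes well-behaved. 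Setting up the right conditioning, getting the probability gaps between "$|H_{k,\ell}|\ge(1+\eps)r$" and "$|H_{k,\ell}|\le r$" to be an $\eps/\polylog$ additive gap that survives $\poly(\log n,1/\eps)$ amplification, and threading the errors from \texttt{Binary-Search} and from the degree-oracle queries through all of this, is the technical heart of the lemma.
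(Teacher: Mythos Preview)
Your overall architecture is close to the paper's: sample a sparse set $\bS$ and an auxiliary set $\bT$, use independent-set queries plus $\BinarySearch$ to locate a candidate edge, and then a single $\DO_{\text{high}}$ call to confirm membership in $H_{k,\ell}$. The paper packages this slightly differently---rather than directly estimating $\Pr[\bS\cap H_{k,\ell}\neq\emptyset]$ and inverting, it builds a comparator $\HiDegEvent^*(k,\ell,\eta,G)$ that distinguishes $|H_{k,\ell}|\le\eta n$ from $|H_{k,\ell}|\ge\alpha^3\eta n$ (by counting how often a carefully defined event fires over $N$ fresh draws of $(\bS,\bT)$) and then runs a geometric search on $\eta$ from $1$ down to $\eps^4\ol{m}/(\alpha^k n\log^3 n)$. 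Both framings work; the threshold search sidesteps the linearization $\Pr[\bS\cap H_{k,\ell}\neq\emptyset]\approx p\,|H_{k,\ell}|$ you would otherwise have to justify, and produces the $(1+\eps)^4$ slack and the additive term in (\ref{hehe2}) directly from the factor $\alpha^3$ and the stopping threshold, respectively.

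Where your proposal goes astray is the ``main obstacle'' paragraph. The worry that a neighbor $v$ of $u$ might itself be high-degree and cause $\bT$ to fail the independence test is the issue for the \emph{low}-degree subroutine (Task~1 for $L_i$), not here. In the high-degree regime you have $d=\alpha^k\ge\alpha^s\approx\sqrt{\ol{m}}$, so sampling $\bT$ at rate $q\approx\eps/(d\log n)\le\eps/(\sqrt{\ol{m}}\log n)$ already gives $\ol{m}\,q^2=O(\eps^2/\log^2 n)$, and Lemma~\ref{lem:indset} makes $\bT$ independent with probability $1-o(\eps)$ regardless of who its members are. There is no need to restrict $\bT$ to $L$-vertices or to condition on the low-degree part of $u$'s neighborhood; the index $\ell$ plays \emph{no} role in the detection mechanism and enters only through the single lookup $\DO_{\text{high}}(u,k,\ell)$ after a candidate $u$ is found. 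The actual technical care in the paper's proof is elsewhere: controlling overcounting when $\bS$ is not itself independent (hence the extra check that $\{u^*\}\cup\bT$ is not independent in step 1(c)iii), and verifying that the ``many'' and ``few'' success probabilities are separated by an $\Omega(\eps)$ multiplicative factor so that a Chernoff bound with $N=(n/\sqrt{\ol{m}})\cdot\poly(\log n,1/\eps)$ trials suffices. If you pursue your direct-estimation variant, these are the details to nail down, not the $L$-restriction.
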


%\begin{lemma}[Estimation of $|H_{k,\ell}|$'s]\label{lem:high-to-low-event}
%Let $\eps\in (0,1)$ and $\ol{m}$ be a positive integer.
%There is an algorithm that runs on graphs $G=(V,E)$ with $|E|\le \ol{m}$,
%  via access to the independent set oracle and an $(\ol{m},\eps)$-degree oracle of $G$ with
%  an underlying degree partition $P=(L_i,H_{k,\ell})$.
%The algorithm has total cost $O(n/\sqrt{\ol{m}}) \cdot \poly(\log n, 1/\eps)$
%  and returns a number $\gamma_{k,\ell}$ for each $k\in [s+1:\beta]$ and $\ell\in [0:\tau]$ such that
%$$
%\frac{|H_{k,\ell}|}{1+\eps}-\frac{\eps\ol{m}}{\alpha^{k-\ell}\log^2 n}\le \gamma_{k,\ell}\le |H_{k,\ell}|%\big|\le \frac{\eps \ol{m}}{\alpha^i \log^2n}.
%$$
%  and returns a number 
%  $\hat{m} \in \N$ satisfying
%\[ |E| -\frac{2(1+10\eps)\mbar}{\log n}\leq \mbar \leq 2(1+20\eps) |E| \]
%with probability at least $1 - 1/\poly(n)$.
%\end{lemma}

We delay the proof of Lemma \ref{lem:highestimation} to Section \ref{sec:finalproof}
but first use it to prove Lemma \ref{lem:deg-oracle-access}\medskip\vspace{0.06cm}

\begin{proofof}{Lemma~\ref{lem:deg-oracle-access} assuming Lemma  \ref{lem:highestimation}}
	Given $G$ and $P$, we let $m_1$, $m_2$ and $m_3$ denote
	$$
	m_1=\sum_{u\in L} \deg(u,L),\quad m_2=\sum_{u\in H} \deg(u)\quad \text{and}\quad
	m_3=\sum_{u\in H} \deg(u,L).
	$$
	Then we have $m=|E|=(m_1+m_2+m_3)/2$. The algorithm $\TwoApprox^*$ simply~runs the subroutines described in Lemma \ref{lem:lowestimation} and \ref{lem:highestimation} 
	to obtain $\kappa_i$'s and $\gamma_{k,\ell}$'s.
	Letting $\gamma_k=\sum_{{\ell\in[0:\tau]}} \gamma_{k,\ell}$, %\enote{I think we should be explicit for the summands in $\gamma_k$, since there are subtle differences between what we will include for $\hat{m}_2$ and $\hat{m}_3$.}
	it then outputs $\hat{m}=(\hat{m}_1+\hat{m}_2+\hat{m}_3)/2$, where
	$$
	\hat{m}_1= \sum_{i\in [s]} \kappa_i\cdot \alpha^i,\qquad
	\hat{m}_2=\sum_{k\in [s+1:\beta]} \gamma_k\cdot \alpha^k\qquad \text{and}\qquad
	\hat{m}_3=\sum_{\substack{k\in [s+1:\beta]\\ \ell\in [0:\tau-1]}} \gamma_{k,\ell}\cdot \alpha^{k-\ell}.
	$$
	Assuming that $\kappa_i$'s satisfy (\ref{hehe1}) and $\gamma_{k,\ell}$'s satisfy (\ref{hehe2})
	(which hold with probability at least $1-2/n^{{3}}$ %\enote{Previously, the exponent was 2, which is correct, but could confuse}
	by Lemma \ref{lem:lowestimation} and Lemma  \ref{lem:highestimation}),
	we show in the rest of the proof that $\hat{m}$ satisfies (\ref{eq:guarantees}).
	This finishes the proof of the lemma since
	the worst-case total cost of $\TwoApprox^*$ can be bounded using Lemma \ref{lem:lowestimation} and Lemma \ref{lem:highestimation}.  
	
	First for  $m_1$, we have from (\ref{hehe1}) and the definition of
	$(\ol{m},\eps)$-degree partitions that
	$$
	\frac{m_1}{1+\eps}-O\left(\frac{\eps\hspace{0.03cm}\ol{m}}{\log n}\right)\le \sum_{{i \in [s]}} |L_i|\cdot \alpha^i-{O\left(\frac{\eps^2\hspace{0.03cm}\ol{m}}{\alpha^i\log^2 n}\right) s \alpha^i}
	\le \hat{m}_1\le \sum_{{i\in[s]}} |L_i|\cdot \alpha^i \le  (1+\eps) m_1 .
	$$%\enote{everything was fine here, but I figured we would be more explicit since we had the middle step anyways}
	Next, from (\ref{hehe2}) combined with the fact that $|H_k|=\sum_{{\ell\in[0:\tau]}} |H_{k,\ell}|$ and  
	{$\tau=\Theta(\log(\log n/\eps)/\eps)=O(\log \log n/\eps^2)$} %\enote{Changing so $\alpha^{\tau} = \Theta(\log n/ \eps)$} 
	we have that
	$$
	\frac{|H_k|}{(1+\eps)^4}-O\left( \frac{\eps^2 \hspace{0.03cm} \ol{m}}{\alpha^k\log^2 n}\right)\le 
	\frac{|H_k|}{(1+\eps)^4}-O\left(\tau \cdot \frac{\eps^4 \hspace{0.03cm} \ol{m}}{\alpha^k\log^3 n}\right)\le \gamma_k=\sum_{{\ell\in[0:\tau]}}\gamma_{k,\ell} \le |H_k|.
	$$
	As a result, we have from the definition of $(\ol{m},\eps)$-degree partitions that
	\begin{align*}
		\frac{m_2}{(1+\eps)^5}-O\left(\frac{\eps\hspace{0.03cm}\ol{m}}{\log n}\right) &\le \sum_{{k\in[s+1:\beta]}} \frac{|H_k|}{(1+\eps)^4}\cdot \alpha^k-{O\left(\frac{\eps^2\ol{m}}{\alpha^k\log^2n}\right) {(\beta-s-1)} \alpha^k} \\
		&\le  \hat{m}_2\le \sum_{k\in[s+1:\beta]} |H_k|\cdot \alpha^k\le (1+\eps) m_2.
	\end{align*}
	Finally the following upper bound for $\hat{m}_3$ follows from (\ref{hehe2}):
	$$
	\hat{m}_3\le \sum_{\substack{k\in [s+1:\beta]\\ \ell\in [0:\tau-1]}} |H_{k,\ell}|\cdot \alpha^{k-\ell}
	\le (1+\eps) m_3.
	$$
	For a lower bound  note that ${\sum_{k \in [s+1:\beta]}} \alpha^{k-1} |H_k|\le 2|E|\le 2\ol{m}$.\vspace{0.1cm} %\enote{Before, the assumption was merely that $\alpha^{k-1} |H_k| \leq 2|E|$, and this can be strengthened to the sum over $k$, which removes one step below -- then, we can take $\alpha^{\tau} = \Theta(\log n / \eps)$, rather than the previously given $\Theta(\log^2 n/\eps)$ bound.}
	Together with {$\alpha^\tau =\Theta(\log n/\eps)$} we have
	$$
	m_3\le \sum_{\substack{k\in [s+1:\beta]\\ \ell\in [0:\tau-1]}} |H_{k,\ell}|\cdot \alpha^{k-\ell+1}
	+ \sum_{{k \in [s+1:\beta]}} |H_k|\cdot \alpha^{k-\tau+1}
	\le \sum_{\substack{k\in [s+1:\beta]\\ \ell\in [0:\tau-1]}} |H_{k,\ell}|\cdot \alpha^{k-\ell+1}
	+ O\left(\frac{\eps \hspace{0.03cm}\ol{m}}{\log n}\right).\vspace{0.1cm}
	$$
	As a result, we have from (\ref{hehe2}) that 
	\begin{align*}
		\hat{m}_3\ge \sum_{\substack{k\in [s+1:\beta]\\ \ell\in [0:\tau-1]}} \frac{|H_{k,\ell}|}{(1+\eps)^4} \cdot \alpha^{k-\ell}
		-O\left(\frac{\eps\hspace{0.03cm}\ol{m}}{\log n}\right)
		&\ge \frac{1}{(1+\eps)^5}\cdot \left( {m_3} -O\left(\frac{\eps\hspace{0.03cm}\ol{m}}{\log n}\right)\right)-O\left(\frac{\eps\hspace{0.03cm}\ol{m}}{\log n}\right).\\[-1.8ex]
		&\ge \frac{m_3}{(1+\eps)^5}-O\left(\frac{\eps\hspace{0.03cm}\ol{m}}{\log n}\right).
	\end{align*}
	It follows that 
	$$
	(1-5\eps) m -O\left(\frac{\eps\hspace{0.03cm}\ol{m}}{\log n}\right)\le 
	\frac{m}{(1+\eps)^5}-O\left(\frac{\eps\hspace{0.03cm}\ol{m}}{\log n}\right)\le \hat{m}\le (1+\eps) m.
	$$
	This finishes the proof of the lemma.
\end{proofof}

\subsection{Proof of Lemma \ref{lem:highestimation}}\label{sec:finalproof}
%red{need to update using current notation}

In this subsection we will prove Lemma~\ref{lem:highestimation}. Specifically, 
fixing any $k\in [s+1:\beta]$ and $\ell\in [0:\tau]$ 
we will design a procedure to approximate the size of $H_{k,\ell}$. 
Our procedure $\HiDegBucket^*$ for this purpose uses a subroutine called $\HiDegEvent^*$.
Its performance guarantee  is proved in the following lemma:

\medskip
\begin{figure}[t!]
	\begin{framed}
		\noindent Subroutine $\HiDegEvent^*\hspace{0.05cm}(k,\ell,{\eta},G)$
		\begin{flushleft}
			\noindent {\bf Input:} 
			%Positive integers $\ol{m}$ and $n$, $\eps\in (0,1)$,
			%$k\in [s+1:\beta]$ and $\ell\in [0:\tau]$,  
			Integers $k\in [s+1:\beta]$ and $\ell\in [0:\tau]$, a parameter ${\eta \in [0,1]}$ satisfying (\ref{para1}), 
			and access to both the independent set oracle $\IS$ and an $(\ol{m},\eps)$-degree
			oracle $\DO$ (with underlying degree partition $P=(L_i,H_{k,\ell}:i,k,\ell)$) of a  graph $G=([n],E)$ with $1\le m=|E|\le \ol{m}$.\\ 
			%, parameters, $\eps>0$,  and  $\gamma\ge \frac{\mbar}{\sqrt{\mbar}(1+\eps)^in\log^2 n}$ .\\
			\noindent{\bf Output:} Either ``few'' or  ``many.''
			\begin{enumerate}
				\item Initialize a counter $c\leftarrow 0$, and repeat the following $N$ times: 
				\begin{enumerate}
					\item Sample an $\bS\subseteq [n]$ where each vertex is included with probability $p$  independently.
					\item Sample an $\bT\subseteq [n]$ where each vertex is included with probability $q$ independently. 					
					\item If \ignore{$\bS$ and }$\bT$ {is an}\ignore{are} independent set and $\bS\cup\bT$ is not an independent set
					(via $\IS$)\vspace{0.12cm}
					\begin{enumerate}
						
						\item Run $\BinarySearch\hspace{0.04cm}(n,G,\bS\cup \bT,\eps^7/n^2)$ to find an edge $(u,v)$ in $\bS\cup \bT$.\vspace{0.06cm}
						\item Query $\DO_{\text{high}}(u,k,\ell)$ and $\DO_{\text{high}}(v,k,\ell)$.\vspace{0.06cm} 
						\item \ignore{If one of $\{u,v\}$ lies in both $H_{k,\ell}$ and $\bS$ 
							and the other lies in $\bT$,\footnotemark\ then $c\leftarrow c+1$.} {If $u^* \in \{ u, v\}$ lies in $\bS$ and $H_{k, \ell}$, and $\{u^*\} \cup \bT$ is not an independent set (via $\IS$), let $c \leftarrow c+1$.}
					\end{enumerate}
				\end{enumerate}
				\item If $ {c\ge h}$, {\bf return} ``many;'' otherwise {\bf return} ``few.'' 
			\end{enumerate}
		\end{flushleft}\vskip -0.14in
	\end{framed}\vspace{-0.2cm}
	\caption{Description of the $\HiDegEvent^*$ subroutine.} \label{fig:HiDegEvent}
\end{figure}
%\enote{I actually think that this was a bug -- basically, we were only going into the loop when $\bS$ is an independent set; however, in the analysis, we never take this into account.}

\ignore{\footnotetext{In more details, if either (1) $u\in H_{k,\ell}$, 
		$u\in \bS$ and $v\in \bT$ or (2) $v\in H_{k,\ell}$,
		$v\in \bS$ and $u\in \bT$ we increment the~counter $c$.
		Note that in both cases it is fine for $u$ or $v$
		to lie in $\bS\cap \bT$.}}

\begin{lemma} \label{lem:HighDegEventLemma} 
	There is a randomized algorithm $\emph{\HiDegEvent}^*\hspace{0.04cm}(k, \ell, {\eta},G)$ that takes 
	the following inputs\footnote{For convenience we skip $n,\ol{m}$ and $\eps$ as
		inputs of $\HiDegEvent^*$ and $\HiDegBucket^*$.}: integers $k\in [s+1:\beta]$ and 
	$\ell\in [0:\tau]$, a parameter ${\eta} \in [0,1]$ satisfying\footnote{Note that the left hand side of (\ref{para1}) 
		is smaller than $1$ given that $\alpha^k>\alpha^s=\Theta(\sqrt{\ol{m}})$ and $\smash{\ol{m}\le {n\choose 2}}$.}   
	\begin{equation}\label{para1}\frac{\eps^4\ol{m}}{\alpha^k n\log^3 n}\le {\eta}\le 1,\end{equation}
	and access to the independent set oracle $\IS$ and an $(\ol{m},\eps)$-degree oracle $\DO$
	of $G=([n],E)$ satisfying $1\le m=|E|\le \ol{m}$.
	%Let $\gamma>0,\eps>0$ and $i\in[k]$ be parameters such that $\gamma\ge \frac{\mbar}{\sqrt{\mbar}(1+\eps)^i n\log^2 n}$.
	\hspace{-0.04cm}The algorithm $\emph{\HiDegEvent}^* $
	has a total cost of $(n/\sqrt{\ol{m}})\cdot \poly(\log n,1/\eps)$ 
	and has the following performance guarantee. Let
	$P=(L_i,H_{k,\ell}:i,k,\ell)$ denote the degree partition of the given degree oracle $\DO$ of $G$. Then
	%where with probability at least $1-\frac{1}{\poly(n)}$:
	\begin{enumerate}
		\item If $|H_{k,\ell}| \le {\eta} n$, then the algorithm  outputs ``few'' with probability at least
		$1-\eps^4/n^4$;\vspace{-0.1cm}
		\item If $|H_{k,\ell}|\ge \alpha^3{\eta} n$, then the algorithm outputs ``many''
		with probability at least $1-\eps^4/n^4$.\vspace{0.1cm}
	\end{enumerate}
	%	The number of queries made by \HiDegEvent\ is $O\left(\frac{n}{\sqrt{\mbar}}\right)\cdot\poly(1/\epsilon,\log n)$.
\end{lemma}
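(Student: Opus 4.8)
The plan is to fix the four parameters $p,q,N,h$ left unspecified in $\HiDegEvent^*$ and then analyze one iteration of its loop. I would choose $p$ and $q$ so that $pn$ (the expected size of $\bS$) and $qn$ (that of $\bT$) both lie a $\poly(\log n,1/\eps)$ factor below the \emph{critical threshold} $n/\sqrt{\ol m}$: roughly $p=\Theta\big(1/(\sqrt{\ol m}\cdot\poly(\log n,1/\eps))\big)$ and $q=\Theta\big(1/(\alpha^{k}\cdot\poly(\log n,1/\eps))\big)$. The first choice makes $\bS$ independent except with probability $\ol m p^{2}=o(1)$ (Lemma~\ref{lem:indset}); the second keeps $\ol m q^{2}=o(1)$ while leaving $q$ large enough that a vertex of degree $\Theta(\alpha^{k})$ has a neighbor in $\bT$ with probability $\Theta(q\alpha^{k})$ — here I use $\alpha^{k}>\sqrt{\ol m}$, valid since $k\ge s+1$. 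I would set $N=(\alpha^{k}/\sqrt{\ol m})\cdot\poly(\log n,1/\eps)$. Since each iteration makes $O(1)$ queries to $\IS$ and $\DO_{\text{high}}$ plus one call to \BinarySearch{} (which makes $O(\log(n/\eps))$ queries by Lemma~\ref{lem:binary-search}), the total cost is $N\cdot O(\log(n/\eps))=(\alpha^{k}/\sqrt{\ol m})\cdot\poly(\log n,1/\eps)$, and this is $(n/\sqrt{\ol m})\cdot\poly(\log n,1/\eps)$ because $\alpha^{k}\le\alpha^{\beta}=O(n)$. Finally $h$ would be placed between the expected counter values in the two cases.

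The core of the argument is a two-sided estimate of $\pi:=\Pr[\text{one iteration increments }c]$. Let $\mathcal{R}$ be the event that some $u^{\ast}\in\bS\cap H_{k,\ell}$ has a neighbor in $\bT$. The increment condition in step~1(c)(iii) produces such a $u^{\ast}$, so $\pi\le\Pr[\mathcal{R}]\le\sum_{u\in H_{k,\ell}}p\big(1-(1-q)^{\deg(u)}\big)\le|H_{k,\ell}|\cdot p\cdot(1-(1-q)^{\alpha^{k+1}})$, using $\deg(u)\le\alpha^{k+1}$ for $u\in H_{k}$; when $|H_{k,\ell}|\le\eta n$ this yields the ``few'' bound. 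For the matching lower bound I would invoke the degree-sum inequality $|H_{k,\ell}|\le|H_{k}|\le 2\ol m/\alpha^{k-1}$, which with $\alpha^{k}>\sqrt{\ol m}$ forces $|H_{k,\ell}|p=o(1)$; hence, conditioned on $\mathcal{R}$, $\bS$ typically meets $H_{k,\ell}$ in the single vertex $u^{\ast}$, and the iteration increments $c$ unless one of a short list of bad events occurs — \BinarySearch{} fails (probability $\le\eps^{7}/n^{2}$), $\bS$ has an internal edge off $u^{\ast}$ (probability $O(\ol m p^{2})$), some $\bS$-vertex outside $H_{k,\ell}$ has a $\bT$-neighbor, i.e.\ ``shadowing'' (probability $O(pq\ol m)$), or the neighbor of $u^{\ast}$ landing in $\bT$ itself pulls another vertex into $\bT$ — the difficulty highlighted in Section~\ref{sec:overview} — whose probability I would bound by $pq^{2}\sum_{v}\deg(v)\deg(v,H_{k,\ell})\le pq^{2}\cdot(2\ol m/\alpha^{k-1})\cdot 2\ol m$. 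All four are $o(1)$ times $\Pr[\mathcal{R}]$ by the choice of $p,q$ and the degree-sum bound, giving $\pi\ge(1-o(1))\,|H_{k,\ell}|\cdot p\cdot(1-(1-q)^{\alpha^{k-1}})$.

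To turn the factor-$\alpha^{3}$ gap in $|H_{k,\ell}|$ into a gap in $\E[c]$, I would use that $x\mapsto 1-(1-q)^{x}$ is concave and vanishes at $0$, so $(1-(1-q)^{\alpha^{k-1}})/(1-(1-q)^{\alpha^{k+1}})\ge\alpha^{-2}$; hence when $|H_{k,\ell}|\ge\alpha^{3}\eta n$ we have $\pi\ge(1-o(1))\,\alpha^{3}\alpha^{-2}\cdot\eta n p(1-(1-q)^{\alpha^{k+1}})=(1-o(1))\alpha$ times the ``few'' bound. I would pick the $\poly(\log n,1/\eps)$ factor in $N$ so that $N\cdot\eta n p(1-(1-q)^{\alpha^{k+1}})=:\mu=\Theta((\log n)/\eps^{2})$ (with a large enough constant) at the smallest admissible $\eta$ — this is exactly where the lower bound $\eta\ge\eps^{4}\ol m/(\alpha^{k}n\log^{3}n)$ in (\ref{para1}) enters, and the resulting $N$ still meets the cost budget above. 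Then $\E[c]\le\mu$ in the ``$|H_{k,\ell}|\le\eta n$'' case and $\E[c]\ge(1-o(1))\alpha\mu>(1+\eps/2)\mu$ in the ``$|H_{k,\ell}|\ge\alpha^{3}\eta n$'' case; with $h=(1+\eps/4)\mu$, the multiplicative Chernoff bound makes each failure probability $\exp(-\Omega(\eps^{2}\mu))\le\eps^{4}/n^{4}$.

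The step I expect to be the main obstacle is this per-iteration lower bound. Because $\Pr[\mathcal{R}]$ is itself minuscule (of order $\eta n\cdot p\cdot q\alpha^{k}$), the naive bound $\pi\ge\Pr[\mathcal{R}]-\Pr[\text{bad}]$ is worthless; one really has to control each bad event's probability \emph{conditioned on} $\mathcal{R}$. Pushing this through is what forces the delicate two-sided calibration of $p$ and $q$ and the repeated use of $|H_{k,\ell}|\le 2\ol m/\alpha^{k-1}$ in the ``shadowing'' and ``pulled-in neighbor'' estimates; it is tightest when $\alpha^{k}$ is within a $\poly(\log n,1/\eps)$ factor of $\sqrt{\ol m}$, where that same bound conveniently shows that the whole admissible range of $|H_{k,\ell}|$ lies within a $\poly(\log n,1/\eps)$ factor of $\eta n$.
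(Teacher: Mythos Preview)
Your high-level plan matches the paper's, and the upper bound on $\pi$ is fine, but the lower-bound analysis has a real gap. You correctly note that the naive bound $\pi\ge\Pr[\mathcal R]-\Pr[\text{bad}]$ is useless and say one must work conditionally---yet you then list unconditional bad-event probabilities and assert that ``all four are $o(1)$ times $\Pr[\mathcal R]$''. For shadowing this is false: the ratio
\[
\frac{\Pr[\text{shadow}]}{\Pr[\mathcal R]}\ \lesssim\ \frac{2\,\ol m\, p q}{|H_{k,\ell}|\,p\,q\,\alpha^{k}}\ =\ \frac{2\,\ol m}{|H_{k,\ell}|\,\alpha^{k}}
\]
has $p$ and $q$ cancel completely, and at the smallest admissible $\eta$ in~(\ref{para1}) with $|H_{k,\ell}|=\alpha^{3}\eta n$ it equals $\Theta(\log^{3}n/\eps^{4})$, not $o(1)$. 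No calibration of $p,q$ repairs this, and the degree-sum bound $|H_{k,\ell}|\le 2\ol m/\alpha^{k-1}$ you invoke goes the wrong direction (you need a \emph{lower} bound on $|H_{k,\ell}|$ here).

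The paper sidesteps the whole issue by never forming the global event $\mathcal R$. It fixes $u^{*}\in H_{k,\ell}$, shows that the events ``(1) $u^{*}\in\bS$; (2) $u^{*}$ has a neighbor in $\bT$; (3) $(\bS\setminus\{u^{*}\})\cup\bT$ is independent'' are pairwise disjoint across $u^{*}$, and sums. Condition~(3) swallows all of your bad events (shadowing, $\bS$-internal edges, pulled-in neighbors) in one stroke, and its failure probability $\ol m(p+q)^{2}$ need only be small relative to the \emph{per-$u^{*}$} probability $\Theta(q\alpha^{k})$ of~(2)---a comparison that holds easily---rather than to the minuscule global $\Pr[\mathcal R]$. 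A secondary difference worth noting: the paper takes $p=\eps^{5}\sqrt{\ol m}/(\eta n^{2}\log^{4}n)$, i.e.\ $p$ \emph{depends on $\eta$}, so that $\eta n p$ and hence the expected counter are the same for every admissible $\eta$ and $N=(n/\sqrt{\ol m})\cdot\poly(\log n,1/\eps)$; with your $\eta$-free $p$ the argument would still go through once you adopt the per-$u^{*}$ decomposition, but the threshold $h$ must then vary with $\eta$.
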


\begin{proof} 
	We describe $\HiDegEvent^*$ in Figure~\ref{fig:HiDegEvent} with the following four parameters
	(one can check that $p<1$ using the condition on {$\eta \in [0,1]$} \ignore{$\gamma$}%\enote{This was previously $\gamma$, but I?m changing it to $\eta$, since we will think of $\eta \in [0, 1]$, whereas later, we have $\gamma_{k, \ell} \in [n]$. Just a style thing to make sure things are sort of consistent.} 
	in (\ref{para1})):
	\begin{equation}\label{parameters}
	N=\frac{n\log^7n}{\epsilon^{9}\sqrt{\mbar} },\quad
	h=\left(1+\frac{\eps}{4}\right)\frac{\log^2 n}{\eps^3},\quad p=\frac{\eps^5 \sqrt{ \mbar}}{{\eta} n^2 \log^4 n}\quad\text{and}\quad q=\frac{\eps}{\alpha^{k+1} \log n}.
	\end{equation}
	
	%In what follows we assume that all invocations of $\BinarySearch$ did not fail;
	%  this happens with probability at least $1-\eps^2/n^4$ by a union bound.
	Suppose that $|H_{k,\ell}|\le {\eta} n$, and consider the probability that the counter is incremented at any specific iteration of $\HiDegEvent^*$. 
	Note that a necessary condition for this to happen is that there is a vertex $u^*\in H_{k,\ell}$
	that is included in $\bS$ and $u^*$ has a neighbor in $\bT$ {so that step 1(c)iii increments the counter $c$}.
	%\enote{being a bit more explicit in the proof}. 
	Thus we have
	$$
	\Prx_{\bS, \bT}\big[c \text{ is incremented}\big]\le \sum_{u\in H_{k,\ell}} \Prx_{\bS}\big[u\in \bS\big]
	\cdot \Prx_{\bT}\big[\bT\ \text{contains a neighbor of $u$}\big].
	$$
	Given that every vertex $u\in H_{k,\ell}$ has degree at most $\alpha^{k+1}$. We have
	$$
	\Prx_{\bT}\big[\bT\ \text{contains a neighbor of $u$}\big]\le \alpha^{k+1}q
	= \frac{ \eps}{\log n} .
	$$
	As a result, for the case when $|H_{k,\ell}|\le {\eta} n$ we have 
	$$
	\Prx_{\bS, \bT}\big[c \text{ is incremented}\big]\le  \frac{{\eta} n p\hspace{0.02cm}\eps}{\log n}.
	$$
	%Note that this probability is at most the probability that $\bS\cap\bH_i\neq \emptyset$ and that $u\in \bS\cap\bH_i$ has a neighbor in $\bT$.
	%\begin{align*}\Prx_{\bT, \bS}\left[c \text{ is incremented}\right]&\le\Prx_{\bS}\left[|\bS\cap\bH_i|= 1\right]\Prx_{\bT}\left[\bT\cap\Gamma(\bS\cap\bH_i)\neq\emptyset\mid |\bS\cap\bH_i|= 1\right]\\
	%	&+\Prx_{\bS}\left[|\bS\cap\bH_i|> 1\right]\Prx_{\bT}\left[\bT\cap\Gamma(\bS\cap\bH_i)\neq\emptyset\mid |\bS\cap\bH_i|> 1\right].\end{align*}
	%Note that $$\Prx_{\bS}[|\bS\cap\bH_i|=1]\le \sum_{v\in\bH_i}\frac{\sqrt{\epsilon\mbar}}{\gamma n^2\log^4 n}\le \frac{\sqrt{\eps\mbar}}{n\log^4n},\qquad \Prx_{\bS}[|\bS\cap\bH_i|>1]\le \frac{\epsilon\mbar}{n^2\log^8n}.$$ In addition, 
	%\[\Prx_{\bT}\left[\Gamma(u)\cap\bT\neq\emptyset\mid |\bS\cap\bH_i|=1\right]\le 1-\left(1-\frac{\eps}{\sqrt{\mbar}(1+\eps)^i\log n}\right)^{\sqrt{\mbar}(1+\eps)^{i+1}}\le \frac{\epsilon(1+\eps)}{\log n}.\]
	%Therefore, we get that the probability that the counter is incremented is at most $\frac{\eps^{3/2}(1+\eps)\sqrt{\mbar}}{n\log^5 n}(1+o(\eps))$.
	
	%\medskip
	
	Next we consider the case of $|H_{k,\ell}|\ge \alpha^3{\eta} n$. 
	A \emph{sufficient} condition for the counter to increment is that there is a vertex $u^* \in H_{k,\ell}$
	such that (1) $u^* \in \bS$ \ignore{and $u^* \notin \bT$}, (2) one of the neighbors of $u^*$ lies in $\bT$, (3) $(\bS\setminus \{u^*\})\cup \bT$
	is an independent set, and (4) $\BinarySearch$ does not fail. {Suppose these occur for a sample of $\bS$ and $\bT$ in step 1(a) and 1(b). Then, $\bT \subset (\bS \setminus \{u^*\}) \cup \bT$ must be an independent set by (3), and $\bS \cup \bT$ is not an independent set by (1) and (2). This means step 1(c) enters lines (i), (ii) and (iii). By (3) and (4), $\BinarySearch(m, G, \bS \cup \bT, \eps^7/n^2)$ outputs an edge $(u^*, v)$ since all edges in $\bS \cup \bT$ are adjacent to $u^*$; hence, (ii) executes $\DO_{\text{high}}(u^*, k, \ell)$ and notices $u^*$ lies in $\bS$ and $H_{k,\ell}$. Finally by (2), $\{ u^*\} \cup \bT$ is not an independent set in (iii) and the counter is incremented.}\ignore{\enote{Being very explicit about how the event (1),(2), (3) and (4) in lines (a) and (b) give rise to an execution which increments the counter in (iii).}}
	
	{We first show that the events (1), (2), and (3) are disjoint for different vertices $u \in H_{k, \ell}$. Suppose for contradiction that $u_1, u_2 \in H_{k, \ell}$ satisfy events (1), (2), and (3). Then, by (3), $(\bS \setminus \{u_1\}) \cup \bT$ and $(\bS \setminus \{u_2\}) \cup \bT$ are independent sets, which means that $(u_1, u_2)$ is the only edge in $\bS \cup \bT$. This implies by applying (2) to $u_1$ that $u_2 \in \bT$, and similarly $u_1 \in \bT$ by applying (2) to $u_2$. Thus, there is an edge in $(\bS \setminus \{u_1\}) \cup \bT$, a contradiction.}
	\ignore{Note that, by adding the condition of $u\notin \bT$, the above events over different vertices $u\in H_{k,\ell}$ are disjoint. }Thus,
	the probability for $c$ to increment is at least (the last term accounts for $\BinarySearch$)
	$$
	%\Prx_{\bS, \bT} \big[c \text{ is incremented}\big]
	\sum_{u\in H_{k,\ell}} \Prx_{\bS, \bT} \big[u\in \bS, \ignore{\hspace{0.03cm}u\notin \bT, }
	\hspace{0.03cm}\bT\ \text{contains a neighbor of $u$ and
		$(\bS\setminus \{u\})\cup \bT$ is an IS}\big]-(\eps^7/n^2).
	$$
	Let $\bS'$ be the set drawn by including each vertex in $[n]\setminus \{u\}$ with probability $p$ independently,
	and let $\bT'$ be the set drawn similarly {from $[n] \setminus \{u\}$} using $q$.
	Then the probability in the sum above can be written as
	$$
	p(1-q)\cdot \Pr_{\bS',\bT'}\big[\bT'\ \text{contains a neighbor of $u$ and $\bS'\cup \bT'$ is an IS}\big].
	$$
	On the one hand, the probability that $\bT'$ contains a neighbor of $u$ is at least 
	$$
	\alpha^{k-1}q-{\alpha^{k-1} \choose 2}q^2\ge \left(\frac{1}{\alpha^2}-O\left(\frac{\eps}{\log n}\right)\right)\cdot \frac{\eps}{\log n} %(1-\eps) \alpha^{k-1} q \ge (1-2\eps)\cdot \frac{\eps}{\log n},
	$$
	as $\alpha^{k+1} q=\eps/\log n$. 
	On the other hand, it follows from Lemma \ref{lem:indset}, (\ref{para1}) and $\alpha^k=O(n)$ that
	$$
	\Pr_{\bS',\bT'}\big[\text{$\bS'\cup \bT'$ is not an IS}\big]\le \ol{m}\cdot (p+q)^2\le \ol{m}\cdot 2(p^2+q^2)
	=O\left(\frac{\eps^2}{\log^2 n}
	\right).
	$$
	As a result, we have 
	\begin{align*}
		\Pr_{\bS',\bT'}\big[\bT'\ \text{contains a neighbor of $u$ and $\bS'\cup \bT'$ is an IS}\big]\ge 
		%&\ge (1-2\eps)\cdot \frac{\eps}{\log n}-O \left(\frac{\eps^2}{\log^2 n}\right)\\
		\left(\frac{1}{\alpha^2}-O\left(\frac{\eps}{\log n}\right)\right)\cdot \frac{\eps}{\log n}.
	\end{align*}
	So for the case when $|H_{k,\ell}|\ge \alpha^3{\eta} n$, we have 
	$$
	\Prx_{\bS, \bT}\big[c \text{ is incremented}\big]\ge \alpha^3{\eta} n\cdot p(1-q) \cdot \left(\frac{1}{\alpha^2}-O\left(\frac{\eps}{\log n}\right)\right)\cdot \frac{ \eps}{\log n}-\frac{\eps^7}{n^2}
	\ge (1+\eps/2)\cdot \frac{{\eta} n p\hspace{0.02cm}\eps}{\log n}.
	$$
	Plugging in the choices of $p$ and $N$, we have that
	$$
	N=\frac{\log n}{{\eta} n p \eps}\cdot \frac{\log^2 n}{\eps^4} .
	$$
	%Therefore, the probability that the counter is incremented is at least 
	%\[\frac{\epsilon^{3/2}(1+\eps)^2\sqrt{\mbar}}{n\log^5 n}(1-o(\eps))^2.\]
	By a Chernoff bound the counter will distinguish the two cases with probability $1-\eps^3/n^4$.  
	%The lemma follows by combining the above an using a union bound on the failure probability of $\BinarySearch$ .
\end{proof}\medskip

Using the above lemma, we can estimate the sizes of the high degree buckets.\medskip
\begin{figure}[t!]
	\begin{framed}
		\noindent Procedure $\HiDegBucket^*\hspace{0.05cm}(k,\ell,G)$
		\begin{flushleft}
			\noindent {\bf Input:} Integers $k\in [s+1:\beta]$ and $\ell\in [0:\tau]$,   
			and access to both the independent set oracle $\IS$ and an $(\ol{m},\eps)$-degree
			oracle $\DO$ (with underlying degree partition $P=(L_i,H_{k,\ell}:i,k,\ell)$)\\ of an undirected graph $G=([n],E)$ with $1\le m=|E|\le \ol{m}$.\\
			\noindent{\bf Output:} An estimation $\gamma_{k,\ell}$ of $|H_{k,\ell}|$.
			
			\begin{enumerate}
				\item Let ${\eta}=1$ 
				\item While ${\eta}\ge {\eps^4\ol{m}}\big/({\alpha^k n\log^3 n})$, perform the following:
				\begin{enumerate}
					%\item If $\gamma_j<\frac{\mbar}{n(1+\eps)^i\log^2 n}$, .
					\item Run $\HiDegEvent^*(k,\ell,{\eta}, G)$
					\item If it outputs ``many,'' {\bf return} ${\eta} n$ as $\gamma_{k,\ell}$; Otherwise, set ${\eta}$ to be ${\eta}/\alpha$.
				\end{enumerate}
				\item			{\bf Return} $0$	
			\end{enumerate}
		\end{flushleft}\vskip -0.14in
	\end{framed}\vspace{-0.2cm}
	\caption{Description of the $\HiDegBucket^*$ procedure.} \label{fig:HiDegBucket}
\end{figure}

\begin{proofof}{Lemma~\ref{lem:highestimation}}
	The algorithm simply runs $\HiDegBucket^*\hspace{0.03cm}(k,\ell,G)$ for each $k {\in [s+1:\beta]}$ and $\ell {\in [0 : \tau]}$ to obtain $\gamma_{k,\ell}$. Its total cost can be bounded easily given that $\HiDegBucket^*$ only invokes $\HiDegEvent^*$
	at most $O(\log n/\eps)$ many times, and both $\beta$ and $\tau$ are $\wt{O}(\log n/\eps)$.
	
	Below we assume that every call to $\HiDegEvent^*(k,\ell,{\eta},G)$ satisfies the two conditions in Lemma 
	\ref{lem:HighDegEventLemma}, which happens with probability at least
	$$
	1-{\frac{\eps^4}{n^4}}\cdot \wt{O}\left(\frac{\log^3 n}{\eps^3}\right)>1-\frac{1}{n^3}.
	$$ \ignore {\enote{before, this was $O(\frac{\eps^3}{n^4})$, but we get exact values from the above lemma.}}
	We show that every $\gamma_{k,\ell}$ satisfies (\ref{hehe2}) and the lemma follows.
	
	Let $\gamma_{k,\ell}$ be the output of $\HiDegBucket^*(k,\ell,G)$.
	Considered two cases. First suppose line $3$ is reached so $\gamma_{k,\ell}=0$.
	Let $\hat{{\eta}}$ be the value of ${\eta}$ in the last call to $\HiDegEvent^*$.
	Then  
	$$
	\hat{{\eta}} {\leq \frac{\eps^4 \ol{m}}{\alpha^{k-1} n \log^3 n}} %= \Theta\left(\frac{ \eps^4\ol{m}} {\alpha^k n\log^3 n}\right) 
	$$
	and because every call to $\HiDegEvent^*$ returns a correct answer (``few'' in this case), 
	$$
	|H_{k,\ell}|\le \alpha^3 \hat{{\eta}} n=O\left(\frac{\eps^4\ol{m}}{\alpha^k \log^3 n}\right)
	$$
	so (\ref{hehe2}) holds trivially with $\gamma_{k,\ell}=0$.
	
	\medskip
	Next suppose that $\gamma_{k,\ell}=\hat{{\eta}}n$ since
	$\HiDegEvent^*\hspace{0.03cm}(k,\ell,\hat{{\eta}},G)$ outputted ``many'', and~the previous $\HiDegEvent^*\hspace{0.03cm}
	(k,\ell,\alpha \hat{{\eta}} ,G)$ outputted ``few.'' %Otherwise, since every call to $\HiDegEvent(\gamma_j,i,\eps)$ outputted ``few'', we have that $\hat\gamma=0$ and $|\bH_i|\le \frac{\mbar}{\sqrt{\mbar}(1+\eps)^i\log^2 n}(1+2\eps)$
	Given the assumption that both invocations return correct answers, we get that 
	${\gamma_{k,\ell} =} \hat{{\eta}}n\le |H_{k,\ell}|$ {and $|H_{k,\ell}| \leq \alpha^4\hat{\eta} n = \alpha^4 \gamma_{k,\ell}$}, so  (\ref{hehe2}) follows.	%\[\hat{\gamma}n\le |\bH_i|\le (1+2\eps)(1+\eps)\hat\gamma n\le (1+5\epsilon)\hat{\gamma}n.\]
\end{proofof}

\newcommand{\CheckHLFrac}{\texttt{Check-H-L-Degree}}
\newcommand{\CheckHLBucket}{\texttt{Check-H-L-Bucket}}

\section{Simulation of Oracles} \label{sec:check-degrees}

We prove Lemma~\ref{lem:oracle-implementation} in this section. We show how to simulate access to an $(\ol{m}, \eps)$-degree oracle  by giving implementations of $\simDh$ and $\simDl$, which assume access to an independent set oracle. 
To simplify the presentation, we break the simulation into two steps.
In the first step, we introduce the notion of a \emph{high-low partition} and a \emph{high-low oracle}
in Section \ref{sec:highlow} and show how to simulate~a high-low oracle using access to an independent set
oracle.
In the second step, we show how to~simulate an $(\ol{m},\eps)$-degree oracle 
with access to both an independent set oracle and a high-low oracle. 
%The implementation is comprised of four parts: 1) an oracle for determining whether a vertex $u$ belongs to $H$ or $L$, where $L = \cup_i L_i$ and $H = \cup_{k, \ell} H_{k,\ell}$, 2) an oracle for determining whether a vertex $u \in L$ belongs to $L_i$ for any $i$, 3) an oracle for determining whether a vertex $u \in H$ belongs to $H_k = \cup_{\ell} H_{k,\ell}$ for any $k$, and 4) an oracle for determining whether a vertex $u \in H_{k}$ belongs to $H_{k, \ell}$ for any $\ell$. 

%Each of the next four subsections follows a very similar pattern. From a set of the vertices (either the whole set $[n]$, or $H$, $L$, or $H_k$), we define a partition of the set and an oracle which for determining part of the partition a vertex belongs to. We then give a simulation lemma for implementing the oracle given access to the oracles previously analyzed. The proof of Lemma~\ref{lem:oracle-implementation} then follows from using the oracles 1--4 sketched above to simulate an $(\ol{m}, \eps)$-degree oracle.

Throughout the section, let $\eps \in (0,1)$ be an accuracy parameter, $1\le \ol{m}\le {n\choose 2}$ 
and $G = ([n], E)$ be a graph where $1 \leq m = |E| \leq \ol{m}$. Recall $\alpha = 1+\eps$, $s$ is set according to (\ref{eq:def-s}), $\beta = \Theta((\log n)/\eps)$ is the smallest integer such that $\alpha^{\beta} \geq n$, and $\tau$ is the smallest integer such that $\alpha^{\tau} \geq \log^2 n / \eps$.  
For convenience we will fix $\eps$ and $\ol{m}$ and skip them as inputs of algorithms presented in this section.

\subsection{High-low partitions and oracles}\label{sec:highlow}

We start with the definition of high-low partitions and oracles.

\def\DOHL{\mathsf{D}_{\mathsf{HL}}}
\begin{definition}
	An $(\ol{m}, \eps)$-\emph{high-low partition} of $G=([n],E)$ is a partition $(H, L)$ of $[n]$ such that every vertex $u \in L$ satisfies $\deg(u) \leq \alpha^{s+1}$ and every vertex $u \in H$ satisfies $\deg(v) \geq \alpha^s$.
	
	An $(\ol{m}, \eps)$-\emph{high-low oracle} contains an  $(\ol{m}, \eps)$-high-low partition~$ (H, L)$~of $G$, and can be accessed via a map $\DOHL \colon [n] \to \{0,1\}$ such that $\DOHL(u) = 1$ if $u \in H$ and $\DOHL(u) = 0$ if $u \in L$. %The cost of each query on $\DOHL$ is $1$.
\end{definition}

{We remark (similarly to the case of $(\ol{m}, \eps)$-degree partitions in Definition~\ref{def:degree-part}) that $(\ol{m}, \eps)$-high-low partitions are not unique; in fact, a vertex $v$ with $\alpha^s \leq \deg(v) \leq \alpha^{s+1}$ may belong to either $H$ or $L$ in an $(\ol{m}, \eps)$-high-low partition $(H,L)$.} We show in the next lemma that query access to an $(\ol{m},\eps)$-high-low oracle $\DOHL$ can be
simulated very efficiently using an independent set oracle.

\def\HL{\texttt{High-Low}}

\begin{lemma}\label{lem:high-low-simul}
	There is a positive integer $q_\mathsf{HL} = q_\mathsf{HL}(\eps, n, \ol{m})$ and a deterministic algorithm 
	\emph{$\HL$} with the following performance guarantee.
	\emph{$\HL$}$\hspace{0.04cm}(u,G,r)$ takes three inputs: {a vertex} $u \in [n]$,
	access to an independent set oracle $\IS$ of $G=([n],E)$ with $1\le m=|E|\le \ol{m}$, and $r \in \{0,1\}^{q_\mathsf{HL}}$. The algorithm makes at most $\poly(\log n, 1/\eps)$ queries to $\IS$ and outputs a value in $\{0,1\}$. {With probability at least $1 - 1/n^3$ over the draw of $\boldr \sim \{0,1\}^{q_{\mathsf{HL}}}$, the function $\emph{\HL}(\cdot, G, \boldr) \colon [n] \to \{0,1\}$, is an $(\ol{m}, \eps)$-high-low oracle of $G$.}\ignore{\enote{reworded a bit to make it clear that the probability is only over the draw of $\boldr$.}}%When $\boldr \sim \{0,1\}^{q_{\mathsf{HL}}}$, \emph{$\HL$}$\hspace{0.04cm}(\cdot, G,\boldr) \colon [n] \to \{0,1\}$ is an $(\ol{m}, \eps)$-high-low oracle of $G$ with probability at least $1 - 1/n^3$. 
\end{lemma}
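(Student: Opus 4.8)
\textbf{Proof proposal for Lemma~\ref{lem:high-low-simul}.}

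The plan is to build \texttt{High-Low}$(u,G,r)$ around a single randomized test that estimates, up to a constant factor, whether $\deg(u)$ is above or below the threshold $\Theta(\sqrt{\ol{m}}) = \Theta(\alpha^s)$, using the same ``hit a neighbor with a random set but stay independent'' primitive that appears in Lemma~\ref{lem:HighDegEventLemma} and in the overview. Concretely, for a vertex $u$, sample $\bT \subseteq [n] \setminus \{u\}$ by including each vertex independently with probability $p = \Theta(1/\sqrt{\ol{m}})$, and query $\IS(\bT)$ and $\IS(\bT \cup \{u\})$. Let $\calE$ be the event that $\bT$ is independent but $\bT \cup \{u\}$ is not. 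By Lemma~\ref{lem:indset} (using $|E| \le \ol{m}$, so the induced subgraph on $\bT$ has at most $\ol{m}$ edges and $\ol{m} p^2 = \Theta(1)$, which we tune to be a small constant), $\bT$ is independent with probability bounded below by a constant; conditioned on that, $\bT$ hits a neighbor of $u$ with probability that is $\Theta(\deg(u) p)$ when $\deg(u) p \lesssim 1$. This gives a gap: if $\deg(u) \le \alpha^s$ then $\Pr[\calE] \lesssim \alpha^s p$, while if $\deg(u) \ge \alpha^{s+1}$ then $\Pr[\calE] \gtrsim \alpha^{s+1} p$ (one must be slightly careful about the $\deg(u) p$-can-exceed-$1$ regime, but there the probability is simply $\Omega(1)$, which only helps). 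Repeating this experiment $\poly(\log n, 1/\eps)$ times and thresholding the empirical frequency of $\calE$ lets us decide, with failure probability at most $1/n^{5}$ say, between $\deg(u) \le \alpha^s$ and $\deg(u) \ge \alpha^{s+1}$; when $\alpha^s \le \deg(u) \le \alpha^{s+1}$ either answer is acceptable, matching the non-uniqueness remark before the lemma.

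Next I would set up the random string $r$. The algorithm \texttt{High-Low} is required to be \emph{deterministic} with its randomness supplied as $r \in \{0,1\}^{q_{\mathsf{HL}}}$, and crucially the \emph{same} $r$ must work simultaneously for all $n$ vertices $u$ so that $\texttt{High-Low}(\cdot, G, \boldr)$ is a bona fide function forming a high-low oracle. So I would let $r$ encode $n$ independent blocks of fresh randomness, one block $r_u$ per vertex $u$, each block long enough to run the $\poly(\log n,1/\eps)$ repetitions of the test for that vertex; $q_{\mathsf{HL}}$ is then $n \cdot \poly(\log n, 1/\eps)$. On input $(u, G, r)$ the algorithm reads only block $r_u$ and runs the thresholded test described above, outputting $1$ (for $H$) or $0$ (for $L$). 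The query bound is immediate: one vertex uses $\poly(\log n,1/\eps)$ independent set queries.

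For correctness of the oracle guarantee, fix the graph $G$ and fix any vertex $u$. If $\deg(u) < \alpha^s$, the probability over $\boldr_u$ that the test outputs $1$ is at most $1/n^5$; if $\deg(u) > \alpha^{s+1}$, the probability it outputs $0$ is at most $1/n^5$; and if $\alpha^s \le \deg(u) \le \alpha^{s+1}$, \emph{any} output is consistent with an $(\ol{m},\eps)$-high-low partition, so there is nothing to fail. Since the blocks $\boldr_u$ are independent across $u$, a union bound over the at most $n$ vertices gives that with probability at least $1 - 1/n^4 \ge 1 - 1/n^3$ over $\boldr$, every vertex is placed consistently with the definition, i.e. $\texttt{High-Low}(\cdot,G,\boldr)$ is an $(\ol{m},\eps)$-high-low oracle. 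I expect the main obstacle to be the clean quantitative analysis of the gap in $\Pr[\calE]$ — in particular handling the regime where $\deg(u)/\sqrt{\ol{m}}$ is large (so the ``contains a neighbor of $u$'' probability saturates near $1$ rather than scaling linearly) and making sure the constant in $\ol{m} p^2$ is chosen small enough that the conditional-on-$\bT$-independent correction (a constant factor) does not eat the $(1+\eps)$ vs.\ constant-factor slack we actually need; but since the threshold test only needs to distinguish a \emph{constant}-factor ($\alpha$) gap rather than a $(1+\eps)$ gap, there is ample room, and the Chernoff bound over $\poly(\log n,1/\eps)$ trials closes it.
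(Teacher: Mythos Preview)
Your high-level plan is exactly the paper's: the paper packages the randomized test you describe as the subroutine $\CheckHiDegree(u,d,G)$ (Lemma~\ref{lem:CheckHiDegLemma}), which samples $\bT\subseteq[n]\setminus\{u\}$ i.i.d., checks whether $\bT$ is independent but $\bT\cup\{u\}$ is not, repeats $\poly(\log n,1/\eps)$ times and thresholds. Then $\HL(u,G,r)$ just runs $\CheckHiDegree(u,\alpha^{s},G)$ using the block of $r$ allocated to vertex $u$, and the union bound over the $n$ vertices is identical to yours.

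There is, however, a real quantitative slip in your last paragraph. You write that the test ``only needs to distinguish a \emph{constant}-factor ($\alpha$) gap rather than a $(1+\eps)$ gap,'' but by definition $\alpha=1+\eps$ (see the line after~(\ref{eq:def-s})). So the gap between $\deg(u)\le\alpha^{s}$ and $\deg(u)\ge\alpha^{s+1}$ \emph{is} a $(1+\eps)$-multiplicative gap, and your choice $p=\Theta(1/\sqrt{\ol m})$ with $\ol m\,p^2=\Theta(1)$ does not leave ``ample room'': the additive $\ol m\,p^2$ correction from ``$\bT$ may fail to be independent'' is then a fixed constant, while the signal $\Pr[\bT\cap\Gamma(u)\ne\emptyset]$ changes only by $O(\eps)$ between the two cases, so the gap can be swamped. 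The paper's fix is to take $p=\eps/(\alpha^{s}\log n)$, so that $\ol m\,p^2=O(\eps^2/\log^2 n)=o(\eps)$; then $\Pr[\calE]$ is at most $\eps/\log n$ in the low case and at least $(1+\eps/2)\eps/\log n$ in the high case, and $\poly(\log n,1/\eps)$ repetitions plus Chernoff separate them with failure probability $\eps^2/n^5$. With that single change in the sampling rate your argument goes through verbatim.
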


Before giving the proof of Lemma~\ref{lem:high-low-simul}, we introduce the main subroutine, $\CheckHiDegree$, which will be used for $\HL$ as well as for later parts of this section.

\begin{figure}[t!]
	\begin{framed}
		\noindent Subroutine $\CheckHiDegree\hspace{0.05cm}(u, d, G)$
		\begin{flushleft}
			\noindent {\bf Input:} A vertex $u \in [n]$, {a parameter} {$d \geq \alpha^s$}, %\Omega(\sqrt{\ol{m}})$, 
			and  access to an independent set oracle\\ $\IS$ of an undirected graph $G=([n],E)$
			with $1\le m=|E|\le \ol{m}$. \\
			{\bf Output:} Either ``low'' or ``high.''
			
			\begin{enumerate}
				\item Let $c$ be a counter, initially set to $0$. Repeat $t = \poly(\log n,1/\eps)$ many iterations:
				\begin{itemize}
					\item Sample $\bT \subseteq [n] \setminus \{ u \}$ by including each vertex independently with probability $\eps/(d\log n)$. Increment $c$ if $\bT$ is an independent set but $\bT \cup \{u\}$ in not.
				\end{itemize}
				\item If $c > (1+\eps/4)t \eps \big/ \log n$, output ``high;'' otherwise, output ``low.'' 
			\end{enumerate}
		\end{flushleft}\vskip -0.14in
	\end{framed}\vspace{-0.2cm}
	\caption{Description of the $\CheckHiDegree$ subroutine.} \label{fig:high-degree-check}
\end{figure}

\begin{lemma} \label{lem:CheckHiDegLemma}
	There is a randomized algorithm $\emph{\CheckHiDegree}\hspace{0.05cm}(u, d, G)$ which takes three inputs: a vertex $u \in [n]$, {a parameter}\ignore{\enote{$d$ doesn?t need to be an integer, also changing this in the subroutine description}} $d {\geq \alpha^{s}}$, %\Omega(\sqrt{\ol{m}})$, 
	and access to~an independent set oracle
	of $G=([n],E)$ with $1\le |E|\le \ol{m}$.
	%For any $\eps \in (0,1)$, $d\ge\alpha^{s-1}$, and a vertex $u\in[n]$, 
	The algorithm makes at most $\poly(\log n,1/\eps)$ queries and satisfies the following two properties:
	\begin{itemize}
		\item If $\deg(u)\ge (1+\eps)d $, then {$\emph{\CheckHiDegree}(u, d, G)$} outputs ``high''
		with probability at least $1-{\eps^2/n^5}$.\vspace{-0.1cm}
		\item If $\deg(u)\le d$, then {$\emph{\CheckHiDegree}(u, d, G)$} outputs ``low''
		with probability at least $1-{\eps^2/n^5}$.
	\end{itemize}
\end{lemma}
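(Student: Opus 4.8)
The plan is to analyze the natural estimator inside \CheckHiDegree: in each iteration we sample $\bT \subseteq [n]\setminus\{u\}$ by including each vertex independently with probability $p := \eps/(d\log n)$, and we increment the counter exactly when the event $\calE$ ``$\bT$ is an independent set but $\bT\cup\{u\}$ is not'' occurs. The key quantity is $\Prx_{\bT}[\calE]$, and I would first pin down how this probability depends on $\deg(u)$. Note $\calE$ holds iff $\bT$ is an independent set \emph{and} $\bT$ contains at least one neighbor of $u$. So
\[
\Prx_{\bT}[\calE] \;=\; \Prx_{\bT}\big[\bT \text{ is an IS}\big] \;-\; \Prx_{\bT}\big[\bT \text{ is an IS and } \bT \cap \Gamma(u) = \emptyset\big].
\]
The first term is $1 - O(\ol{m} p^2)$ by Lemma~\ref{lem:indset} (with $r = \ol{m} \ge |E|$), and since $d \ge \alpha^s = \Theta(\sqrt{\ol{m}})$ we have $\ol{m} p^2 = O(\eps^2/\log^2 n)$, so this term is $1 - O(\eps^2/\log^2 n)$. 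For the second term I would use inclusion--exclusion on which neighbors of $u$ land in $\bT$: the dominant contribution comes from $\bT \cap \Gamma(u) = \emptyset$, which has probability $(1-p)^{\deg(u)} = 1 - \deg(u)\,p + O(\deg(u)^2 p^2)$. Combining, and absorbing lower-order terms (using $\deg(u) p \le \deg(u)/d \cdot (\eps/\log n)$, which is $O(1)$ when $\deg(u) = O(d)$, and $\deg(u)^2 p^2 = O(\eps^2/\log^2 n \cdot (\deg(u)/d)^2)$),
\[
\Prx_{\bT}[\calE] \;=\; \deg(u)\,p \;\pm\; O\!\left(\frac{\eps^2}{\log^2 n}\cdot\Big(1 + \frac{\deg(u)^2}{d^2}\Big)\right) \;=\; \frac{\deg(u)}{d}\cdot\frac{\eps}{\log n} \;\pm\; O\!\left(\frac{\eps^2}{\log^2 n}\right),
\]
where in the regime relevant to the lemma ($\deg(u) \le (1+\eps)d$ in the ``high'' case, after which we know the answer, or we may just note that if $\deg(u)$ is enormous then $\Prx[\calE]$ is still bounded and the ``high'' output is still correct since then $\Prx[\calE]$ cannot be small; more carefully, when $\deg(u) \ge d$, $\bT$ containing a neighbor has probability $\ge 1 - (1-p)^d = dp - O(d^2p^2) \ge (\eps/\log n)(1 - O(\eps/\log n))$, while $\bT$ fails to be independent with probability $O(\eps^2/\log^2 n)$, so $\Prx[\calE] \ge (1-O(\eps/\log n))\eps/\log n$).

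With this in hand, the two cases separate cleanly. If $\deg(u) \le d$, then $\Prx_{\bT}[\calE] \le \eps/\log n + O(\eps^2/\log^2 n) \le (1+\eps/8)\eps/\log n$ for $\eps$ small enough. If $\deg(u) \ge (1+\eps)d$, then $\Prx_{\bT}[\calE] \ge (1+\eps)\eps/\log n - O(\eps^2/\log^2 n) \ge (1+\eps/2)\eps/\log n$. The counter $c$ after $t$ iterations is a sum of $t$ i.i.d.\ indicators with mean $\mu = t\cdot\Prx_{\bT}[\calE]$, and the threshold in the algorithm is $(1+\eps/4)t\eps/\log n$, which lies strictly between the two case-means with a multiplicative gap of $\Omega(\eps)$ relative to the larger of the two means $\Theta(t\eps/\log n)$. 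A standard multiplicative Chernoff bound then shows that the probability that $c$ falls on the wrong side of the threshold is at most $\exp(-\Omega(\eps^2 \cdot t\eps/\log n)) = \exp(-\Omega(\eps^3 t/\log n))$; choosing $t = \poly(\log n, 1/\eps)$ large enough (e.g.\ $t = \Theta(\log^2 n / \eps^3)$ times a suitable constant) makes this at most $\eps^2/n^5$. The query count is $2t = \poly(\log n, 1/\eps)$ independent set queries, two per iteration, matching the claim.

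The main obstacle is handling the approximation $\Prx_{\bT}[\calE] \approx \deg(u)p$ with enough care that the additive error from $\bT$ occasionally failing to be independent (of order $\eps^2/\log^2 n$, i.e.\ $\Theta(\eps)$ times the signal $\eps/\log n$) does not swamp the multiplicative $(1+\eps)$ gap we are trying to detect; this is exactly why the sampling probability carries the extra $\eps/\log n$ factor rather than just $1/d$, and why the constant in the threshold is $1+\eps/4$ rather than, say, $1$. Once the probability bound is established with the right constants, the Chernoff step is routine. I would also remark that the $d \ge \alpha^s$ hypothesis is used precisely to guarantee $\ol{m} p^2 = O(\eps^2/\log^2 n)$, so that the ``$\bT$ not independent'' correction is genuinely a lower-order term.
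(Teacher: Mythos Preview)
Your approach is essentially the same as the paper's: sample $\bT$ with inclusion probability $p=\eps/(d\log n)$, show $\Prx[\calE]$ is at most $\eps/\log n$ when $\deg(u)\le d$ and at least $(1+\eps/2)\eps/\log n$ when $\deg(u)\ge(1+\eps)d$, then apply Chernoff against the threshold $(1+\eps/4)t\eps/\log n$. The paper's treatment is slightly more direct than your exact decomposition: for the low case it simply bounds $\Prx[\calE]\le\Prx[\bT\cap\Gamma(u)\neq\emptyset]\le\deg(u)\,p\le\eps/\log n$ (no additive correction needed), and for the high case it writes $\Prx[\calE]\ge\Prx[\bT\cap\Gamma(u)\neq\emptyset]-\Prx[\bT\text{ not IS}]\ge 1-(1-p)^{(1+\eps)d}-O(\eps^2/\log^2 n)$, which by monotonicity of $1-(1-p)^k$ in $k$ handles all $\deg(u)\ge(1+\eps)d$ at once.

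One small slip: your parenthetical fallback for ``enormous $\deg(u)$'' lower-bounds $\Prx[\bT\cap\Gamma(u)\neq\emptyset]$ by $1-(1-p)^{d}$ rather than $1-(1-p)^{(1+\eps)d}$, yielding only $\Prx[\calE]\ge(1-O(\eps/\log n))\eps/\log n$, which does \emph{not} clear the threshold $(1+\eps/4)\eps/\log n$. Replacing $d$ by $(1+\eps)d$ there (or, as the paper does, simply invoking monotonicity in $\deg(u)$ after computing the bound at $\deg(u)=(1+\eps)d$) fixes this immediately.
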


\begin{proof}
	Suppose first $\deg(u)\ge (1+\eps)d$. Consider the probability over the draw of $\bT \subseteq [n] \setminus \{u\}$~that the counter $c$ is incremented at any particular iteration. We notice that if $\bT$ is an independent set containing a neighbor of $u$, the counter is incremented. Therefore,
	\begin{align*} 
		\hspace{-0.3cm}\Prx_{\bT}\big[ c\text{ is incremented}\big] &\geq
		\Prx_{\bT}\big[\bT \cap \Gamma(u) \neq \emptyset \big] - \Prx_{\bT}\big[\bT\text{ is not an independent set}\big] \\
		&\geq 1 - \left(1 - \frac{\eps}{d \log n}\right)^{(1+\eps) d} - O\left(\frac{\eps^2}{\log^2 n}\right) \geq 
		\frac{\eps(1+\eps)(1-o(\eps))}{\log n}\ge 
		\frac{\eps(1+\eps/2)}{\log n}.
		%\frac{\eps }{\log n} - \left(\frac{\eps(1+\eps)}{\log n}\right)^2 - \frac{\eps^2}{\log^2 n} \\
		%&\geq \frac{\eps (1+\eps)(1-o(\eps))}{\log n}>\frac{\eps (1+\eps/2)}{\log n},
	\end{align*}
	where we used Lemma~\ref{lem:indset} to say that $\bT$ is very likely to be an independent set. 
	On the other hand when $\deg(u) \leq d$, the probability that the counter is incremented is at most the probability that any neighbor of $u$ is included in $\bT$, so at most $\eps/\log n$. By a Chernoff bound, the counter $c$ at the end will be able to distinguish the two cases with probability at least $1 -{\eps^2/n^5}$. 
\end{proof}\medskip\vspace{0.06cm}

We now use Lemma \ref{lem:CheckHiDegLemma} to prove Lemma \ref{lem:high-low-simul}:

\begin{proofof}{Lemma~\ref{lem:high-low-simul}}
	Let $q_\mathsf{HL} = q_\mathsf{HL}(\eps, n, \ol{m})$ be a large enough integer so that $r \in \{0,1\}^{q_\mathsf{HL}}$ can store the randomness of calls to $\CheckHiDegree\hspace{0.04cm}(u, \alpha^{{s}}, G)$\ignore{\enote{This was previously set to call $\CheckHiDegree(u, \alpha^{s+1}, G)$, which I think its shifted.}} for every $u \in [n]$. More formally, if $\kappa$ is the number of random bits needed for
	each call to $\CheckHiDegree\hspace{0.04cm}(u, \alpha^{{s}}, G)$, then $q_\mathsf{HL}$ is set to be $n \cdot \kappa$. 
	By a union bound, with probability at least $1 - {\eps^2/n^4}$ over the draw of $\boldr \sim \{0,1\}^{q_\mathsf{HL}}$, all $n$ calls to $\CheckHiDegree(u, \alpha^{{s}}, G)$ 
	return a correct answer (i.e. no property in 
	Lemma~\ref{lem:CheckHiDegLemma} is violated). 
	We will refer to such a string $r$ as a \emph{good} string.  
	
	%  As a result, $\HL$ simulates access to an $(\ol{m}, \eps)$-high-low oracle.  
	We now describe the implementation of $\HL(u,G,r)$ and 
	show that for every good string $r$, $\HL(\cdot,G,r)$ implements an $(\ol{m},\eps)$-high-low oracle. 
	When calling $\HL(u,G, r)$, it~just calls $\CheckHiDegree(u, \alpha^{{s}}, G)$ with 
	randomness taken from bits of $r$ allocated to this~call.
	Then $\HL(u,G,r)$ outputs $1$ if it outputs ``high,'' and $0$ if it outputs ``low.''
	%randomness fixed to the bits given by the $v$th consecutive $c$ bits, i.e., the randomness of the execution is given by $r_{|[(v-1)c+1, vc]}$. If $\CheckHiDegree(v, \alpha^s, \eps)$ outputs ``high'', $\HL(v, r)$ outputs $1$, otherwise, $\HL(v, r)$ outputs $0$.
	It follows from Lemma \ref{lem:CheckHiDegLemma} that $\HL(u,G,r)$ makes $\poly(\log n,1/\eps)$ independent set queries.
	Moreover, when $r$ is a good string, $\HL(u,G,r)=1$ implies that $\deg(u)\ge \alpha^s$; 
	$\HL(u,G,r)=0$ implies that $\deg(u)\le \alpha^{s+1}$.
	This finishes the proof of Lemma \ref{lem:high-low-simul}.
\end{proofof}

\subsection{Implementation of a degree oracle using a high-low oracle}

Lemma~\ref{lem:oracle-implementation} follows from Lemma \ref{lem:high-low-simul}
and the next lemma which is almost identical to Lemma~\ref{lem:oracle-implementation},
except that the algorithms now have access to both an independent set oracle and 
a high-low oracle.

\begin{lemma} \label{lem:oracle-implementation2}
	%Let $\eps\in (0,1)$ and $n,\ol{m}$ be positive integers. 
	There~{exists} a  positive integer $q_{*}=q_{*}(\eps, n,\ol{m})$ and two \emph{deterministic} algorithms \emph{$\simDl^*$}
	and \emph{$\simDh^*$},~where
	\emph{$\simDl^*$}$\hspace{0.04cm}(u,i,G,r)$ takes as input a vertex $u\in [n]$,
	an index $i\in [0:s]$, access to both an independent set~\mbox{oracle} $\IS$ and 
	an $(\ol{m},\eps)$-high-low oracle $\DOHL$ of an undirected graph $G=([n],E)$ with $1\le m=|E|\le \ol{m}$,
	and an $r\in \{0,1\}^{q_{*}}$;
	\emph{$\simDh^*$}$\hspace{0.04cm}(u,k,\ell,G,r)$ takes~the same inputs but has the index
	$i$ replaced by indices $k\in [s+1:\beta]$ and $\ell\in [0:\tau]$.
	Both algorithms output a value in $\{0,1\}$ and together~have the following performance guarantee:
	\begin{flushleft}\begin{enumerate}
			\item \emph{$\simDl^*$}$\hspace{0.04cm}(u,i,G,r)$ makes $\alpha^{s-i}\cdot \poly(\log n,1/\eps)$ queries and \emph{$\simDh^*$}$\hspace{0.04cm}(u,k,\ell,G,r)$ makes $\poly(\log n,1/\eps)$ queries to the two oracles $\IS$ and $\DOHL$.
			\item {With probability at least $1-1/n^3$ over $\rr\sim \{0,1\}^{q_{*}}$},
			\emph{$\simDl^*$}$\hspace{0.04cm}(u,i,G,\rr)$ viewed as a map from ${[n]}\times [0:s]\rightarrow \{0,1\}$\ignore{\enote{Before, this map was from vertices in $V$, so changed to $[n]$}}
			and \emph{$\simDh^*$}$\hspace{0.04cm}(u,k,\ell,G,\rr)$ viewed as a map from ${[n]}\times [s+1:\beta]\times [0:\tau]\rightarrow \{0,1\}$ form an $(\ol{m},\eps)$-degree oracle of $G$.% with probability at least $1-1/n^3$
			%\emph{(}over the randomness of $\rr$\emph{)}.   
	\end{enumerate}\end{flushleft}
	%Let $\emph{\ALG}^*$ be an algorithm
	%  that takes four inputs:~an accuracy parameter $\eps\in (0,1)$,
	%  two positive integers $n$ and $\ol{m}$, and access to both the independent~set oracle 
	%  and an $(\ol{m},\eps)$-degree oracle of a graph $G=([n],E)$ with $1\le m=|E|\le \ol{m}$.
	%%be an algorithm that runs on graphs $G=(V,E)$ with $n$ vertices and $|E|\le \ol{m}$,
	%%  via access to the independent set oracle and any $(\ol{m},\eps)$-degree oracle
	%%  and returns a number $\hat{m}$ that satisfies (\ref{eq:guarantees}) with probability at least $1-\poly(n)$. 
	%Assuming that $ {\ALG}^*$ returns a number %\hat{m}$ that satisfies (\ref{eq:guarantees})
	%  with probability at least $1-\delta(n)$,
	%  then there is an randomized algorithm $\ALG$ that runs on the same inputs
	%  but via access only to the independent set oracle of $G$ with the following performance guarantee.
	%$\ALG$ has query complexity $\cost(\ALG^*)\cdot \poly(\log n, 1/\eps)$ and also
	%  returns a number $\hat{m}$ that satisfies (\ref{eq:guarantees}) with probability at least $1-1/\poly(n)$.
\end{lemma}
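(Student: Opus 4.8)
The plan is to simulate an $(\ol{m},\eps)$-degree oracle using the $(\ol{m},\eps)$-high-low oracle to determine, for a given vertex $u$, whether it lies in $L$ or $H$, and then to refine this placement into the correct bucket. First I would use the high-low oracle on $u$: if $\DOHL(u)=0$ (so $\deg(u)\le \alpha^{s+1}$), I treat $u$ as a low vertex and must find the index $i\in[0:s]$ with $u\in L_i$; if $\DOHL(u)=1$ (so $\deg(u)\ge\alpha^{s}$), I treat $u$ as a high vertex and must find $k\in[s+1:\beta]$ with $u\in H_k$ and then $\ell\in[0:\tau]$ with $u\in H_{k,\ell}$. The randomness $r\in\{0,1\}^{q_*}$ will be partitioned into independent blocks, one block for each query to $\DOHL$ that I need to make on the neighborhood of $u$ and one block for each invocation of the degree-checking subroutines, so that all calls succeed simultaneously with probability at least $1-1/n^3$ by a union bound over the $\poly(n)$ relevant $(v,\cdot)$ pairs.

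Next, for the high-degree case, finding $k$ with $\alpha^{k-1}\le\deg(u)\le\alpha^{k+1}$ is straightforward: I run a version of $\CheckHiDegree(u,\alpha^j,G)$ (from Lemma~\ref{lem:CheckHiDegLemma}) for the relevant thresholds $\alpha^j$ with $j\ge s$; since each such check costs only $\poly(\log n,1/\eps)$ queries and there are $O(\log n/\eps)$ of them, this stays within the $\poly(\log n,1/\eps)$ budget for $\simDh^*$. For the second index $\ell$, I need to estimate $\deg(u,L)$, which requires a modified check that samples a set $\bT$ from $[n]$, queries $\DOHL$ on each sampled vertex to restrict attention to its low-degree members, and then uses independent set queries as in $\CheckHiDegree$ to detect whether $u$ has a low-degree neighbor in $\bT$; calibrating the sampling probability to the target degree $\alpha^{k-\ell}$ and applying a Chernoff bound gives the correct $\ell\in[0:\tau]$. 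Crucially, because $u$ is high-degree we know $\deg(u)\ge\alpha^s=\Theta(\sqrt{\ol m})$, so the sampling probabilities involved are bounded away from needing more than $\poly(\log n,1/\eps)$ samples — this is why $\simDh^*$ can afford unit-ish cost. For the low-degree case, finding $i$ with $\alpha^{i-1}\le\deg(u,L)\le\alpha^{i+1}$ (or $\deg(u,L)=0$ for $L_0$) uses the same kind of low-degree-neighbor check, but now the target degree is $\alpha^i\le\sqrt{\ol m}$, so to hit a low-degree neighbor we must sample $\bT$ with probability roughly $1/\sqrt{\ol m}$ rather than $1/\alpha^i$, forcing $\sqrt{\ol m}/\alpha^i=\alpha^{s-i}$ repetitions; this matches exactly the declared cost $\alpha^{s-i}\cdot\poly(\log n,1/\eps)$ of $\simDl^*$ in Definition~\ref{def:degree-oracle}.

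The consistency requirement — that the two maps together genuinely form an $(\ol m,\eps)$-degree oracle, i.e.\ the induced sets $L_i$ and $H_{k,\ell}$ partition $[n]$ and obey all the inequalities of Definition~\ref{def:degree-part} — is handled by noting that every vertex is routed to exactly one branch by $\DOHL(u)$ and within that branch exactly one index (the first threshold at which the check flips, say), so the outputs are deterministic functions of $(u,G,r)$ and the placements are automatically disjoint and cover $[n]$; the degree inequalities follow because, conditioned on $r$ being good, each check correctly certifies the claimed degree range, and the range widths ($\alpha^{i-1}$ to $\alpha^{i+1}$, etc.) are exactly wide enough to absorb the factor-$\alpha$ gap between the ``$\ge(1+\eps)d$'' and ``$\le d$'' guarantees of the checking subroutines. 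I would also need to verify that the high-low partition's own ambiguity (a vertex with $\alpha^s\le\deg(v)\le\alpha^{s+1}$ may land in $H$ or $L$) is consistent with Definition~\ref{def:degree-part}, which it is, precisely because that definition allows $\deg(u)\le\alpha^{s+1}$ for $u\in L$ and $\deg(u)\ge\alpha^s$ for $u\in H$. The main obstacle I anticipate is the correct design and analysis of the ``low-degree-neighbor counter'' when the target degree sits near $\sqrt{\ol m}$ and near the bucket boundaries: one must show that sampling at rate $\Theta(1/\sqrt{\ol m})$ (capped there even when $\alpha^i$ would suggest a larger rate) both keeps $\bT$ an independent set with high probability (via Lemma~\ref{lem:indset}, using $|E|\le\ol m$) and produces a detectable gap in the increment probability between $\deg(u,L)\ge(1+\eps)\alpha^i$ and $\deg(u,L)\le\alpha^i$ — and that the $\DOHL$ filtering step, which itself may err on boundary vertices, does not corrupt these estimates beyond what the slack in Definition~\ref{def:degree-part} tolerates.
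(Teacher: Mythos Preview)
Your high-level plan matches the paper's: route $u$ via $\DOHL$, use $\CheckHiDegree$ at thresholds $\alpha^{k}$ to pin down $k$, a separate check of $\deg(u,L)$ to pin down $\ell$ (resp.\ $i$), and take the unique crossover index to define the bucket; the consistency and randomness-bookkeeping arguments are essentially right.

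The genuine gap is in your ``low-degree-neighbor check''. You propose to ``query $\DOHL$ on each sampled vertex to restrict attention to its low-degree members'', but the sampled set $\bT$ has expected size $\Theta(n/\alpha^k)$ (or $\Theta(n/\alpha^s)$ in the low case), which for $\alpha^k,\alpha^s=\Theta(\sqrt{\ol m})$ is $\Theta(n/\sqrt{\ol m})$---not $\poly(\log n,1/\eps)$. The filtering step alone already blows the per-iteration budget of $\simDh^*$ and likewise inflates $\simDl^*$ beyond $\alpha^{s-i}\cdot\poly(\log n,1/\eps)$. A related problem is your suggestion to calibrate the sampling rate to $\alpha^{k-\ell}$: when $k$ is near $s{+}1$ and $\ell$ near $\tau$ one has $\alpha^{k-\ell}\ll\sqrt{\ol m}$, and by Lemma~\ref{lem:indset} $\bT$ then fails to be an independent set with constant probability, so the test becomes uninformative. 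The paper's fix (Lemmas~\ref{lem:high-low-frac-lemma} and~\ref{lem:check-lo-degree}) is to never filter $\bT$: always sample at rate $\eps/(\alpha^k\log n)$ (resp.\ $\eps/(\alpha^s\log n)$) so that $\bT$ is independent with probability $1-o(\eps)$; when $\bT$ is independent but $\bT\cup\{u\}$ is not, run a binary search to extract a \emph{single} edge $(u,v)$ and make \emph{one} call $\DOHL(v)$, incrementing the counter only if $v\in L$. Each iteration then costs $O(\log n)$ queries, the expected increment is $\Theta(\deg(u,L)/(\alpha^k\log n))$, and different values of $\ell$ correspond to different \emph{thresholds} on the final counter rather than different sampling rates.
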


To prove Lemma \ref{lem:oracle-implementation2}, we need two procedures 
with properties summarized in the following two lemmas.
We delay their proofs but first use them to prove Lemma \ref{lem:oracle-implementation2}.

\begin{lemma}\label{lem:high-low-frac-lemma}
	%For any $\eps \in (0,1)$, let $(H,L)$ be a high-low partition of $\DO_0$ and $(H_k:k\in[s+1:\beta])$ be a high-bucket partition $H$. 
	There is a randomized algorithm $\emph{\CheckHLFrac}\hspace{0.05cm}{(u, k, \ell, G)}$\ignore{\enote{Changed the order of these to sync up with the functions in Lemma 4.4 -- for some reason, the vertex $u$ was the third input.}} which takes as input {a vertex} $u\in [n]$, two integers $k \in [s +1 : \beta]$ and $\ell\in [\tau]$,\ignore{\enote{Before, $\ell \in [0 : \tau]$, so removing the case $\ell = 0$, since this is always a default case} } and access to both an independent set oracle and
	an $(\ol{m},\eps)$-high-low oracle $\DOHL$ with {$(\ol{m}, \eps)$-high-low} partition $(H,L)$ of $G=([n],E)$ with $1\le |E|\le \ol{m}$. 
	The algorithm makes $\poly(\log n,1/\eps)$ queries and has the following properties  when $\alpha^{k-1}\le \deg(u)\le \alpha^{k+1}$: 
	%For every $k\in[s+1:\beta]$, vertex $u \in H_k$, and $\gamma \in (0,1)$, the $\emph{\CheckHLFrac}(k, u, \gamma, \eps)$ satisfies the following :
	\begin{itemize}
		\item If $\deg(u, L) \leq \alpha^{k-\ell}$, {$\emph{\CheckHLFrac}(u, k, \ell, G)$} outputs ``low'' with probability at least 
		$1 - {\eps^2/n^5}$.\vspace{-0.1cm}
		\item If $\deg(u, L) \geq \alpha^{k-\ell+1}$, {$\emph{\CheckHLFrac}(u, k, \ell, G)$} outputs ``high''
		with probability at least $1 - {\eps^2/n^5}$.\vspace{0.07cm}
	\end{itemize}
\end{lemma}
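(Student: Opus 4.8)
The plan is to mimic the sampling strategy behind \CheckHiDegree, but now restricting attention to the low-degree neighbors of $u$. The key observation is that the high-low oracle $\DOHL$ lets us, given any vertex $w$, decide whether $w\in L$ using a single query; so whenever a random test set $\bT$ hits a neighbor of $u$, we can recover that neighbor (via \BinarySearch on $\bT\cup\{u\}$, which is not independent when $\bT$ hits $\Gamma(u)$) and check whether it lies in $L$. The subroutine \CheckHLFrac will repeat $t=\poly(\log n,1/\eps)$ times the following experiment: sample $\bT\subseteq[n]\setminus\{u\}$ by including each vertex independently with probability $\eps/(\alpha^{k-\ell}\log n)$; if $\bT$ is an independent set but $\bT\cup\{u\}$ is not, run \BinarySearch$(n,G,\bT\cup\{u\},\eps^c/n^2)$ to obtain an edge $(u,w)$, query $\DOHL(w)$, and increment a counter $c$ if $w\in L$. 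At the end, output ``high'' if $c>(1+\eps/4)\,t\eps/\log n$ and ``low'' otherwise.

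The sampling probability is tuned to $1/(\alpha^{k-\ell}\log n)$ so that the expected number of low-neighbors of $u$ landing in $\bT$ is $\Theta(\deg(u,L)/(\alpha^{k-\ell}\log n))$, which is $\Theta(\eps/\log n)$ at the decision boundary $\deg(u,L)\approx\alpha^{k-\ell}$. First I would verify that $\bT$ is an independent set with probability $1-o(\eps)$: the subgraph on $[n]\setminus\{u\}$ has at most $\ol{m}$ edges, and $(\text{inclusion probability})^2=\eps^2/(\alpha^{2(k-\ell)}\log^2 n)$; since $k\ge s+1$ and $\ell\le\tau$ with $\alpha^\tau=\Theta(\log^2 n/\eps)$, we have $\alpha^{k-\ell}=\Omega(\alpha^s/\alpha^\tau)=\Omega(\sqrt{\ol m}\cdot\eps/\log^2 n)$, so $\ol{m}\cdot(\text{prob})^2=O(\eps^2\log^2 n/\alpha^{2(k-\ell)})$ which one checks is $o(\eps/\log n)$ — I would double-check this constant against the exact value of $\tau$ used in this section ($\alpha^\tau\ge\log^2 n/\eps$), since that is where the bookkeeping is fussiest. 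Given that $\bT$ is independent with high probability, in the case $\deg(u,L)\ge\alpha^{k-\ell+1}$ the probability that $\bT$ contains a low-neighbor of $u$ is at least $\alpha^{k-\ell+1}\cdot\eps/(\alpha^{k-\ell}\log n)-\binom{\alpha^{k-\ell+1}}{2}(\eps/(\alpha^{k-\ell}\log n))^2\ge (\alpha-o(\eps))\eps/\log n\ge(1+\eps/2)\eps/\log n$, and when this happens \BinarySearch returns an edge $(u,w)$ with $w$ a low-neighbor (note every edge of $\bT\cup\{u\}$ is incident to $u$, since $\bT$ is independent, and at least one such $w$ is in $L$ — here I need the mild point that \BinarySearch might return a non-low neighbor even when a low one exists, so the counter increments only a fraction of the time; this is handled by noting the event ``$\bT$ hits $\Gamma(u,L)$'' is already enough, as \BinarySearch run on $\bT\cup\{u\}$ conditioned on that event and on independence of $\bT$ will find \emph{an} edge, and if I want to guarantee it's a low one I instead binary-search within $\{u\}\cup(\bT\cap L)$, using $\DOHL$ to pre-filter $\bT$). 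In the case $\deg(u,L)\le\alpha^{k-\ell}$, the probability the counter increments is at most the probability $\bT$ meets $\Gamma(u,L)$, which is at most $\alpha^{k-\ell}\cdot\eps/(\alpha^{k-\ell}\log n)=\eps/\log n$.

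With a multiplicative gap of $(1+\eps/2)$ between the two cases in the per-iteration increment probability, a Chernoff bound over $t=\poly(\log n,1/\eps)$ iterations (say $t=\Theta(\log^3 n/\eps^3)$) makes the counter threshold $(1+\eps/4)t\eps/\log n$ separate the two cases except with probability $\eps^2/n^5$, absorbing also the $O(t\cdot\eps^c/n^2)$ total failure probability of the \BinarySearch calls. The query count is $t\cdot(O(1)+O(\log n))=\poly(\log n,1/\eps)$ per call, as required. The main obstacle I expect is the interval-boundary bookkeeping: because the lemma only promises correctness under the hypothesis $\alpha^{k-1}\le\deg(u)\le\alpha^{k+1}$ (so $\deg(u,L)$ could be as large as $\alpha^{k+1}$ even in the ``low'' case, which is a factor $\alpha^{\ell+1}$ above the threshold $\alpha^{k-\ell}$), I must make sure the sampling rate and the Chernoff margins are robust to that slack — concretely, that in the ``low'' branch the increment probability is genuinely $\le\eps/\log n$ and not merely $\le\alpha^{\ell}\eps/\log n$. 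This is fine because the increment requires hitting a \emph{low} neighbor, of which there are at most $\alpha^{k-\ell}$ by hypothesis, not $\deg(u)\le\alpha^{k+1}$; so the bound depends only on $\deg(u,L)$, not on $\deg(u)$, and the argument goes through.
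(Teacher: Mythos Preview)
Your proposal has a genuine gap in the choice of sampling rate. You include each vertex in $\bT$ with probability $p=\eps/(\alpha^{k-\ell}\log n)$, but this is too large when $\ell$ is near $\tau$. Concretely, take $k=s+1$ and $\ell=\tau$: then $\alpha^{k-\ell}=\Theta(\sqrt{\ol m}\,\eps/\log^2 n)$ (using the Section~4 value $\alpha^\tau=\Theta(\log^2 n/\eps)$), and hence
\[
\ol m\, p^2 \;=\; \frac{\ol m\,\eps^2}{\alpha^{2(k-\ell)}\log^2 n}\;=\;\Theta(\log^2 n),
\]
which is certainly not $o(\eps/\log n)$. So $\bT$ fails to be an independent set with nonnegligible probability, and your ``high'' case lower bound on the increment probability collapses.

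There is a second, related failure: with this $p$ the expected number of neighbors of $u$ landing in $\bT$ is $\deg(u)\cdot p\le \alpha^{k+1}p=\alpha^{\ell+1}\eps/\log n$, which for $\ell$ near $\tau$ is $\Theta(\log n)\gg 1$. In particular $\bT$ will almost surely contain vertices of $\Gamma(u,H)$, so even when $\bT$ is independent the binary search on $\bT\cup\{u\}$ may return a high neighbor and the counter will not increment. Your proposed fix of pre-filtering $\bT$ to $\bT\cap L$ via $\DOHL$ does not rescue this, because $|\bT|\approx np=n\eps/(\alpha^{k-\ell}\log n)$ is not $\poly(\log n,1/\eps)$ in general, so the filtering itself would exceed the query budget.

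The paper sidesteps both issues by sampling at the \emph{coarser} rate $p=\eps/(\alpha^k\log n)$. Then $\ol m\,p^2=O(\eps^2/\log^2 n)$ (since $\alpha^k>\sqrt{\ol m}$), and the expected number of neighbors of $u$ in $\bT$ is at most $\alpha\eps/\log n\ll 1$, so with probability $1-o(\eps)$ the set $\bT$ is independent and contains at most one neighbor of $u$. The sufficient event for an increment is then ``$\bT$ contains exactly one neighbor of $u$, that neighbor is in $L$, and $\bT$ is independent,'' which occurs with probability at least $(1+\eps/2)\eps/(\alpha^\ell\log n)$ in the high case and at most $\eps/(\alpha^\ell\log n)$ in the low case. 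The counter threshold is accordingly set at $(1+\eps/4)\,t\eps/(\alpha^\ell\log n)$; since $\alpha^\ell\le\alpha^\tau=O(\log^2 n/\eps)$, a Chernoff bound with $t=\poly(\log n,1/\eps)$ iterations still suffices. Also note the paper uses a \emph{deterministic} binary search here (splitting $\bT$ and querying each half with $u$), avoiding the extra failure term.
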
 

\begin{lemma}\label{lem:check-lo-degree}
	%For any $\eps \in (0,1)$, let $(H,L)$ be the underlying partition of an $(\ol{m}, \eps)$-high-low oracle $\DO_0$. 
	There is a randomized algorithm $\emph{\CheckLoDegree}\hspace{0.04cm}(u, d,G)$ which takes as input~a vertex $u\in [n]$, a parameter $0<d \leq \alpha^s$, and
	access to an independent set oracle~and an $(\ol{m},\eps)$-high-low oracle
	of a graph $G=([n],E)$ with $1\le |E|\le \ol{m}$. The algorithm makes %$O(\alpha^s \log^4 n \log(1/\eps) / (d\eps^6))$ 
	$({\alpha^s}/d)\cdot \poly(\log n,1/\eps)$\ignore{\enote{Changed $\sqrt{\ol{m}}$ to $\alpha^s$ since this fits better with the complexity bounds on Lemma 4.4}}
	queries to the two oracles and satisfies the following two properties:
	\begin{itemize}
		\item If $\deg(u, L)\ge (1+\eps) d$, then {$\emph{\CheckLoDegree}(u, d, G)$} outputs ``high'' 
		with probability at least $1 - {\eps^2/n^5}$.\vspace{-0.1cm}
		\item If $\deg(u, L) \leq d$, then {$\emph{\CheckLoDegree}(u, d, G)$} outputs ``low''
		with probability at least $1 - {\eps^2/n^5}$.
	\end{itemize}
\end{lemma}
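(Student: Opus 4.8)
The plan is to follow the template of $\CheckHiDegree$ (Lemma~\ref{lem:CheckHiDegLemma}): repeat an experiment many times, each time sampling a random $\bT\subseteq[n]\setminus\{u\}$ and using $\IS$ to test whether $\bT$ is independent while $\bT\cup\{u\}$ is not, so that $\BinarySearch(n,G,\bT\cup\{u\},1/\poly(n))$ can extract a neighbor $w$ of $u$ inside $\bT$. Two modifications are forced by the fact that $d\le\alpha^s=\Theta(\sqrt{\ol m})$. First, we cannot use inclusion probability $1/d$: the sampled set would then have size well above the critical threshold $n/\sqrt{\ol m}$ and, by Lemma~\ref{lem:indset}, would almost never be independent; so we set the inclusion probability to $p=c\eps/(\alpha^s\log n)$ for a small absolute constant $c$ and compensate with $N=(\alpha^s/d)\cdot\poly(\log n,1/\eps)$ repetitions. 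Second, since we care about $\deg(u,L)$ rather than $\deg(u)$, after $\BinarySearch$ returns an edge $(u,w)$ we query the high-low oracle $\DOHL(w)$ and increment a counter only when $w\in L$. At the end we output ``high'' if the counter exceeds the threshold $(1+\eps/4)\,Ndp$ and ``low'' otherwise.

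The ``low'' direction is immediate: in any repetition the counter is incremented only if $\bT$ contains a neighbor of $u$ in $L$, an event of probability at most $\deg(u,L)\,p\le dp$, so a Chernoff bound (with the $\poly$ factor in $N$ chosen so that $Ndp=\Theta(\log n/\eps^3)$) keeps the counter below $(1+\eps/4)Ndp$ with probability $1-\eps^2/n^5$. For the ``high'' direction we need a matching lower bound on the per-repetition increment probability. Let $A$ be the event that $\bT$ meets $\Gamma(u)\cap L$. A sufficient condition for the counter to increment is that $A$ holds, $\bT$ is independent, $\bT$ meets no neighbor of $u$ in $H$, and $\BinarySearch$ does not fail --- then every edge of $\bT\cup\{u\}$ is an edge from $u$ into $L$, so $\BinarySearch$ returns such an edge and $\DOHL$ confirms it. The structural observation that makes the third requirement cheap is that $\deg(u,H)\le|H|\le 2\ol m/\alpha^s=O(\alpha^s)$, because every vertex of $H$ has degree at least $\alpha^s$ while the total degree is at most $2\ol m$; hence the expected number of $H$-neighbors of $u$ in $\bT$ is $\deg(u,H)\,p=O(\eps/\log n)$, and since $\Gamma(u)\cap H$ and $\Gamma(u)\cap L$ are disjoint, the event ``$\bT$ meets $\Gamma(u)\cap H$'' is independent of $A$ and has probability $O(\eps/\log n)$.

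What remains is to show that conditioning on $A$ barely helps $\bT$ fail to be independent, and this is the main obstacle: the naive bound $\Pr[\bT\text{ not independent}]\le\ol m p^2=O(\eps^2/\log^2 n)$ is too weak, since $\deg(u,L)\,p$ can itself be far below $\eps^2/\log^2 n$. Instead we estimate the overlap directly: conditioned on a fixed $w\in\Gamma(u)\cap L$ lying in $\bT$, the set $\bT$ is non-independent with probability at most $\ol m p^2+\deg(w)\,p=O(\eps/\log n)$, where the second term uses $\deg(w)\le\alpha^{s+1}$ because $w\in L$; a union bound over $w$ gives $\Pr[A\wedge\bT\text{ not independent}]\le\deg(u,L)\,p\cdot O(\eps/\log n)$. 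Assembling these estimates --- and handling separately the easy sub-case $\deg(u,L)\,p=\Omega(\eps/\log n)$, where $\Pr[A]$ is already $\Omega(\eps/\log n)$, the crude bound $\ol m p^2$ suffices, and the claim holds with large margin --- the per-repetition increment probability is at least $(1-O(\eps/\log n))\,\deg(u,L)\,p\ge(1+\eps/2)\,dp$ when $\deg(u,L)\ge(1+\eps)d$, and a Chernoff bound puts the counter above $(1+\eps/4)Ndp$ with probability $1-\eps^2/n^5$. Finally, each of the $N$ repetitions costs $O(\log n)$ queries to the two oracles, for a total of $(\alpha^s/d)\cdot\poly(\log n,1/\eps)$, and a union bound over the repetitions keeps the cumulative $\BinarySearch$ failure probability at $1/\poly(n)\ll\eps^2/n^5$.
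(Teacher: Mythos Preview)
Your proposal is correct and follows the same template as the paper: sample $\bT$ with inclusion probability $\Theta(\eps/(\alpha^s\log n))$, test $\IS(\bT)$ and $\IS(\bT\cup\{u\})$, binary-search for an edge $(u,w)$, check $\DOHL(w)$, and threshold a counter after $(\alpha^s/d)\cdot\poly(\log n,1/\eps)$ repetitions.

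The analyses differ in two small but pleasant ways. First, the paper bounds $|\Gamma(u,H)|\le\deg(u)\le\alpha^{s+1}$ by tacitly assuming $u\in L$ (this appears in the subroutine's input description but not in the lemma statement); you instead observe $\deg(u,H)\le|H|\le 2\ol m/\alpha^s=O(\alpha^s)$, which gives the same $O(\eps/\log n)$ bound on $\Pr[\bT\cap\Gamma(u,H)\neq\emptyset]$ without any assumption on $u$. Second, the paper decomposes $\bT=\bT_1\cup\bT_2\cup\bT_3$ and lower-bounds the probability of the event ``$|\bT_1|=1$, $\bT_2=\emptyset$, $\bT_1\cup\bT_3$ independent'' (again using $\deg(u)\le\alpha^{s+1}$ to control $\Pr[\bT_1=\{v\}]$), whereas you bound $\Pr[A\wedge\bT\text{ not independent}]$ directly via a union bound over $w\in\Gamma(u,L)$ and handle large $\deg(u,L)$ as a separate easy case. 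Both routes arrive at an increment probability of $(1\pm O(\eps/\log n))\deg(u,L)\,p$; yours has the advantage of not needing the hypothesis $u\in L$ that the paper's proof implicitly uses.
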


\begin{proofof}{Lemma \ref{lem:oracle-implementation2} Assuming Lemma 
		\ref{lem:high-low-frac-lemma} and \ref{lem:check-lo-degree}}
	Similar to the proof of Lemma \ref{lem:high-low-simul}, we let $q_{*}$ be a large enough integer so that 
	a string $r\in \{0,1\}^{q_{*}}$ can store randomness needed by calls to
	\begin{enumerate}
		\item  
		%calls to 
		${\CheckLoDegree}\hspace{0.04cm}(u, \alpha^{i-1},G)$ 
		for all $u\in [n]$ and $i\in [0:s]$;\vspace{-0.06cm}
		\item
		%calls to 
		$\CheckHiDegree\hspace{0.05cm}(u, \alpha^k,G)$ for all $u\in [n]$ and $k\in [s+1:\beta]$; and\vspace{-0.06cm}
		\item %calls to 
		${\CheckHLFrac}({u}, k, \ell,G)$\ignore{\enote{changed order here again} }for all $u\in [n]$,
		$k\in [s+1:\beta]$ and $\ell\in [\tau]$\ignore{\enote{removing the call to $\ell = 0$.}}.
	\end{enumerate}
	Then it follows from Lemma \ref{lem:CheckHiDegLemma}, \ref{lem:high-low-frac-lemma} and \ref{lem:check-lo-degree} and a union bound that,
	when $\rr\sim\{0,1\}^{q_{*}}$, all these calls return a correct answer (in the sense that no property as
	stated in Lemma \ref{lem:CheckHiDegLemma}, \ref{lem:high-low-frac-lemma} and \ref{lem:check-lo-degree}  is violated) with probability $1-1/n^3$.
	We will refer to such an $r\in \{0,1\}^{{q_{*}}}$\ignore{\enote{this was $n$ before, a minor typo} }as a \emph{good} string{, and will show that given correct outputs to all calls listed above, $\simDl^*$ and $\simDh^*$ can implement an $(\ol{m}, \eps)$-degree oracle for $G$. For the remainder of the proof, we consider any fixed good string $r \in \{0,1\}^{q_{*}}$.}
	
	Before describing the implementation details of $\simDl^*$ and $\simDh^*$, it is helpful to discuss results of running
	all these algorithms (1), (2) and (3) on a vertex $u$ when $r$ is good.
	We first consider a vertex~$u$ with $\DOHL(u)=0$ and thus, $u\in L$ and we have 
	$\deg(u)\le \alpha^{s+1}$.  
	In this case we consider the results of running  
	${\CheckLoDegree}\hspace{0.04cm}(u, \alpha^{i-1},G)$ for each $i\in [0:s]$,
	and write $a_{i-1}\in\{\text{``low''},\text{``high''}\}$ to denote the result;
	we set $a_{s}=\text{``low''}$ by default.  
	Then there are two cases. If $\deg(u,L)=0$, then all $a_i=\text{``low''}$;
	if $1\le \deg(u,L)\le \alpha^{s+1}$, we have $a_{-1}=\text{``high''}$ and {by Lemma~\ref{lem:check-lo-degree}, as well as the fact $r$ is good,  there is 
		a unique $i\in [0:s]$ such that $a_{i-1}=\text{``high''}$ and $a_{i}=\text{``low''}$,
		%Because $r$ is good, this
		where $i$ satisfies $\alpha^{i-1} < \deg(u,L) < \alpha^{i+1}$ (which
		intuitively means that we can place $u$ in $L_i$).}\footnote{{More detailed, we note that $a_{-1} = \text{``high''}$ and $a_{s} = \text{``low''}$, so that some index $i \in [s]$ satisfies $a_{i-1} = \text{``high''}$ and $a_i = \text{``low''}$. In order to see this index is unique, note that, if for $i' \neq i$, $a_{i'-1} = \text{``high''}$ and $a_{i'} = \text{``low''}$, then either $i'-1 > i$, or $i' < i-1$, and $\alpha^{i' - 1} < \deg(u, L) < \alpha^{i' + 1}$; however, this contradicts the fact $\alpha^{i - 1} < \deg(u, L) < \alpha^{i+1}$.}}
	
	Next consider a vertex $u$ with $\DOHL(u)=1$ and thus, $u\in H$ and  $\deg(u)\ge \alpha^s$.
	We first consider  
	${\CheckHiDegree}\hspace{0.04cm}(u,\alpha^{k},G)$ for each $k\in [s+1:\beta-1]$
	and use $b_k {\in \{\text{``low'', ``high'' \}}}$ to denote the result;
	we also set $b_\beta=\text{``low''}$ and $b_{s}=\text{``high''}$ by default.
	By Lemma~\ref{lem:CheckHiDegLemma}, as well as the fact $r$ is good, there is a unique $k\in [s+1:\beta]$ such that $b_{k-1}=\text{``high''}$ and $b_{k}=\text{``low''}$, which implies that $\alpha^{k-1} {<} \deg(u ) {<} \alpha^{k+1}$ (so we can place $u$ in $H_k$).\ignore{\enote{The argument I had in the footnote for uniqueness of the indices $k$ (and $i$ for the low degree case) require inequality to get a contradiction.}}
	Next for this particular $k$, we consider ${\CheckHLFrac}\hspace{0.04cm}({u},k, \ell,G)$ for each $\ell\in [\tau]$ and use $c_\ell$ to denote the result;
	we also set $c_0=\text{``low''}$ by default.
	If  all $c_\ell$'s are ``low,'' then $\deg(u,L) {<} \alpha^{k-\tau+1}$ (which we can
	place in $H_{k,\tau}$).
	Otherwise there exists a {unique} $\ell\in [0:\tau-1]$ such that $c_{\ell}=\text{``low''}$ and $c_{\ell+1}=\text{``high.''}$
	In this case we have $\alpha^{k-\ell-1} {<} \deg(u,L) {<} \alpha^{k-\ell+1}$ (which we can place in $H_{k,\ell}$).
	
	We now describe the implementation of $\simDl^*$ and $\simDh^*$ and show that for every good string $r$,
	they together become an $(\ol{m},\eps)$-degree oracle of the graph:
	\begin{flushleft}\begin{enumerate}
			\item For $\simDl^*\hspace{0.04cm}(u,i,G,r)$, where $i\in [0:s]$, we first check $\DOHL(u)$ and return $0$ if $\DOHL(u)=1$ (meaning that $u\in H$).
			There are two special cases: $i=0$ and $i=1$. If $i=0$, we just run $a_{-1}=\CheckLoDegree \hspace{0.04cm}(u, \alpha^{-1},G)$ and 
			if $a_{-1}=\text{``low''}$ return $1$, and return $0$ otherwise.\\
			If $i=1$, run $a_{-1}=\CheckLoDegree \hspace{0.04cm}(u, \alpha^{-1},G)$ and
			$a_1=\CheckLoDegree \hspace{0.04cm}(u, \alpha,G)$ and
			if $a_{-1}=\text{``high''}$ and $a_1={\text{``low''}}$ return $1$, and return $0$ otherwise.
			For general $i\ge 2$, we run $a_i=\CheckLoDegree \hspace{0.04cm}(u, \alpha^i,G)$
			and $a_{i-1}=\CheckLoDegree \hspace{0.04cm}(u, \alpha^{i-1},G)$
			but set $a_i$ to be $\text{``low''}$ by default if $i=s$.
			%%If $i=s$, we return 
			%  return $1$ if the first call outputs ``high.''
			%If $0<i<s$, we return $1$ if the first call outputs ``low'' and the second call outputs %``high''.
			%If $i=0$, we return $1$ if the second call outputs ``low.'' 
			%If $i=0$ and $a_{i-1}=\text{``low,''}$ output $1$;
			If $a_{i-1}=\text{``high''}$ and $a_i=\text{``low''}$, return $1$; otherwise, return $0$.
			\item For $\smash{\simDh^*(u,k,\ell,G,r)}$, where
			$k\in [s+1:\beta]$ and $\ell\in [0:\tau]$,
			we first check $\DOHL(u)$ and return $0$ if $\smash{\DOHL(u)=0}$ (meaning that $u\in L$).
			Next run $b_k= \CheckHiDegree\hspace{0.04cm}(u, \alpha^k,G)$, 
			and $b_{k-1}=\CheckHiDegree\hspace{0.04cm}(u, \alpha^{k-1},G)$
			but set $b_{k-1}=\text{``high''}$ if $k=s+1$ by default and
			set $b_{k}=\text{``low''}$ if $k=\beta$ by default. 
			If $b_{k-1}=\text{``high''}$ and $b_{k}=\text{``low''}$, we continue;
			otherwise we return $0$ (meaning that $u$ does not even belong to $H_k$).
			Finally we run 
			$c_\ell={\CheckHLFrac}\hspace{0.04cm}({u}, k, \ell,G)$ and $c_{\ell+1}={\CheckHLFrac}\hspace{0.04cm}({u}, k, \ell+1,G)$
			but set $c_\ell=\text{``low''}$ by default if $\ell=0$.
			If $\ell=\tau$ and $c_\tau=\text{``low''}$, return $1$; return $0$ otherwise.
			If $\ell<\tau$, return $1$ if
			$c_{\ell}=\text{``low''}$ and $c_{\ell+1}=\text{``high''}$, and return $0$ otherwise.
			%If the first call outputs ``low,'' the second call outputs ``high,''
			%  the third call outputs ``
	\end{enumerate}\end{flushleft}
	Given results of these calls analyzed above, it can be verified that 
	$\simDl^*$ and $\simDh^*$ together implement an $(\ol{m},\eps)$-degree oracle when
	$r$ is a good string. This finishes the proof.
\end{proofof}

{We now provide a proof of Lemma~\ref{lem:oracle-implementation} by using Lemma~\ref{lem:oracle-implementation2} and Lemma~\ref{lem:high-low-simul}.
	
	\begin{proofof}{Lemma~\ref{lem:oracle-implementation}}
		Let $q_{\mathsf{HL}} = q_{\mathsf{HL}}(\eps, n, \ol{m})$ be the integer obtained from Lemma~\ref{lem:high-low-simul}, and $q_{*} = q_{*}(\eps, n, \ol{m})$ be the integer obtained from Lemma~\ref{lem:oracle-implementation2}. We let $q = q_{\mathsf{HL}} + q_{*}$, and we consider a string $\boldr \sim \{0,1\}^q$ defined as the concatenation of $\boldr_1 \sim \{0,1\}^{q_{\mathsf{HL}}}$ and $\boldr_2 \sim \{0,1\}^{q_{*}}$. 
		
		If the function $\HL(\cdot, G, r_1) \colon [n] \to \{0,1\}$ is an $(\ol{m}, \eps)$-high-low oracle of $G$, we say that $r_1$ is a \emph{good} string, and note that by Lemma~\ref{lem:high-low-simul}~ $\boldr_1 \sim \{0,1\}^{q_{\mathsf{HL}}}$ is a good string with probability at least $1 - 1/n^3$. Furthermore, for any fixed $r_1$ which is good, we let $r_2 \in \{0,1\}^{*}$ be  a \emph{good} string if the functions $\simDl^*(\cdot, \cdot, G, r_2) \colon [n] \times [0 : s] \to \{0,1\}$ and $\simDh^*(\cdot, \cdot, \cdot, G, r_2) \colon [n] \times [s+1 : \beta] \times [0:\tau] \to \{0,1\}$, when run with access to the independent set oracle $\IS$ of $G$ and the $(\ol{m}, \eps)$-high-low oracle given by $\HL(\cdot, G, r_1)$, form an $(\ol{m}, \eps)$-degree oracle of $G$. Similarly, by Lemma~\ref{lem:oracle-implementation2},  we have that $\boldr_2 \sim \{0,1\}^{q_{*}}$ is a good string with probability at least $1 - 1/n^{3}$.
		
		As a result, for $\boldr_1$ which is good, and $\boldr_2$ is good (with respect to $\boldr_1$), which occurs with probability $1 - 2/n^3$, the functions $\simDl(\cdot, \cdot, G, \boldr) \colon [n] \times [0 : s] \to \{0,1\}$ and $\simDh(\cdot, \cdot, \cdot, G, \boldr) \colon [n] \times [s+1:\beta] \times [0 : \tau] \to \{0,1\}$ are implemented by calling the functions $\simDl^*$ and $\simDh^*$.% and when these call the $(\ol{m}, \eps)$-high-low oracle, calling $\HL$. 
		We note that these functions form an $(\ol{m}, \eps)$-degree oracle of $G$ which makes queries only to the independent set oracle $\IS$ of $G$. 
		
		Lastly, the upper bound on the query complexities to $\IS$ of $\simDl$ and $\simDh$ follows from the upper bounds on the query complexities of $\simDl^*$ and $\simDh^*$ to $\IS$ and $\HL$, as well as the fact that $\HL$ makes at most $\poly(\log n, 1/\eps)$ queries to $\IS$.
\end{proofof}}

%\subsection{Low Bucket Partition and Oracle}

%\begin{definition}
%An $(\ol{m}, \eps)$-low-bucket partition of $L$ is a partition $P = (L_i : i \in [0 : s])$ of $L$ such that every $u \in L_0$ satisfies $\deg(u, L) = 0$, and every $u \in L_i$ satisfies $\alpha^{i-1} \leq \deg(u, L) \leq \alpha^{i+1}$. 
%\end{definition}

%\begin{definition}
%Let $\DO_0$ be an $(\ol{m}, \eps)$-high-low oracle for an underlying $(\ol{m}, \eps)$-high low partition $(H, L)$. An $(\ol{m}, \eps)$-low-bucket oracle $\DO$ contains an underlying $(\ol{m}, \eps)$-low-bucket partition $(L_i : i \in [0 : s])$ of $L$ and can be accessed via a map $\DO \colon L \times [0 : s] \to \{0,1\}$ where $u \in L$ satisfies $\DO(u, i) = 1$ if and only if $u \in L_i$. The costs of a query $\DO(u, i)$ is $\alpha^{s-i}$.
%\end{definition}

\def\LB{\texttt{Low-Bucket}}

%\begin{lemma}\label{lem:low-bucket-oracle}
%There exists a positive integer $q_{L} = q_{L}(\eps, n, \ol{m})$ and a deterministic algorithm that runs on access to $\IS$, as well as any $(\ol{m}, \eps)$-high-low oracle $\DO_0$ with partition $(H, L)$, \emph{$\LB$}$(v, i, r)$, which takes three inputs: a vertex $v \in L$, an integer $i \in [0 : s]$, and a string $r \in \{0,1\}^{q_{L}}$. The algorithm makes at most $\alpha^{s-i} \poly(\log n, 1/\eps)$ queries to $\IS$ and to $(\ol{m}, \eps)$-high-low oracle $\DO_0$. When $\boldr \sim \{0,1\}^{q_L}$, \emph{$\LB$}$(\cdot, \cdot, \boldr) \colon L \times [0 : s] \to \{0,1\}$ forms an $(\ol{m}, \eps)$-low-bucket oracle with probability at least $1 - 1/n^4$.
%\end{lemma}

%\subsection{Proof of Lemma \ref{}}

\subsection{Proof of Lemma \ref{lem:high-low-frac-lemma}}

We describe {\CheckHLFrac} in Figure \ref{fig:high-to-low-frac}. The procedure shares resemblance with $\CheckHiDegree$
and the main difference is that every time a set $\bT$ is found such that $\bT$ is {an} independent {set} but 
$\bT\cup \{u\}$ is not, we continue to find an edge $(u,v)\in E$ and then use the 
high-low oracle to certify that $v\in L$.
Note that we do not need to run the randomized binary search in order to find an edge $(u,v)\in E$.
Given that $\bT$ is an independent set but $\bT\cup\{u\}$ is not,
one can deterministically split $\bT$ into two parts,
query the two parts together with $u$ separately, and continue with one that is not independent.

\begin{figure}[t!]
	\begin{framed}
		\noindent Subroutine $\CheckHLFrac\hspace{0.05cm}({u}, k,\ell,G)$
		\begin{flushleft}
			\noindent {\bf Input:} {A vertex $u \in [n]$ satisfying $\alpha^{k-1} \leq \deg(u) \leq \alpha^{k+1}$}, integers $k\in [s+1:\beta]$ and $\ell\in [0:\tau]$,
			and access to  an independent set oracle and an $(\ol{m},\eps)$-high-low oracle $\DOHL$ 
			of $G=([n],E)$ with $1\le |E|\le \ol{m}$.\\
			%			with underlying partition $(H,L)$, a vertex $u \in H_k$ for any $(\ol{m}, \eps)$-high-bucket partition $(H_k:k\in[s+1:\beta])$ of $H$, and two parameters $\gamma,\eps \in(0,1)$.\\
			{\bf Output:} Either ``low'' or ``high.'' 
			
			\begin{enumerate}
				\item Let $c$ be a counter, initially set to $0$.
				Repeat for $t =\poly(\log n,1/\eps)$ iterations: %O(\log^2 n \log(1/\eps)/(\gamma\eps^3))$ many iterations, repeat the following procedure:
				\begin{itemize}
					\item Sample $\bT \subseteq [n] \setminus \{ u \}$ by including each element independently with probability $\eps/(\alpha^k\log n)$. If $\bT$ is an independent set but $\bT \cup \{u\}$ in not {(obtained by querying $\IS$)},
					run a deterministic binary search to find an edge $(u, v) \in E$. 
					\item Query $\DOHL(v)$, and increment $c$ if it outputs $0$.\vspace{0.05cm}
				\end{itemize}
				\item If $c > (1+\eps/4)  \eps t \big/ (\alpha^\ell \log n)$, output ``high;'' otherwise, output ``low.'' 
			\end{enumerate}
		\end{flushleft}\vskip -0.14in
	\end{framed}\vspace{-0.2cm}
	\caption{Description of the $\CheckHLFrac$ subroutine.} \label{fig:high-to-low-frac}
\end{figure}
%\begin{proof} In what follows we condition on the event that all the calls to $\BinarySearch$ did not fail. The proof follows the same path as the proof of Lemma~\ref{lem:check-lo-degree}. 

Now we start to prove Lemma \ref{lem:high-low-frac-lemma}.
Consider first the case of $\deg(u, L) \leq \alpha^{k-\ell}$. We note that in any iteration of line 1, the probability $c$ is incremented is at most the probability that a neighbor $v \in \Gamma(u, L)$ is included{, and this} occurs with probability at most
\[ \frac{\eps}{\alpha^k \log n} \cdot \alpha^{k-\ell } \leq \frac{\eps  }{\alpha^{\ell}\log n}. \]

Suppose, on the other hand, that $\deg(u, L)\ge \alpha^{k-\ell+1}$. A sufficient condition for the counter $c$ to be incremented is (1) $\bT$ is an independent set, (2) $\bT$ contains a {unique}\ignore{\enote{this is what we lower-bound the probability by}} neighbor $v \in \Gamma(u, L)$, and (3) $\bT$ avoids all vertices in $\Gamma(u, H)$. Representing $\bT = \bT_1 \cup \bT_2 \cup \bT_3$ where $\bT_1 \subseteq \Gamma(u, L)$, $\bT_2 \subseteq \Gamma(u, H)$, and $\bT_3 \subseteq [n] \setminus \Gamma(u)$, we have:
\begin{align}
	\Prx_{\bT}\left[c \text{ is incremented} \right] &\geq \Prx_{\bT_1, \bT_2, \bT_3}\left[|\bT_1| = 1 \wedge \bT_2 = \emptyset \wedge \bT_1 \cup \bT_3 \text{ is an independent set}  \right] \nonumber \\
	&\geq \Prx_{\bT_2}[\bT_2 = \emptyset]\left(\sum_{v \in \Gamma(u, L)} \Prx_{\bT_1, \bT_3}\left[  \begin{array}{c} \bT_1 = \{ v\} \wedge \\ \bT_1 \cup \bT_3 \text{ is an independent set}\end{array} \right] \right), \label{eq:haha4} \\[0.8ex]
	\Prx_{\bT_2}\left[ \bT_2 = \emptyset\right] &\geq \left(1 - \frac{\eps}{\alpha^k\log n} \right)^{\alpha^{k+1}} \geq 1 - o(\eps). \label{eq:haha5}
\end{align}
We note that since $\deg(u)\le \alpha^{k+1}$, for any $v \in \Gamma(u, L)$,
\begin{align} 
	\Prx_{\bT_1}\left[\bT_1 = \{ v\} \right] &\geq \frac{\eps}{\alpha^k\log n} \left(1 - \frac{\eps}{\alpha^k \log n}  \right)^{\alpha^{k+1}{-1}} \geq \frac{\eps (1-o(\eps))}{\alpha^k \log n}. \label{eq:haha6}
\end{align} 
Finally, conditioning on $\bT_1 = \{ v \}$, $\bT_1 \cup \bT_3$ is an independent set if and only if $\bT_3 \cap \Gamma (v) = \emptyset$ and $\bT_3$ {(which is sampled from $[n] \setminus \Gamma(u)$ and avoids $\Gamma(v)$)} is an independent set. Since $v \in L$, the probability of $\bT_3 \cap \Gamma(v) = \emptyset$ is at least $(1 - \eps / (\alpha^k\log n))^{\alpha^{s+1}}  \geq 1 - o(\eps)$. {As a result, viewing $\bT_3 = \bT_3^{(0)} \cup \bT_3^{(1)}$ where $\bT_3^{(0)} \subset \Gamma(v) \setminus \Gamma(u)$ and $\bT_3^{(1)} \subset [n] \setminus (\Gamma(u) \cup \Gamma(v)$, we have that for any fixed $v \in \Gamma(u, L)$, 
	\begin{align} 
		\Prx_{\bT_1, \bT_3}\left[ \{ v \} \cup \bT_3 \text{ is an independent set} \right] &\geq (1 - o(\eps)) \Prx_{\bT^{(1)}_3}\left[ \bT_3^{(1)} \text{ is an independent set}\right] \geq 1-o(\eps), \label{eq:hahahe}
	\end{align}
	where we used Lemma~\ref{lem:indset} to say $\bT_3^{(1)}$ is an independent set with probability at least $1 - o(\eps)$. Plugging (\ref{eq:haha5}), (\ref{eq:haha6}) and (\ref{eq:hahahe}) back into (\ref{eq:haha4}), and recalling that $|\Gamma(u, L)| = \deg(u, L) \geq \alpha^{k-l+1}$, the probability the counter $c$ is incremented is at least
	\[ \left(1-o(\eps) \right) \cdot \alpha^{k-\ell+1} \cdot \frac{\eps (1-o(\eps))}{\alpha^k \log n} \cdot (1 - o(\eps)) \geq \frac{\eps (1 + \eps / 2)}{\alpha^{\ell}\log n}. \]}\ignore{\enote{just being a bit more formal in this lemma}}
Given that $\alpha^\ell\le \alpha^\tau=O(\log^2 n/\eps)$, it follows from
a Chernoff bound that $\poly(\log n,1/\eps)$ iterations are enough for
the counter to distinguish these two cases with probability at least $1 - \eps^2/n^5$.
%This finishes the proof of the lemma. 
%The lemma follows by combining the above and using a union bound on the failure probability of $\BinarySearch$.
% \medskip\vspace{0.06cm}

%%\begin{figure}[ht!]
%	\begin{framed}
%		\noindent Subroutine $\CheckHLBucket\hspace{0.05cm}(k,u, \ell)$
%		\begin{flushleft}
%			\noindent {\bf Input:} Query access to an independent set oracle $\IS$ of a graph $G$, as well as any $(\ol{m}, \eps)$-high-low oracle $\DO_0$ for a $(\ol{m}, \eps)$-high-low partition $(H,L)$, an integer $k \in [s + 1 : \beta]$, a vertex $u \in H_k$ for any $(\ol{m}, \eps)$-high-bucket partition $(H_k : k \in [s+1:\beta])$ of $H$, and an integer $\ell\in[0:\tau] $.\\
%			{\bf Output:} $1$ or $0$.
%			\begin{enumerate}
%				\item Let $\xi_\ell = \alpha^{-\ell}$ and $\xi_{\ell+1} = \alpha^{-\ell-1}$.
%				\item Run $a_\ell \leftarrow \CheckHLFrac(k, u, \xi_\ell,\eps)$ and $a_{\ell+1} \leftarrow \%CheckHLFrac(k,u,\xi_{\ell+1}, \eps)$.
%				\item If $\ell = \tau$, then output $1$ if and only if $a_{\ell}$ outputs ``low''. If $0< \ell < \tau$ and $a_\ell$ is ``low'' and $a_{\ell+1}$ is ``high'', then output 1 otherwise output 0. If $\ell = 0$, output $1$ if and only if $a_{\ell+1}$ is ``high''.
%			\end{enumerate}%
%		\end{flushleft}\vskip -0.14in
%	\end{framed}\vspace{-0.2cm}
%	\caption{Description of the $\CheckHLBucket$ subroutine.} \label{fig:CheckHLBuckry}
%\%end{figure}

\subsection{Proof of Lemma \ref{lem:check-lo-degree}}
\begin{figure}[t!]
	\begin{framed}
		\noindent Subroutine $\CheckLoDegree\hspace{0.05cm}(u, d,G)$
		\begin{flushleft}
			\noindent {\bf Input:} A vertex $u \in L$, a {parameter} $0 \leq d \leq \alpha^{s+1}$, and 
			query access to independent set oracle $\IS$ and an $(\ol{m},\eps)$-degree oracle 
			of a graph $G=([n],E)$ with $1\le |E|\le \ol{m}$.\\
			{\bf Output:} Either ``low'' or ``high.'' 
			
			\begin{enumerate}
				\item Let $c=0$ be a counter. Repeat for $t = ({\alpha^s}/d)\cdot \poly(\log n,1/\eps)$ many iterations:
				\begin{itemize}
					\item Sample $\bT \subseteq [n] \setminus \{ u \}$ by including each vertex independently with probability $\eps/({\alpha^s}\log n)$.
					If $\bT$ is an independent set but $\bT \cup \{ u \}$ is not {(obtained by querying $\IS$)}, run a deterministic binary search to find an edge $(u, v)\in E$ with $v \in \bT$. 
					\item Query $\DOHL(v)$, and increment the counter
					if it outputs $0$, i.e., $v \in L$.
				\end{itemize}
				\item If {$c > (1+\eps/4) \eps d t\big/ ({\alpha^s} \log n)$}, output ``high;'' otherwise, output ``low.''
			\end{enumerate}
		\end{flushleft}\vskip -0.14in
	\end{framed}\vspace{-0.2cm}
	\caption{Description of the $\CheckLoDegree$ subroutine.} \label{fig:low-degree-check}
\end{figure}

We present the algorithm in Figure \ref{fig:low-degree-check}. The proof follows a similar path as 
that of Lemma \ref{lem:high-low-frac-lemma} {with a few parameters set differently}.

{Suppose $\deg(u, L) \leq d$, the probability that the counter is incremented is at most the probability that a neighbor $v \in \Gamma(u, L)$ is included in $\bT$, which occurs with probability at most $\eps d/({\alpha^s}\log n)$. }\ignore{\enote{presenting this first to have parallels with Lemma~\ref{lem:high-low-frac-lemma}.}}
%In what follows, we condition on the event that all the invocations of $\BinarySearch$ search did not fail.
Suppose $\deg(u, L)\ge (1+\eps)d $, and consider the probability, over the draw of $\bT \subseteq [n] \setminus \{u\}$ that the counter $c$ is incremented at any particular round. {Similarly to the proof of Lemma~\ref{lem:high-low-frac-lemma}}, we note that a sufficient condition for this to occur is when (1) $\bT$ is an independent set, (2) $\bT$ contains a {unique} neighbor $v \in \Gamma(u, L)$,\ignore{\enote{this is what we are lower-bounding the probability by}} and (3) $\bT$ avoids all vertices in $\Gamma(u, H)$. Viewing $\bT = \bT_1 \cup \bT_2 \cup \bT_3$ where $\bT_1 \subseteq \Gamma(u, L)$, $\bT_2 \subseteq \Gamma(u, H)$, and $\bT_3 \subseteq [n] \setminus \Gamma(u)$, we have:
\begin{align}
	\Prx_{\bT}\left[c \text{ is incremented} \right] &\geq \Prx_{\bT_1, \bT_2, \bT_3}\left[|\bT_1| = 1 \wedge \bT_2 = \emptyset \wedge \bT_1 \cup \bT_3 \text{ is an independent set}  \right] \nonumber \\
	&\geq \Prx_{\bT_2}[\bT_2 = \emptyset]\left(\sum_{v \in \Gamma(u, L)} \Prx_{\bT_1, \bT_3}\left[  \begin{array}{c} \bT_1 = \{ v\} \wedge \\ \bT_1 \cup \bT_3 \text{ is an independent set}\end{array} \right] \right), \label{eq:haha} \\[0.8ex]
	\Prx_{\bT_2}\left[ \bT_2 = \emptyset\right] &\geq \left(1 - \frac{\eps}{{\alpha^s} \log n} \right)^{\alpha^{s+1}} \geq 1 - o(\eps). \label{eq:haha2}
\end{align}
Next for each $v \in \Gamma(u, L)$, we have
\begin{align} 
	\Prx_{\bT_1}\left[\bT_1 = \{ v\} \right] &\geq \frac{\eps}{{\alpha^s}\log n} \left(1 - \frac{\eps}{
		{\alpha^s} \log n}  \right)^{\alpha^{s+1}{-1}} \geq \frac{\eps (1-o(\eps))}{{\alpha^s}\log n}. \label{eq:haha3}
\end{align} 
Conditioning on $\bT_1 = \{ v \}$, $\bT_1 \cup \bT_3$ is an independent set if and only if $\bT_3 \cap \Gamma (v) = \emptyset$, and $\bT_3$ is an independent set. Similarly to (\ref{eq:haha2}), since $v \in L$, the probability of $\bT_3 \cap \Gamma(v) = \emptyset$ is at least $(1 - \eps / ({\alpha^s}\log n))^{\alpha^{s+1}}  \geq 1 - o(\eps)$. By Lemma~\ref{lem:indset}, $\bT_3$ {(after avoiding $\Gamma(v)$)} is an independent set with probability at least $1 - o(\eps)$. Therefore, we obtain that (\ref{eq:haha}) is at least
\[ \frac{\eps d (1+\eps) (1-o(\eps))^3}{{\alpha^s}\log n} >\frac{\eps d(1+\epsilon/2)}{\alpha^s\log n}\]
from combining (\ref{eq:haha2}) and (\ref{eq:haha3}). 

By a Chernoff bound, the counter will distinguish these two cases with probability at least $1 - \eps^2/n^5$. %The lemma follows by combining the above and using a union bound over the failure probability of $\BinarySearch$. 

\def\HB{\texttt{High-Bucket}}

\def\HLF{\texttt{High-To-Low-Fraction}}

\ignore{
	In this section we will prove the following main lemma:
	\begin{lemma}\label{lem:check-degree}
		For any graph $G = ([n], E)$ and $\mbar \geq |E|$, as well as $\eps \in (0,1)$, there exists a distribution $\calD$ supported on partitions $(H_{1}, \dots, H_{k}, L_{1}, \dots, L_{\ell})$ of $[n]$ with $k , \ell \leq O(\log n/\eps)$, and two subroutines $\CheckH$ and $\CheckL$ satisfying the following: with probability at least $1 - 1/\poly(n)$ over $(\bH_1,\dots, \bH_{k}, \bL_1, \dots, \bL_{\ell}) \sim \calD$, letting $\bH = \bH_1 \cup \dots \cup \bH_{k}$ and $\bL = \bL_1 \cup \dots \cup \bL_{\ell}$:
		\begin{enumerate}
			\item \label{def:HighBucket} For every $i \in [k]$ and every $u \in \bH_{i}$, $\sqrt{\mbar}(1+\eps)^{i-1} \leq \deg(u) < \sqrt{\mbar}(1+\eps)^{i+1}$.
			\item \label{def:lowBucket} For every $i \in [\ell]$ and every $u \in \bL_{i}$, $(1+\eps)^{i-1} \leq \deg(u, \bL) \leq (1+\eps)^{i+1}$.
			\item \label{lem:CheckH}For each $u \in [n]$ and $i \in [k]$, the subroutine $\CheckH(u, i)$ makes $O(\log^2 n / \eps^3)$ independent set queries and outputs ``yes'' if and only if $u \in \bH_i$. 
			\item \label{lem:checkL} For each $u \in [n]$ and $i \in [\ell]$, the subroutine $\CheckL(u, i)$ makes $O(\sqrt{\ol{m}}\log^4 n / ((1+\eps)^{i} \eps^6))$ independent set queries and outputs ``yes'' if and only if $u \in \bL_i$. 
		\end{enumerate}
	\end{lemma}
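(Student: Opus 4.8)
The plan is to obtain the whole partition, together with $\CheckH$ and $\CheckL$, from a single primitive --- the degree testers $\CheckHiDegree$ and $\CheckLoDegree$ already analyzed in Lemmas~\ref{lem:CheckHiDegLemma} and~\ref{lem:check-lo-degree} --- by fixing all randomness in advance and \emph{defining} $\mathcal{D}$ to be the law of the partition produced by running these testers on the resulting deterministic answers. Concretely, reserve an independent block of fresh coins for each of the (at most) $O(n\log n/\eps)$ invocations of the testers that could ever be made; once the coins are drawn, the answer of every invocation is a deterministic function of $G$, so the partition described below is well defined. With this convention the ``if and only if'' assertions about $\CheckH$ and $\CheckL$ hold by construction, and the content of the lemma is (a) the degree bounds on the buckets and (b) the query-complexity bounds. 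Note that only single-indexed buckets are needed here, so no analogue of $\CheckHLFrac$ enters.

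\emph{The high-low split and the high buckets.} For each $u$, run $\CheckHiDegree(u,\sqrt{\mbar},G)$ on the coins reserved for $u$ and put $u$ in $\bH$ if it answers ``high'' and in $\bL$ otherwise; this yields a map $\DOHL$. For $u\in\bH$, run $\CheckHiDegree\bigl(u,\sqrt{\mbar}(1+\eps)^{j},G\bigr)$ for $j=0,1,2,\dots$ and let $i$ be the least index at which the answer is ``low'' (defaulting to ``low'' once the threshold exceeds $n$, so $i\le O(\log n/\eps)$), and place $u$ in $\bH_i$. On the event that every invocation is correct in the sense of Lemma~\ref{lem:CheckHiDegLemma} --- a $1-1/\poly(n)$ event by a Chernoff bound with $t=\poly(\log n,1/\eps)$ iterations together with a union bound over the $O(n\log n/\eps)$ invocations --- an answer of ``high'' at threshold $\sqrt{\mbar}(1+\eps)^{i-1}$ forces $\deg(u)>\sqrt{\mbar}(1+\eps)^{i-1}$ while ``low'' at $\sqrt{\mbar}(1+\eps)^{i}$ forces $\deg(u)<\sqrt{\mbar}(1+\eps)^{i+1}$, which is exactly the claimed degree bound on $\bH_i$. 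The analysis of each $\CheckHiDegree$ call is verbatim that of Lemma~\ref{lem:CheckHiDegLemma}: an iteration increments the counter whenever $\bT$ is independent and meets $\Gamma(u)$, and Lemma~\ref{lem:indset} (using $\ol{m}\ge|E|$ to bound edges inside $\bT$) makes $\bT$ independent with probability $1-O(\eps^2/\log^2 n)$.

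\emph{The low buckets and the subroutines.} For $u\in\bL$ we repeat the construction with $\CheckLoDegree$ in place of $\CheckHiDegree$: run $\CheckLoDegree\bigl(u,(1+\eps)^{j},G\bigr)$ for $j=0,1,2,\dots$, i.e.\ the experiment ``sample $\bT$ with the fixed probability $\eps/(\sqrt{\mbar}\log n)$; if $\bT$ is independent but $\bT\cup\{u\}$ is not, run a deterministic binary split on $\bT$ to recover an edge $(u,v)$ and query $\DOHL(v)$, incrementing only when $v\in\bL$,'' with the acceptance threshold scaled by $(1+\eps)^{j}$; place $u$ in $\bL_i$ for the least $i$ whose answer is ``low'' (sending $\deg(u,\bL)=0$ vertices to a default bucket). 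By Lemma~\ref{lem:check-lo-degree} the analogous correctness statement holds, again with probability $1-1/\poly(n)$ after a union bound, and on that event the flip index gives $(1+\eps)^{i-1}\le\deg(u,\bL)\le(1+\eps)^{i+1}$. Finally, $\CheckH(u,i)$ is implemented by re-running, on the coins reserved for $u$, the two relevant $\CheckHiDegree$ calls (at thresholds $\sqrt{\mbar}(1+\eps)^{i-1}$ and $\sqrt{\mbar}(1+\eps)^{i}$) and reporting ``yes'' exactly when the first is ``high'' and the second ``low''; since the partition is \emph{defined} by these very calls, the output equals $[u\in\bH_i]$. Each $\CheckHiDegree$ call uses $t=\poly(\log n,1/\eps)$ iterations of two independent-set queries, so $\CheckH$ costs $O(\log^2 n/\eps^3)$ queries. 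Symmetrically, $\CheckL(u,i)$ re-runs the two $\CheckLoDegree$ calls at thresholds $(1+\eps)^{i-1}$ and $(1+\eps)^{i}$; the call at threshold $d=(1+\eps)^{i}$ needs $(\sqrt{\mbar}/d)\cdot\poly(\log n,1/\eps)$ iterations --- a fixed $\bL$-neighbor of $u$ is met by $\bT$ only with probability $\Theta\bigl(\eps/(\sqrt{\mbar}\log n)\bigr)$, so this many iterations are required to collect enough increments for a reliable Chernoff separation --- each costing $O(\log n)$ queries for the deterministic split, for a total of $O\bigl(\sqrt{\mbar}\log^4 n/((1+\eps)^{i}\eps^6)\bigr)$ queries, as claimed. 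A union bound over all $O(n\log n/\eps)$ invocations of each kind gives overall failure probability $1/\poly(n)$.

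\emph{Main obstacle.} The delicate point is the two-level dependency: $\CheckLoDegree$ measures degree into $\bL$, yet $\bL$ is itself produced by the high-low testers, so the ``iff'' for $\CheckL$ risks being circular. The clean resolution is to \emph{freeze} $\DOHL$ first, using a disjoint block of coins, establish its correctness on its own (every vertex with $\deg(u)\le\sqrt{\mbar}$ lands in $\bL$ and every vertex with $\deg(u)\ge(1+\eps)\sqrt{\mbar}$ lands in $\bH$), and only then analyze the low-degree testers \emph{conditioned on} $\DOHL$ being a valid $(\ol{m},\eps)$-high-low partition. Everything else is a repackaging of Lemmas~\ref{lem:indset}, \ref{lem:binary-search}, \ref{lem:CheckHiDegLemma} and~\ref{lem:check-lo-degree} with Chernoff bounds and the union bound above.
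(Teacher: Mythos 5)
Your proposal is correct and follows essentially the same route as the paper: fix all tester randomness up front, let the partition be \emph{defined} by the (now deterministic) outputs of $\CheckHiDegree$ and $\CheckLoDegree$, declare the random string good when every call satisfies the guarantees of Lemmas~\ref{lem:CheckHiDegLemma} and~\ref{lem:check-lo-degree}, and derive the degree sandwiches and the ``iff'' statements on that $1-1/\poly(n)$ event --- including your observation that the high-low split must be frozen first so that $\deg(u,\bL)$ is well defined before the low-degree testers are analyzed. The only cosmetic imprecision is in the cost accounting for $\CheckL$, where the dominant per-iteration cost is really the simulated $\DOHL(v)$ call ($O(\log^2 n/\eps^3)$ independent set queries) rather than the $O(\log n)$ binary split, but this yields exactly the claimed $O(\sqrt{\ol{m}}\log^4 n/((1+\eps)^i\eps^6))$ bound.
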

	
	\newcommand{\CheckHL}{\texttt{CheckH-L}}
	
	\begin{lemma}\label{lem:check-hi-to-lo}
		Fix a graph $G = ([n], E)$ with $\mbar \geq |E|$, $\eps \in (0,1)$, as well as a draw $(\bH_1, \dots, \bH_{k}, \bL_1, \dots, \bL_{\ell}) \sim \calD$ satisfying the conclusions of Lemma~\ref{lem:check-degree}. For every $i \in [k]$, there exists a distribution $\calD_{HL}(\bH_i)$ supported on partitions $(H_{i, 1}, \dots, H_{i, l})$ of $\bH_i$ with $l \leq O(\log n/ \eps)$ and a subroutine $\CheckHL$ satisfying the following: with probability at least $1 - 1/\poly(n)$ over $(\bH_{i, 1}, \dots, \bH_{i, l}) \sim \calD_{HL}(\bH_i)$,
		\begin{enumerate}
			\item For every $j \in [l]$ and $u \in \bH_{i,j}$, $\sqrt{\mbar} (1+\eps)^{i-j-2} \leq \deg(u, \bL) \leq \sqrt{\mbar} (1+\eps)^{i-j+2}$.
			\item For each $u \in \bH_i$ and $j \in [l]$, the subroutine $\CheckHL(u, i, j)$ makes $O(...)$ independent set queries and outputs ``yes'' if and only if $u \in \bH_{i,j}$.
		\end{enumerate}
	\end{lemma}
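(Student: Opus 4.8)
The plan is to repackage the construction of the finer index $\ell$ in the $(\mbar,\eps)$-degree partition — the subroutine $\CheckHLFrac$ of Lemma~\ref{lem:high-low-frac-lemma} together with the assembly step of Lemma~\ref{lem:oracle-implementation2} — restricted to a single high-degree bucket $\bH_i$. Condition on a good draw $(\bH_1,\dots,\bH_k,\bL_1,\dots,\bL_\ell)\sim\calD$, so that $\bH=\bigcup_{i'}\bH_{i'}$ and $\bL=\bigcup_{i'}\bL_{i'}$ partition $[n]$, the degree guarantees of Lemma~\ref{lem:check-degree} hold, and $\CheckH,\CheckL$ correctly decide $\bH$- and $\bL$-membership; in particular one can decide whether a vertex lies in $\bL$ using $\poly(\log n,1/\eps)$ independent-set queries, since $\bL$-membership is the complement of $\bH$-membership and $k=O(\log n/\eps)$. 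For $u\in\bH_i$ we have $\sqrt{\mbar}(1+\eps)^{i-1}\le\deg(u)<\sqrt{\mbar}(1+\eps)^{i+1}$; fix the integer $k_i$ with $(1+\eps)^{k_i}=\Theta(\sqrt{\mbar}(1+\eps)^{i})$, so that $(1+\eps)^{k_i-1}\le\deg(u)\le(1+\eps)^{k_i+1}$ for every such $u$, which is exactly the promise required by $\CheckHLFrac(\cdot,k_i,\cdot,G)$.

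For $u\in\bH_i$ and $j\in[l]$, let $B_j(u)$ be $\CheckHLFrac(u,k_i,j,G)$, run against the $\bL$-membership oracle derived from $\CheckH$ and $\CheckL$. By Lemma~\ref{lem:high-low-frac-lemma}, $B_j(u)$ makes $\poly(\log n,1/\eps)$ queries and, on good internal randomness, reports \texttt{low} whenever $\deg(u,\bL)\le\sqrt{\mbar}(1+\eps)^{i-j}$ and \texttt{high} whenever $\deg(u,\bL)\ge\sqrt{\mbar}(1+\eps)^{i-j+1}$. Take $l=O(\log n/\eps)$ to be the largest index for which $B_l$ still runs in $\poly(\log n,1/\eps)$ time (equivalently $(1+\eps)^{l}=\Theta(\log^2 n/\eps)$, matching the range of $\tau$ in Section~\ref{sec:check-degrees}). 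Fix a single random string $\bR$ supplying the coins of all calls $B_j(u)$ with $u\in\bH_i$, $j\in[l]$; by a union bound over the at most $nl$ calls, with probability $1-1/\poly(n)$ over $\bR$ every call is correct, and we call such $\bR$ good. Given $\bR$: place $u\in\bH_i$ into $H_{i,j}$ for $j<l$ iff $B_j(u)$ reports \texttt{low} and $B_{j+1}(u)$ reports \texttt{high}, and into $H_{i,l}$ iff $B_l(u)$ reports \texttt{low} (the catch-all for very small $\deg(u,\bL)$), with the convention that $B_0(u)$ is \texttt{low}. Let $\calD_{HL}(\bH_i)$ be the law of $(\bH_{i,1},\dots,\bH_{i,l})$ induced by $\bR$.

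On a good $\bR$ this is a genuine partition of $\bH_i$ with the required degree property, by the ``unique transition'' argument used for the index $\ell$ in Lemma~\ref{lem:oracle-implementation2}. The one-sided guarantees give: $B_j(u)$ reporting \texttt{low} forces $\deg(u,\bL)<\sqrt{\mbar}(1+\eps)^{i-j+1}$, and $B_{j+1}(u)$ reporting \texttt{high} forces $\deg(u,\bL)>\sqrt{\mbar}(1+\eps)^{i-j-1}$, so the defining condition for $H_{i,j}$ pins $\deg(u,\bL)$ into the window $(\sqrt{\mbar}(1+\eps)^{i-j-1},\sqrt{\mbar}(1+\eps)^{i-j+1})$. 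Two distinct indices $j$ meeting the condition would force either two disjoint windows to contain $\deg(u,\bL)$, or some $B_j(u)$ to be reported both \texttt{low} and \texttt{high}; and since $B_l(u)$ is \texttt{low} or \texttt{high} and $B_0(u)$ is \texttt{low} by convention, exactly one bucket is selected. This establishes item~1 for $j<l$ with slack relative to the stated window $[\sqrt{\mbar}(1+\eps)^{i-j-2},\sqrt{\mbar}(1+\eps)^{i-j+2}]$ (the extra factor absorbing the $(1\pm\eps)$ uncertainty in $\deg(u)$), while for $j=l$ one obtains only the upper bound $\deg(u,\bL)<\sqrt{\mbar}(1+\eps)^{i-l+1}=O(\eps\deg(u)/\log^2 n)$ — the appropriate statement for a catch-all, since those edges together contribute an additive $O(\eps\mbar/\log^2 n)$ when the buckets are later summed. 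Finally, $\CheckHL(u,i,j)$ recomputes $B_j(u)$, and $B_{j+1}(u)$ if $j<l$, off $\bR$ and returns ``yes'' iff the defining condition holds; on a good $\bR$ this coincides with membership in $\bH_{i,j}$ and uses $\poly(\log n,1/\eps)$ independent-set queries.

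I expect the only non-routine ingredient to be entirely inside Lemma~\ref{lem:high-low-frac-lemma}, whose delicate point is the \emph{lower} bound on the per-round increment probability of $\CheckHLFrac$: a round increments when the sampled set $\bT$ is an independent set containing a \emph{unique} neighbor $v\in\Gamma(u,\bL)$ while avoiding $\Gamma(u,\bH)$, and after conditioning on $v\in\bT$ one must argue the rest of $\bT$ still misses $\Gamma(u,\bH)\cup\Gamma(v)$ and stays independent except with probability $o(\eps)$ — which uses $v\in\bL$ (hence $\deg(v)\le\sqrt{\mbar}(1+\eps)^{s+1}$) together with $|E|\le\mbar$ through Lemma~\ref{lem:indset}, exactly as in the proofs of Lemmas~\ref{lem:high-low-frac-lemma} and~\ref{lem:check-lo-degree}. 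The one genuinely new choice is that of $l$: it has to be large enough that $H_{i,l}$ absorbs every remaining vertex of $\bH_i$, yet $\CheckHLFrac$ at threshold $\sqrt{\mbar}(1+\eps)^{i-l}$ becomes super-polylogarithmic once $(1+\eps)^{l}$ exceeds roughly $\log^2 n/\eps$; hence $\deg(u,\bL)$ cannot be resolved below $\Theta(\eps\deg(u)/\log^2 n)$, the catch-all bucket is unavoidable, and item~1 is to be read with only the upper bound on the last bucket.
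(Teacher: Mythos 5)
Your proposal is correct and follows essentially the same route as the paper: it is exactly the $\CheckHLFrac$ subroutine of Lemma~\ref{lem:high-low-frac-lemma} combined with the unique-transition assembly used for the index $\ell$ in Lemma~\ref{lem:oracle-implementation2}, specialized to a single bucket $\bH_i$. Your caveat that the last (catch-all) bucket admits only the upper bound on $\deg(u,\bL)$ is also right, and matches how the paper ultimately formulates this in Definition~\ref{def:degree-part}.
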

	
	\subsection{Proof of Lemma~\ref{lem:check-degree}}
	
	We will now define and prove the properties of the distribution $\calD$ over partitions satisfying the conclusions of Lemma~\ref{lem:check-degree}. In order to define the distribution over partitions $(\bH_1, \dots, \bH_k, \bL_1, \dots, \bL_{\ell})$ of $[n]$, we will first define a distribution over partitions $(\bH, \bL)$ of $[n]$, and then refine the partition into two, $(\bH_1, \dots, \bH_{k})$ of $\bH$ and $(\bL_1, \dots, \bL_{\ell})$ of $\bL$, to give the desired distribution. At a high level, the distribution over the partitions, $(\bH, \bL)$, is described by first giving a randomized algorithm, $\CheckHiDegree$, which outputs, for each vertex $u$  the choice of whether $u \in \bH$ or $u \in \bL$. As a result, one can think of the distribution over partitions $(\bH, \bL)$ as fixing the randomness of executions of $\CheckHiDegree$ for all vertices. We then similarly refine the partition $(\bH_1, \dots, \bH_k)$ and $(\bL_1, \dots, \bL_{\ell})$ by defining other subroutines, $\CheckH$ and $\CheckL$. The first two conclusions in Lemma~\ref{lem:check-degree} are then given by analyzing the properties of the randomized algorithms which define the partition.
	
	%We first define the $\CheckHiDegree(u, d, \eps)$ subroutine in Figure~\ref{fig:high-degree-check} which intuitively checks whether a particular vertex $u$ has degree approximately $d$ when $d \geq \sqrt{\mbar} / (1+\eps)$.  and then we will use this subroutine in defining the subroutine $\CheckH(u, i)$ in Figure~\ref{fig:checkh}.

	\begin{lemma}\label{lem:high-deg}
		There exists a distribution $\calD_0$ supported on partitions $(H, L)$ of $[n]$ such that with probability at least $1 - 1/\poly(n)$ over the draw of $(\bH, \bL) \sim \calD_0$:
		\begin{enumerate}
			\item For every $u \in \bH$, $\deg(u) \geq \sqrt{\mbar}$.
			\item For every $u \in \bL$, $\deg(u) \leq \sqrt{\mbar}(1+\eps)$.
			\item For every vertex $u$, the \emph{deterministic} algorithm $\CheckHiDegree(u, \sqrt{\mbar}, \eps)$ given by a fixed setting of randomness makes $O(\log^2 n / \eps^3)$ independent set queries and outputs ``high'' if and only if $u \in \bH$ and ``low'' if and only if $u \in \bL$.
		\end{enumerate} 
	\end{lemma}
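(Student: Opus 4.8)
The plan is to obtain the partition by ``freezing'' the random coins of the subroutine \CheckHiDegree\ from Lemma~\ref{lem:CheckHiDegLemma}, exactly in the spirit of the proof of Lemma~\ref{lem:high-low-simul}. Let $\kappa=\kappa(n,\eps)$ be the number of random bits consumed by one call to $\CheckHiDegree(u,\sqrt{\mbar},\eps)$; such a call makes a fixed number of independent set queries no matter how its coins are set, namely $O(\log^2 n/\eps^3)$, since its internal loop runs $t=\Theta(\log^2 n/\eps^3)$ times (the value needed so that the Chernoff step in Lemma~\ref{lem:CheckHiDegLemma} drives the per-call failure probability below $\eps^2/n^5$). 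Let $q_0=n\kappa$ and define $\calD_0$ to be the distribution of the partition produced by the following map from $r\in\{0,1\}^{q_0}$: parse $r$ into blocks $r_1,\dots,r_n\in\{0,1\}^\kappa$, run $\CheckHiDegree(u,\sqrt{\mbar},\eps)$ with coins $r_u$ for every $u\in[n]$, put $u$ in $H$ if the call outputs ``high'' and in $L$ if it outputs ``low.'' Viewed this way, the partition $(\bH,\bL)$ and the algorithm asked for in item~3 are both deterministic functions of the \emph{same} string $r$, and $\calD_0$ is the push-forward of the uniform distribution on $\{0,1\}^{q_0}$.

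For items~1 and~2 I would invoke the correctness guarantee of Lemma~\ref{lem:CheckHiDegLemma} with $d=\sqrt{\mbar}$: its proof only uses $d\ge\sqrt{\mbar}$, via Lemma~\ref{lem:indset}, to ensure that a set $\bT$ keeping each vertex with probability $\eps/(d\log n)$ is an independent set with probability $1-O(\eps^2/\log^2 n)$, so the bound $d=\sqrt{\mbar}$ is permissible. Thus for each fixed vertex $u$: if $\deg(u)\ge(1+\eps)\sqrt{\mbar}$ the call $\CheckHiDegree(u,\sqrt{\mbar},\eps)$ outputs ``high'' with probability at least $1-\eps^2/n^5$, and if $\deg(u)\le\sqrt{\mbar}$ it outputs ``low'' with probability at least $1-\eps^2/n^5$. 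Call a block $r_u$ \emph{good} if the associated call violates neither statement, and call $r$ \emph{good} if all $n$ blocks are good; a union bound gives that a uniform $r\in\{0,1\}^{q_0}$ is good with probability at least $1-n\cdot\eps^2/n^5\ge1-1/\poly(n)$. Whenever $r$ is good, any $u\in\bH$ must have $\deg(u)>\sqrt{\mbar}$ (else the good call would have answered ``low''), which gives item~1, and any $u\in\bL$ must have $\deg(u)<(1+\eps)\sqrt{\mbar}$ (else it would have answered ``high''), which gives item~2. Item~3 is then immediate: with $r$ (hence $(\bH,\bL)$) fixed, the deterministic algorithm on input $u$ just runs $\CheckHiDegree(u,\sqrt{\mbar},\eps)$ with the block $r_u$ and reports ``high''$\Rightarrow u\in\bH$, ``low''$\Rightarrow u\in\bL$; this is correct for all $u\in[n]$ by construction of $\calD_0$, it is deterministic, and it makes $O(\log^2 n/\eps^3)$ queries as noted.

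I do not expect a genuine obstacle here — the argument is bookkeeping — but two points deserve care. First, the iteration count $t$ inside \CheckHiDegree\ must be chosen small enough ($t=\Theta(\log^2 n/\eps^3)$) to meet the claimed query bound yet large enough that the Chernoff bound makes the per-vertex error at most $\eps^2/n^5$, so that the union bound over all $n$ vertices still leaves total failure probability $1/\poly(n)$; tracking the gap between $\eps(1+\eps/2)/\log n$ and $\eps/\log n$ in the analysis of \CheckHiDegree\ is what pins down the exponents. Second, $\calD_0$ should be presented as the image of the uniform distribution over coin strings rather than directly as a distribution over partitions, because one partition may arise from many strings, whereas item~3 needs each individual string to determine both a partition and a deterministic membership oracle that are consistent with each other.
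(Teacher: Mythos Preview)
Your proposal is correct and follows essentially the same approach as the paper's own proof: define $\calD_0$ by fixing independent random coin strings for each call $\CheckHiDegree(u,\sqrt{\mbar},\eps)$, place $u$ in $\bH$ or $\bL$ according to the output, and derive items~1 and~2 from Lemma~\ref{lem:CheckHiDegLemma} via a union bound. The paper's proof is a terse three-line version of exactly this argument; your write-up is more explicit (in particular the contrapositive reasoning for items~1--2 and the remark that $\calD_0$ should be viewed as the push-forward of the coin string), but there is no substantive difference.
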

	
	\begin{proof}
		The distribution $\calD_0$ is given by the following procedure: 1) generate a random function $\rho \colon [n] \to \{0,1\}^*$ where $\brho(u)$ stores the randomness to the call $\CheckHiDegree(u, \sqrt{\mbar}, \eps)$, 2) we let $u \in \bH$ if and only if $\CheckHiDegree(u, \sqrt{\mbar}, \eps)$ with randomness fixed to $\brho(u)$ outputs ``high''. We note that (1) and (2) follow from Lemma~\ref{lem:CheckHiDegLemma}.
	\end{proof}
	
	\begin{lemma}\label{lem:sub-divide-high}
		For any draw $(\bH, \bL) \sim \calD_0$, there exists a distribution $\calD_{\bH}$ supported on partitions $(H_1, \dots, H_k)$ of $\bH$ with $k \leq O(\log (n/\sqrt{\mbar})/\eps)$ such that with probability at least $1 - 1/\poly(n)$ over the draw of $(\bH_1, \dots, \bH_k) \sim \calD_{\bH}$:
		\begin{itemize}
			\item For every $i \in [k]$ and every $u \in \bH_i$, $\sqrt{\mbar}(1+\eps)^{i-1} \leq \deg(u) \leq \sqrt{\mbar}(1+\eps)^{i+1}$.
			\item For every vertex $u \in \bH$, there exists a deterministic algorithm $\CheckH(u, i)$ given by a fixed setting of the randomness making $O(\log^2 n/\eps^3)$ independent set queries which outputs ``yes'' if and only if $u \in \bH_i$.
		\end{itemize}
	\end{lemma}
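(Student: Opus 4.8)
The plan is to mimic the construction behind Lemma~\ref{lem:high-deg}, but with a geometric grid of degree thresholds in place of the single threshold $\sqrt{\mbar}$, and to read off the bucket of a vertex from the place where its answers ``cross over'' from high to low. Let $k$ be the smallest integer with $\sqrt{\mbar}(1+\eps)^{k}\ge n$, so $k=O(\log(n/\sqrt{\mbar})/\eps)=O(\log n/\eps)$ since $\eps\in(0,1)$ and $\sqrt{\mbar}\ge1$. A draw from $\calD_{\bH}$ consists of independently fixing, for every $u\in\bH$ and every $j\in[k-1]$, the random bits used by one execution of $\CheckHiDegree(u,\sqrt{\mbar}(1+\eps)^{j},\eps)$; write $b_j(u)\in\{\text{``low''},\text{``high''}\}$ for the resulting (now deterministic) output, and adopt the conventions $b_0(u)=\text{``high''}$ and $b_k(u)=\text{``low''}$, which cost no queries. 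Each threshold $\sqrt{\mbar}(1+\eps)^{j}$ with $j\ge1$ is at least $\alpha^{s}$ by the choice of $s$ in (\ref{eq:def-s}), so $\CheckHiDegree$ may legitimately be invoked on it; the value $j=0$ is handled purely by the convention. For $u\in\bH$, let $i(u)\in[k]$ be the smallest index $i$ with $b_{i-1}(u)=\text{``high''}$ and $b_i(u)=\text{``low''}$, which exists because the sequence $b_0(u),\dots,b_k(u)$ begins with ``high'' and ends with ``low''; set $\bH_i=\{u\in\bH:i(u)=i\}$, a partition of $\bH$ for every draw.

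For the analysis, consider the \emph{good event} that every one of the at most $n(k-1)$ executions of $\CheckHiDegree$ behaves as in Lemma~\ref{lem:CheckHiDegLemma}; since each fails with probability at most $\eps^2/n^5$, a union bound makes the good event hold with probability $1-1/\poly(n)$, and it suffices to verify the stated properties on it. Fix $u\in\bH$ on the good event. The key point is that $b_0(u),b_1(u),\dots,b_k(u)$ is monotone, in the sense that once it equals ``low'' it stays ``low'': if $b_j(u)=\text{``low''}$ for some $j\ge1$, then Lemma~\ref{lem:CheckHiDegLemma} forces $\deg(u)<\sqrt{\mbar}(1+\eps)^{j+1}$, and in particular $\deg(u)\le\sqrt{\mbar}(1+\eps)^{j+1}$, which is exactly the hypothesis under which the $(j+1)$-st call (or, when $j+1=k$, the convention) returns ``low''. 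Consequently $i(u)$ is the \emph{unique} crossover index, so for $u\in\bH_i$ with $2\le i\le k-1$ the two crossover calls give $\deg(u)>\sqrt{\mbar}(1+\eps)^{i-1}$ and $\deg(u)<\sqrt{\mbar}(1+\eps)^{i+1}$, which is the first bullet; the boundary indices $i=1$ (where the lower bound $\deg(u)\ge\alpha^{s}>\sqrt{\mbar}$ comes instead from $u\in\bH$) and $i=k$ (where the upper bound $\deg(u)\le n-1<\sqrt{\mbar}(1+\eps)^{k}$ comes from the definition of $k$) are handled directly.

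Finally, $\CheckH(u,i)$ recomputes from the fixed random bits only the two outputs $b_{i-1}(u)$ and $b_i(u)$ --- at most two runs of $\CheckHiDegree$, using the conventions when $i\in\{1,k\}$ --- and returns ``yes'' if and only if $b_{i-1}(u)=\text{``high''}$ and $b_i(u)=\text{``low''}$; by the monotonicity above this happens if and only if $i=i(u)$, i.e.\ if and only if $u\in\bH_i$, and the query cost is $2\cdot O(\log^2 n/\eps^3)=O(\log^2 n/\eps^3)$. I expect the one genuinely delicate point to be the well-definedness of the partition: the raw guarantee of $\CheckHiDegree$ has a ``gray zone'' between its two thresholds and is not monotone on its own, so uniqueness of the crossover can only be argued \emph{after} conditioning on the good event, using that consecutive grid thresholds differ by exactly $(1+\eps)$ so that a ``low'' answer at one threshold implies the hypothesis of Lemma~\ref{lem:CheckHiDegLemma} at the next. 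One must also dispatch the two boundary indices and respect the constraint that $\CheckHiDegree$ is only ever invoked with parameter at least $\alpha^{s}$.
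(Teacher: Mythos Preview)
Your proposal is correct and follows essentially the same approach as the paper: fix the randomness of $\CheckHiDegree$ at the geometric grid of thresholds $\sqrt{\mbar}(1+\eps)^{j}$, take a union bound so that all calls satisfy Lemma~\ref{lem:CheckHiDegLemma}, and place each $u\in\bH$ in the bucket given by its (unique) high-to-low crossover, with $\CheckH(u,i)$ simply re-reading the two relevant calls. Your monotonicity argument for uniqueness is a slightly cleaner packaging of the paper's contradiction argument (which rules out a second crossover at $j$ by noting that shared randomness forces $|j-i|\ge 2$ and then deriving incompatible degree bounds), but the content is the same; one small slip is that $u\in\bH$ gives $\deg(u)\ge\sqrt{\mbar}$ rather than $\deg(u)\ge\alpha^{s}$, which is exactly what you need for the $i=1$ lower bound $\sqrt{\mbar}(1+\eps)^{0}$.
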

	
	\begin{proof}
		For $(\bH, \bL) \sim \calD_0$ satisfying the conclusions of Lemma~\ref{lem:high-deg}, a draw $(\bH_1, \dots, \bH_k) \sim \calD_{\bH}$ is given by sampling a random function $\brho \colon \bH \times [k] \to \{0,1\}^*$ which holds the randomness $\brho(u, i)$ for the call $\CheckH(u, i)$, and letting $u \in \bH_i$ if $i$ is the unique index where $\CheckH(u, i)$ outputs ``yes''. Via a union bound over $O(n \log n)$ many executions of $\CheckHiDegree$, with probability at least $1 - 1/\poly(n)$, every output of $\CheckHiDegree(u, d_i, \eps)$ satisfies the conclusions of Lemma~\ref{lem:CheckHiDegLemma}. 
		
		Suppose $\CheckH(u, i)$ outputs ``yes''. By Lemma~\ref{lem:CheckHiDegLemma}, we have that $\sqrt{\mbar}(1+\eps)^{i-1} \leq \deg(u) \leq \sqrt{\mbar}(1+\eps)^{i+1}$. Finally, in order to see that $(\bH_1, \dots, \bH_k)$ forms a partition of $\bH$ note that for every $u \in \bH$, $\deg(u) \leq n$, so that for $d_k = \sqrt{\mbar}(1+\eps)^k = n$, $\CheckHiDegree(u, d_k, \eps)$ outputs ``low''; furthermore, $\deg(u) \geq \sqrt{\mbar}$ so that for $d_0 = \sqrt{\mbar} / (1+\eps)$, $\CheckHiDegree(u, d_0, \eps)$ outputs ``high''. Thus, there exists some $i \in [k]$ which outputs ``yes''. Suppose such an index $i$ is not unique, and there exists $j \neq i$ where $\CheckH(u, j)$ outputs ``yes'', then since randomness of all executions of $\CheckHiDegree$ was fixed, we must have either $j > i+1$ or $j < i-1$, and this contradicts the fact that $\sqrt{\mbar}(1+\eps)^{j-1} \leq \deg(u) \leq \sqrt{\mbar}(1+\eps)^{j+1}$.
	\end{proof}
	
	It now remains to refine the random partition from Lemma~\ref{lem:high-deg} in order to prove Lemma~\ref{lem:check-degree}. We now define the sub-routine $\CheckLoDegree$, which similarly to $\CheckHiDegree$, checks whether a particular vertex $u \in \bL$ has a neighborhood \emph{in $\bL$} of size approximately $d$.
	
	Finally, we may combine Lemma~\ref{lem:high-deg}, Lemma~\ref{lem:sub-divide-high}, and Lemma~\ref{lem:lo-degree-subdivide} in order to prove Lemma~\ref{lem:check-degree}. Specifically, a draw $(\bH_1, \dots, \bH_{k}, \bL_1, \dots, \bL_k) \sim \calD$ is given by first sampling $(\bH, \bL) \sim \calD_0$, and then sampling $(\bH_1, \dots, \bH_k) \sim \calD_{\bH}$ and $(\bL_1, \dots, \bL_{\ell}) \sim \calD_{\bL}$. The sub-routine $\CheckHiDegree(\cdot, \sqrt{\mbar}, \eps)$ first assigns a vertex $u$ into $\bH$ or $\bL$ and then the subroutines $\CheckH$ and $\CheckL$ assign vertices $u$ from $\bH$ or $\bL$ into $(\bH_1, \dots, \bH_k)$ or $(\bL_1, \dots, \bL_{\ell})$, respectively.
	
	\subsection{Proof of Lemma~\ref{lem:check-hi-to-lo}}
	
	We proceed similarly to the proof of Lemma~\ref{lem:check-degree} by defining a randomized algorithm which outputs for every $u \in \bH_i$ and every $j \in [l]$, a decision as to whether $u \in \bH_{i, j}$. The distribution is then taken by sampling the randomness needed to run the algorithm, and the guarantees on $u \in \bH_{i,j}$ are given by analyzing the properties of the randomized algorithm.
}

\section{Lower Bound}

We now turn to proving the lower bound on the query complexity of estimating the number of edges of an
undirected graph $G$ with access to the independent set oracle $\IS$.

We restate the main lower bound theorem:  
%As suggested by the upper bound of Lemma~\ref{lem:up-given}, we will show the lower bound against the stronger assumption that the algorithm knows a rough estimate of the number of edges, and the task is to refine this estimate. 

\begin{theorem}\label{thm:lb}
	Let $n$ and $m$ be two positive integers with $m\le {n\choose 2}$.
	Any randomized algorithm with access to the independent set oracle $\IS$ of an unknown 
	$G=([n],E)$ must make $\min(\sqrt{m},n /\sqrt{m})\cdot (\poly\log n)^{-1}$ many queries in order to distinguish whether $|E|\le m/2$ or~$|E|\ge m$~with probability at least $2/3$.
\end{theorem}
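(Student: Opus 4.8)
The plan is to invoke Yao's minimax principle: it suffices to build two distributions $\Dyes$ and $\Dno$ over graphs on $[n]$ — the first supported, up to probability $o(1)$, on graphs with at most $m/2$ edges, and the second, up to probability $o(1)$, on graphs with at least $m$ edges — such that no \emph{deterministic} algorithm making $q\eqdef\min(\sqrt m,\,n/\sqrt m)/\polylog(n)$ independent set queries can decide whether its input is drawn from $\Dyes$ or $\Dno$ with probability $2/3$. I would first prove an $\widetilde\Omega(n/\sqrt m)$ bound in the regime $m\ge n$ (write $d\eqdef m/n\ge1$) and then reduce the regime $m<n$ to it by padding with isolated vertices. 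For $m\ge n$: a draw $\bG\sim\Dyes$ is obtained by sampling a uniformly random bipartition $(\bA,\ol\bA)$ of $[n]$ and including each pair $(i,j)$ with $i\in\bA$, $j\in\ol\bA$ as an edge independently with probability $d/n$; a draw $\bG\sim\Dno$ does the same but also plants a ``needle'': sample $\bB\subseteq\bA$ by including each vertex independently with probability $d\log n/n$ and turn every pair between $\bB$ and $\ol\bA$ into an edge. Chernoff bounds (conditioning on $\bA$, which is balanced with high probability) give $|E(\bG)|\le m/2$ whp for $\Dyes$, while for $\Dno$ the planted complete bipartite graph between $\bB$ (size $\Theta(d\log n)$) and $\ol\bA$ (size $\Theta(n)$) alone has $\Omega(dn\log n)=\Omega(m\log n)\ge m$ edges whp. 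The key structural fact is that $\Dno$ is exactly ``$\Dyes$ plus the needle'': one samples $\bG\sim\Dno$ by first drawing $\bG'\sim\Dyes$ with its bipartition and then adding all $\bB$--$\ol\bA$ pairs; hence an algorithm can separate $\Dno$ from $\Dyes$ only if some query of it meets $\bB$.

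The second ingredient is that very large queries are essentially information-free. Set the \emph{critical threshold} to $\theta\eqdef(n/\sqrt m)\cdot\log^2 n$. For any query $S$ with $|S|\ge\theta$, in both $\Dyes$ and $\Dno$ both $|S\cap\bA|$ and $|S\cap\ol\bA|$ are $\Omega(\theta)$ whp, so $S$ contains $\Omega(\theta^2)=\Omega((n/d)\log^4 n)$ cross pairs, each an edge with probability $d/n$; thus $S$ is \emph{not} an independent set except with probability $n^{-\omega(1)}$. To exploit this and hand the algorithm the strongest possible side information, I would introduce an \emph{augmented oracle} which, on a query $S$ with $|S|<\theta$, returns the full local picture $(S\cap\bA,\ S\cap\bB,\ E\cap\binom S2)$, and on a query $S$ with $|S|\ge\theta$ merely returns the verdict ``not an independent set.'' One then shows that this oracle simulates the independent set oracle faithfully except with probability $o(1)$ — this follows from the previous sentence, applied across the at most $\poly(n)$ queries while conditioning on the transcript produced so far — so it suffices to prove the $\widetilde\Omega(n/\sqrt m)$ lower bound against the augmented oracle.

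For the indistinguishability, fix a deterministic $q$-query algorithm $\ALG$ with the augmented oracle and couple its runs on $\bG\sim\Dyes$ and $\bG\sim\Dno$ by sharing $\bA$ and the cross-edge randomness ($\bB$ appears only under $\Dno$). By induction on the query count, as long as no query $S$ with $|S|<\theta$ has met $\bB$, the two runs ask the same queries and receive identical answers — on large queries both get ``not an independent set,'' and on small queries disjoint from $\bB$ the local pictures coincide since the only extra edges of $\Dno$ are incident to $\bB$ — so $\ALG$ outputs the same bit. Hence
\[ \Prx_{\bG\sim\Dno}\big[\ALG\text{ outputs ``no''}\big]\ \le\ \Prx_{\bG\sim\Dno}\big[\text{some small query meets }\bB\big]\ +\ \Prx_{\bG\sim\Dyes}\big[\ALG\text{ outputs ``no''}\big], \]
so deciding correctly with probability $2/3$ forces $\Pr_{\Dno}[\text{some small query meets }\bB]\ge\tfrac13-o(1)$. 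To rule this out, condition on the transcript after $t-1$ queries and on the event that no earlier small query met $\bB$: then $\bB$, restricted to the at most $q\theta$ vertices not yet touched by a small query, is still a fresh independent random subset with per-vertex inclusion probability $\tfrac12\cdot d\log n/n$ (the revealed edges inside small queries are independent of it, and the revealed facts ``$S_j\cap\bB=\emptyset$'' only exclude already-seen vertices). So the $t$-th query, of size $<\theta$, meets $\bB$ with conditional probability at most $\theta\cdot d\log n/(2n)=\sqrt m\cdot\polylog(n)/n$; summing over the $q$ queries, $\Pr_{\Dno}[\text{some small query meets }\bB]\le q\cdot\sqrt m\cdot\polylog(n)/n$, which is $o(1)<\tfrac13$ once $q<(n/\sqrt m)/\polylog(n)$ — the desired contradiction.

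For the remaining regime $m<n$, set $n'\eqdef\min(n,m)$ (so $n'\le n$ and $m\le\binom{n'}2$ once $m\ge3$; $m\le2$ is trivial), run the construction above on the vertex set $[n']$ with the same target $m$ (now $m\ge n'$, so it applies), and leave the other $n-n'$ vertices isolated; an independent set query $S\subseteq[n]$ on the padded graph is answered by the $[n']$-oracle on $S\cap[n']$, so the lower bound transfers as $\widetilde\Omega(n'/\sqrt m)=\widetilde\Omega(m/\sqrt m)=\widetilde\Omega(\sqrt m)$, and combining the two regimes gives $\min(\sqrt m,\,n/\sqrt m)/\polylog(n)$. I expect the crux to be the faithful-simulation step together with the adaptive analysis behind it: across an adaptive interaction one must argue both that the augmented oracle's blanket ``not an independent set'' verdict on large queries stays correct (even though later queries depend on earlier honest answers) and that no small query ever meets the tiny planted set $\bB$, and both require conditioning on the running transcript and a uniform $n^{-\omega(1)}$-per-query control rather than a na\"ive union bound; formalizing this cleanly is where the real work lies, while the remaining pieces — the Chernoff estimates, the coupling, and the padding reduction — are routine.
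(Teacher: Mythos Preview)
Your proposal follows essentially the same route as the paper: Yao with the same pair $\Dyes,\Dno$, the observation that $\Dno$ is ``$\Dyes$ plus a planted needle $\bB$'', an augmented oracle that hands out extra information on small queries, a coupling argument showing the two runs coincide unless a query touches $\bB$, and the padding reduction for $m<n$. All of this matches the paper.

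The one substantive difference, and it is a real gap, is how you handle large queries. You have the augmented oracle blanket-return ``not an independent set'' for every $S$ with $|S|\ge\theta$, and claim this simulates the IS oracle because any \emph{fixed} such $S$ fails to be independent except with probability $n^{-\omega(1)}$. But the IS algorithm chooses its queries adaptively: the $t$-th query $S_t$ is a function of the previous IS answers, and conditioning on those bits does not preserve the product structure of the posterior on $(\bA,E)$. A union bound ``across the at most $\poly(n)$ queries'' is not what is needed here --- you would have to union-bound over all $2^q$ nodes of the decision tree, and since $q$ can be as large as $\sqrt n/\polylog(n)$, the per-set bound $e^{-\Omega(\log^4 n)}$ is swamped. (If one tries to rescue this by enlarging $\theta$, the ``small query meets $\bB$'' bound degrades and one only gets a lower bound of roughly $(n^2/m)^{1/3}$ rather than $n/\sqrt m$.) You correctly flag this step as the crux, but the sentence ``this follows from the previous sentence, applied across the queries while conditioning on the transcript'' is precisely where the argument breaks.

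The paper closes this gap with one extra idea in the oracle design. On a large query $Q\subseteq[n]\setminus K$ the oracle samples a \emph{fresh uniformly random} subset $\bL\subseteq Q$ of size $t=\Theta(\sqrt{n/d}\,\log n)$ and reveals the local picture of $\bL$; if $\bL$ happens to be independent it ``fails'' and reveals the whole graph. This buys two things at once. First, the simulation of the IS oracle is now \emph{always} correct: if the oracle does not fail, $\bL$ already certifies an edge inside $S$, and if it fails the simulator has everything. Second, both the ``fail'' event and the ``some revealed vertex lies in $\bB$'' event are now governed by the randomness of $\bL$, which is independent of the algorithm's adaptive choices; hence both reduce to the non-adaptive per-query estimate you already have (applied to a set of size $t$), and the adaptive conditioning issue disappears because the posterior on the unrevealed vertices stays a clean product (their Lemma~5.7). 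With this device in place, the rest of your argument --- the coupling, the bound on hitting $\bB$, and the padding reduction --- goes through exactly as you wrote it.
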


We first establish Theorem \ref{thm:lb} for the case when $m\ge n$;
the case when $m< n$  follows later with a simple reduction to the case when $m\ge n$. 
Now let $m$ be an integer with 
\begin{equation}\label{assumption}
n \le m\le \frac{n^2}{\log^6 n}.
\end{equation}
Note that we further assumed that $m\le n^2/\log^6 n$.
When $\smash{m\ge n^2/\log^6 n}$,  the lower bound we~aim for becomes $\smash{\widetilde{\Omega}(\log^3 n)}$
which holds trivially since (1) $\widetilde{\Omega}$ hides a factor of $\polylog(n)$ %\enote{Usually, $\tilde{\Omega}(f(n))$ holds a polylogarithmic factor of whatever is inside. I?m scared that people will complain here, since for $m = n^2 / \polylog(n)$, the lower bound should be $\polylog(n) / (\log \log n)^{c}$ -- maybe it would make sense to be more explicit in the theorem statement, saying that the query complexity is at least $\sqrt{n} / m \cdot \polylog(n)^{-1}$ ?} 
and (2)
solving the problem requires at least one query to $\IS$ given that $ m\le {n\choose 2}$.

Assuming that $m$ satisfies (\ref{assumption}),
the proof proceeds by Yao's principle.
%for $n \in \N$ and $m\ge n\log n$ (the complement case where $m<n\log n$, will follow by a reduction to the first case), 
In Section \ref{sec:dist} we present two distributions $\Dyes$ and $\Dno$ over undirected 
graphs with vertex set $[n]$ such that 
$\bG\sim \Dyes$~has fewer than $m/2$ edges with probability at least $1-o(1)$
and $\bG\sim \Dno$ has more than $m$ edges with probability at least $1-o(1)$.
Next, we prove in Section \ref{sec:proof} that every deterministic algorithm that distinguishes $\Dyes$ and $\Dno$
must make $\smash{\widetilde{\Omega}(n/\sqrt{m})}$ independent set queries.
This finishes the proof of Theorem \ref{thm:lb} when $m\ge n $.
We work on the case when $m< n $ via a reduction in Section \ref{finalsec}.
%with high probability over $\bG \sim \Dyes$ or $\Dno$, the number of edges of $\bG$ lies between ${m} $ and ${m}\log n$; however, $\bG \sim \Dyes$ will tend to have significantly fewer edges than $\bG \sim \Dno$. Thus, $\ALG$ should be able to distinguish, when given access to an independent set oracle of $\bG$ and ${m}$, whether $\bG$ was sampled from $\Dyes$ or $\Dno$ with high constant probability, and furthermore, we may assume that $\ALG$ is deterministic. However, we will show all such algorithms must make many independent set queries.

\subsection{Distributions}\label{sec:dist} 
%Let $n \in \N$ be a sufficiently large number, and $d \in \N$ be any number with $\log n \leq d \leq n/(2 \log n)$. 
Let $d\eqdef m/n$ (which is not necessarily an integer). Given that $m$ satisfies (\ref{assumption}), we have that
\begin{equation}\label{assumptiond}
1 \le d\le \frac{n}{\log^6 n}.
\end{equation}
Let $q$ be the following positive integer:
$$
q=\left\lceil \sqrt{\frac{n}{d}}\cdot \frac{1}{\log^3 n}\right\rceil
= \Theta\left(\sqrt{\frac{n}{d}}\cdot \frac{1}{\log^3 n}\right).
$$
We consider the following two distributions supported on graphs with vertex set $[n]$:
\begin{flushleft}\begin{itemize}
		\item $\Dno$: A graph $\bG \sim \Dno$ is sampled by first letting $\bA \subseteq [n]$ be a uniformly random subset
		\\ of $[n]$, and $\ol{\bA} = [n] \setminus \bA$. Furthermore, we sample $\bB \subseteq \bA$ by including each element of $\bA$ in $\bB$ independently with probability $d\log n/n$ (note that this is smaller than $1$ by (\ref{assumptiond})). For each $i \in \bA \setminus \bB$ and $j \in \ol{\bA}$, we include the edge $(i,j)$ in $\bG$ independently with probability $d/n$. Finally, we add the edge $(i,j)$ to $\bG$ for every $i \in \bB$ and $j \in \ol{\bA}$.
		
		\item $\Dyes$: A graph $\bG \sim \Dyes$ is sampled by first letting $\bA \subseteq [n]$ be a uniformly random subset of $[n]$, and $\ol{\bA} = [n] \setminus \bA$  as above.
		We set $\bB=\emptyset$ by default in $\Dyes$.\footnote{We introduce $\bB$ in $\Dyes$ only for the 
			purpose of analysis later.} For each $i \in \bA\setminus \bB=\bA$ and each $j \in \ol{\bA}$, we include the edge $(i,j)$ in $\bG$ independently with probability $d/n$.
\end{itemize}\end{flushleft}

We note that with probability at least $1 - o(1)$ over the draw of $\bG \sim \Dyes$, $\bG$ will have no more than 
$m/2$ many edges. 
This follows from Chernoff bound and the fact that there are at most $n^2/4$ many pairs between $\bA$
and $\ol{\bA}$ (so the expected number of edges is no more than $(n^2/4)\cdot (d/n)=m/4$).
%$0.4n\le |\bA|\le 0.6n$ with probability at least $1-\exp(-\Omega(n))$.
%When this happens we can apply Chernoff bound on the $\Omega(n^2)$ many pairs between $\bA$ and $\ol{\bA}$
%  to conclude that $\bG$ has $O(dn)$ edges with probability at least $1-\exp(-\Omega(dn))$.
%with probability $1 - o(1)$, every $i\in \bA$ has degree at most $O(d)$. 
On the other hand, with probability at least $1-o(1)$ over the draw of $\bG \sim \Dno$, $\bG$ will have $\Omega(d n \log n)\ge m$  edges. This is because with probability $1- o(1)$, $|\ol{\bA}| = \Omega(n)$ and $|\bB| = \Omega(d \log n)$. 

As a result, Theorem \ref{thm:lb} (when $m\ge n $) follows from Lemma \ref{mainmainlm} below
because any randomized algorithm that can distinguish $|E|\le m/2$ and $|E|\ge m$ with probability $2/3$
implies a deterministic algorithm $\ALG$ with the same complexity such that
\begin{align*}
	\Prx_{\bG \sim \Dno}\big[\hspace{0.04cm} {\ALG}(\bG) \text{ outputs ``no''}\hspace{0.03cm} \big] - \Prx_{\bG \sim \Dyes}\big[\hspace{0.04cm}  {\ALG}(\bG) \text{ outputs ``no''}\hspace{0.03cm}  \big] \ge 1/3-o(1). 
\end{align*}
% outputs ``yes'' on $\bG\sim\Dyes$ with probability at least $2/3-o(1)$ and 
%  ``no'' on $\bG\sim\Dno$ with probability at least $2/3-o(1)$.

%any algorithm which can estimate the number of edges of a graph to $o(\log n)$-factor with independent set queries will be able to distinguish between $\Dyes$ and $\Dno$ with a high constant probability. Thus, we will prove that any deterministic algorithm making independent set queries must use $\tilde{\Omega}( \sqrt{n/d} )$ queries. 
%Specifically, let $\Alg$ be any deterministic algorithm making $q$ independent set queries which outputs ``yes'' or ``no'', where we set $q \lsim \sqrt{n/d} / \log^3 n$.
%Since the number of edges is $\tilde{\Theta}(nd)$ in both cases, the case where $m\ge n\log n $ in Theorem~\ref{thm:lb} follows by proving the following lemma.

\begin{lemma}\label{mainmainlm}
	%For $n\in\N$ and $\log n \le d\le n/(2\log n)$, 
	%Let $q=\sqrt{n/d} / \log^3 n$.
	Let $\emph{\ALG}$ be a deterministic algorithm that makes $q$ independent set queries. Then
	%, which can output ``yes'' or ``no''. Then,
	\begin{align*}
		\Prx_{\bG \sim \Dno}\big[\hspace{0.04cm} \emph{\ALG}(\bG) \text{ outputs ``no''}\hspace{0.03cm} \big] - \Prx_{\bG \sim \Dyes}\big[\hspace{0.04cm}  \emph{\ALG}(\bG) \text{ outputs ``no''}\hspace{0.03cm}  \big] \leq o(1). 
	\end{align*}
\end{lemma}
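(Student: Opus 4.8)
The plan is to analyze the behavior of a fixed deterministic algorithm $\ALG$ against $\bG \sim \Dno$ by coupling it with a draw from $\Dyes$. Recall from the overview that a graph $\bG \sim \Dno$ can be generated by first drawing $\bG' \sim \Dyes$ with partition $(\bA, \ol{\bA})$, then sampling $\bB \subseteq \bA$ (each vertex independently w.p. $d\log n/n$), and finally adding all pairs between $\bB$ and $\ol{\bA}$. Under this coupling, $\bG$ and $\bG'$ agree unless $\bB$ affects the outcome, so $\ALG$ behaves identically on $\bG$ and $\bG'$ as long as none of $\ALG$'s queries ``detect'' the extra edges incident to $\bB$. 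The main task is therefore to bound the probability of this detection event.

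First I would set up the query tree of $\ALG$: since $\ALG$ is deterministic and makes $q$ queries, its execution on input $\bG$ is described by a decision tree of depth $q$, where each node is labeled by a query set $S \subseteq [n]$ and branches on the oracle answer $\IS(S)$. I would argue that, with high probability over $\bG' \sim \Dyes$ (equivalently over $\bA, \ol{\bA}$ and the base edges), \emph{every} query set $S$ that $\ALG$ could possibly issue along its execution on $\bG'$ has small size — specifically $|S| \le (n/\sqrt{m}) \cdot \polylog(n)$ — \emph{or} is already revealed to be a non-independent set by the base graph $\bG'$ alone. This is the content of the ``critical threshold'' observation: if $|S \cap \bA|$ and $|S \cap \ol{\bA}|$ are both $\Omega((n/\sqrt m)\log n)$, then there are $\Omega((n/d)\log^2 n)$ candidate pairs each present w.p. $d/n$, so $S$ is non-independent in $\bG'$ with probability $1 - 1/\poly(n)$; in that case adding edges incident to $\bB$ cannot change the answer $\IS(S) = 0$. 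Thus, conditioned on a high-probability event about $\bG'$, along the actual execution path the only queries that matter are those with $|S| = O((n/\sqrt m)\log n)$.

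Next, conditioned on $\bG'$ (and hence on the partition $(\bA,\ol\bA)$) and on the high-probability structural event above, I would sample $\bB$ and bound the probability that any of the $\le q$ small query sets $S$ issued by $\ALG$ on $\bG'$ ``touches'' $\bB$ in a way that flips an oracle answer. The key point: if a query $S$ satisfies $\IS_{\bG'}(S) = \IS_{\bG}(S)$ for every small $S$, then $\ALG$ follows the exact same path on $\bG$ and $\bG'$ and outputs the same answer. An answer flips only if $S$ is independent in $\bG'$ but some pair $(i,j)$ with $i \in \bB \cap S$, $j \in \ol{\bA} \cap S$ got added — in particular $S$ must intersect $\bB$. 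For a single query $S$ with $|S| = O((n/\sqrt m)\log n)$, we have $\Pr_{\bB}[S \cap \bB \ne \emptyset] \le |S| \cdot (d\log n/n) = O((d/\sqrt m)\log^2 n) = O(\sqrt{d}\,\log^2 n / \sqrt n)$, which is tiny. Multiplying by $q = \Theta(\sqrt{n/d}/\log^3 n)$ queries gives $O(1/\log n) = o(1)$. A subtlety is that the sequence of small query sets issued by $\ALG$ depends on the oracle answers seen so far, hence potentially on $\bB$; I would handle this by a union bound over the decision tree, or more carefully by induction down the tree, arguing that until the first ``flip'' the path is $\bB$-independent, so the relevant query sets are determined by $\bG'$ alone, and the first-flip probability is bounded as above.

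I expect the main obstacle to be making the last adaptivity argument fully rigorous: one must be careful that ``the queries $\ALG$ issues'' are well-defined functions of $\bG'$ only up until the first divergence, and that the structural high-probability event (all queries along the $\bG'$-path are small-or-already-non-independent) is correctly quantified — since which queries lie on the path is itself a random object. The clean way, which I would pursue, is to phrase everything in terms of the $\bG'$-execution path: fix $\bG'$ in the good event, let $S_1,\dots,S_q$ be the (deterministic given $\bG'$) query sets of $\ALG$ on $\bG'$, observe each satisfies $|S_t| \le O((n/\sqrt m)\log n)$ or $\IS_{\bG'}(S_t) = 0$, and then show $\Pr_{\bB}[\exists t: \IS_{\bG}(S_t) \ne \IS_{\bG'}(S_t)] = o(1)$ — because a change requires $S_t \cap \bB \ne \emptyset$ with $S_t$ also meeting $\ol{\bA}$, whose probability we union-bound over $t \in [q]$. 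On the complement of the good event (probability $o(1)$) and on the ``some flip'' event (probability $o(1)$), we bound the difference of acceptance probabilities trivially by $1$; otherwise $\ALG$ outputs the same on $\bG$ and $\bG'$. Since $\bG' \sim \Dyes$, this yields $\Pr_{\bG\sim\Dno}[\ALG(\bG)=\text{``no''}] - \Pr_{\bG\sim\Dyes}[\ALG(\bG)=\text{``no''}] \le o(1)$, as claimed. To formalize the reduction from the standard oracle, I would invoke the augmented-oracle framework mentioned in Section~\ref{sec:proof}, which replaces the raw $\IS$ answer by additional revealed structure and only makes the algorithm stronger, so the lower bound transfers.
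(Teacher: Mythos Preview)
Your high-level strategy---couple $\Dno$ with $\Dyes$ via the extra set $\bB$ and argue that the two executions of $\ALG$ diverge only if some query ``touches'' $\bB$---is exactly the paper's, and your first-flip analysis for small queries is essentially the content of Lemmas~\ref{lem:bad} and~\ref{lem:good}.

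The gap is in your treatment of large queries. You assert that with high probability over $\bG'\sim\Dyes$ every large query along the $\bG'$-execution path is already non-independent in $\bG'$, and you justify this with the critical-threshold calculation. But that calculation is for a \emph{fixed} set $S$: over a fresh partition $(\bA,\ol{\bA})$, a fixed large $S$ is balanced across the two sides and hence non-independent with high probability. It does not address adaptivity, where the query $S_t$ is a function of earlier answers and hence correlated with $(\bA,\ol{\bA})$; nothing in your outline rules out that $\ALG$ manufactures a large query lying mostly on one side of the partition, which would be independent in $\bG'$ and violate your ``good event.'' A union bound over the decision tree is not available either: when $m=\Theta(n)$ we have $q=\Theta(\sqrt{n}/\log^3 n)$, so the tree has $2^{q}$ nodes, far beyond any $\poly(n)$ failure probability you can extract from the threshold calculation.

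You invoke the augmented oracle at the end as a ``formalization'' device, but in the paper it is precisely the mechanism that closes this gap, not an afterthought. When $|Q|>t$ the augmented oracle draws a \emph{fresh uniformly random} subset $\bL\subset Q$ of size $t$; the randomness of $\bL$ is independent of everything the algorithm has done, so $|\bL\cap\bA|,|\bL\cap\ol{\bA}|=\Omega(t)$ with probability $1-1/\poly(n)$ \emph{regardless of how $Q$ was chosen adaptively}, and then the edge argument goes through. The augmented oracle also reveals the partition labels and all edges among the known vertices, which is what keeps the conditional distribution outside $K$ a clean product (Lemma~\ref{lem:distribution}) and makes the ``$S_t\cap\bB\neq\emptyset$'' calculation valid without further conditioning subtleties. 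In short, the augmented oracle is the substantive idea that makes both halves of the argument rigorous; your direct coupling sketch is the right intuition but does not stand on its own for adaptive large queries.
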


\subsection{Augmented oracle}\label{sec:proof}

To prove Lemma \ref{mainmainlm},  we will work with an \emph{augmented} (independent set) oracle.
We show that any deterministic algorithm with access to the original independent set oracle
can be simulated exactly using the augmented oracle with the same query complexity (Lemma \ref{simulationlem}).
%, which provides more information to the algorithm than the original independent set oracle, i.e., an algorithm with query access to the augmented oracle can easily simulate independent set queries.
As a result lower bounds for the augmented oracle (Lemma \ref{lem:augmented})
carry over to the independent set oracle (Lemma \ref{mainmainlm}). 

The augmented oracle is specifically designed to be queried when the input graph
is drawn from either $\Dyes$ or $\Dno$. 
Suppose that $\bG$ is drawn from $\Dyes$ or $\Dno$ together with
the auxiliary sets $\bA$ and $\bB$ (see Section \ref{sec:dist}).
A deterministic algorithm can access the augmented oracle as follows:
\begin{flushleft}\begin{itemize}
		\item 
		At any time during its execution,
		the algorithm maintains a triple $(K,\ell,e)$ which we will refer to as its current \emph{knowledge triple},
		where $K\subseteq [n]$ is a set of vertices, $\ell \colon K \to \{ \ol{a},a, b\}$
		assigns one of three labels to each vertex in $K$,
		and $e \colon K \times K \to \{0,1\}$.
		We refer to vertices in $K$ as \emph{known} vertices.
		Initially, $K = \emptyset$ (and both $\ell$ and $e$ are trivial) and will grow as the result of queries made by the algorithm to the
		augmented oracle (see the next paragraph).
		For each vertex $i \in K$, $\ell(i)$ indicates whether $i \in \ol{\bA}, \bA \setminus \bB$ or $\bB$: if $i \in \ol{\bA}$, then $\ell(i) = \ol{a}$;  if $i \in \bA \setminus \bB$, then $\ell(i) = a$;
		if $i \in \bB$, then $\ell(i) = b$.\footnote{Recall that when $\bG \sim \Dyes$, we set $\bB = \emptyset$ by default. As a result, $\ell(i)=b$ can never happen when $\bG\sim\Dyes$.} Moreover, for any vertices
		$i, j \in K$, $e(i,j)$ is the indicator of whether $(i,j)$ lies in $\bG$ or not.
		%Throughout the execution of the algorithm, the augmented oracle maintains a set $K \subseteq [n]$
		%  of vertices, which we refer to as \emph{known} vertices, with a labeling $\ell \colon K \to \{ \ol{a},a, b\}$ and a function $e \colon K \times K \to \{0,1\}$.\footnote{In fact, as it will become clear in the next paragraph, the augmented oracle is designed so that the triple $(K, \ell, e)$ is the only information that the algorithm will know about $\bG,\bA$ and $\bB$.}  For any $i \in K$, the labeling $\ell(i)$ will indicate whether $i \in \ol{\bA}, \bA \setminus \bB$ or $\bB$, i.e., if $i \in \ol{\bA}$, then $\ell(i) = \ol{a}$;  if $i \in \bA \setminus \bB$, then $\ell(i) = a$;
		%  if $i \in \bB$, then $\ell(i) = b$.\footnote{Recall that when $\bG \sim \Dyes$, we set $\bB = \emptyset$ by default. As a result, $\ell(i)=b$ can never happen when $\bG\sim\Dyes$.} Furthermore, for any $i, j \in K$, $e(i,j)$ is the indicator that $(i,j)$ is an edge of $\bG$ or not.
		\item 
		At the beginning of each round, based on its current knowledge triple $(K,\ell,e)$, 
		the algorithm can deterministically send a query specified by 
		%Assume that $(K,\ell,e)$ is currently what the algorithm knows about $\bG,\bA$ and $\bB$.
		%The algorithm is given $K$, as well as the functions $\ell \colon K \to \{  \ol{a},a, b\}$ and $e \colon K \times K \to \{0,1\}$. The algorithm will learn more information about the underlying graph $\bG$ by making queries to the oracle to increase the size of $K$. 
		a set $Q \subseteq [n] \setminus K$ to the augmented oracle.
		The oracle then reacts to the query as follows:
		\begin{itemize}
			\item If $|Q| \leq t$, where $t$ denotes  the following integer
			parameter$$t\eqdef \left\lceil \sqrt{n/d}\cdot \log n\right\rceil=\Theta\left(\sqrt{n/d}\cdot \log n\right),$$ the oracle sends a new knowledge triple
			$(K,\ell,e)$ to the algorithm with $K \leftarrow K \cup Q$ and with both $\ell$ and $e$ updated 
			according to $\bG$, $\bA$ and $\bB$.\vspace{0.1cm} % and sends the new triple $(K,\ell,e)$ to the algorithm. }
			\item If $|Q| > t$, the oracle samples a subset $\bL \subseteq Q$ of size $t$ uniformly at random.
			% at random and updates $K \leftarrow K \cup \bL$ (as well as
			%  $\ell$ and $e$). 
			If $\bL$ is not an independent set of $\bG$, the oracle sends 
			a new knowledge triple $(K,\ell,e)$ to the algorithm with $K\leftarrow K\cup \bL$.
			%the updated $(K,\ell,e)$ to the algorithm;
			If $\bL$ happens to be an independent set of $\bG$, we say the oracle ``\emph{fails}''   and it 
			sends a new knowledge triple $(K,\ell,e)$ with $K\leftarrow [n]$ (i.e., in this case the
			oracle simply gives up and sends the whole graph to the algorithm). \end{itemize}
\end{itemize}\end{flushleft}
Note that even when the algorithm is deterministic,
the augmented oracle is randomized due to $\bL$.

We show that any algorithm with access to the original independent set oracle 
can be simulated using the augmented oracle with the same query complexity.
\begin{lemma}\label{simulationlem}
	Let $\emph{\ALG}$ be a deterministic algorithm with access to the independent set oracle.
	Then there is a deterministic algorithm $\emph{\ALG}^*$ with access to the augmented oracle (running
	over 
	$\bG$ drawn from either $\Dyes$ or $\Dno$ only) such that
	$\emph{\ALG}^*$ has the same query complexity as $\emph{\ALG}$, 
	$$
	\Prx_{\bG \sim \Dno}\big[\hspace{0.04cm} \emph{\ALG}(\bG) \text{ outputs ``no''}\hspace{0.04cm} \big]
	=\Prx_{\bG \sim \Dno}\big[\hspace{0.04cm} \emph{\ALG}^*(\bG) \text{ outputs ``no''}\hspace{0.04cm} \big],\footnote{Note that the second probability is over not only the draw of $\bG\sim \Dno$ but also
		the randomness of the augmented oracle. The same comment applies to similar expressions in the rest of the section.}
	$$
	and the same equation holds for $\Dyes$.
\end{lemma}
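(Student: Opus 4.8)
The plan is to prove something stronger than the two stated equalities: for \emph{every} graph $G$ in the support of $\Dyes \cup \Dno$ (together with its auxiliary sets $A,B$) and for \emph{every} realization of the internal randomness of the augmented oracle, $\ALG^*$ produces exactly the same sequence of (independent-set-query, answer) pairs and hence the same output as $\ALG$ run against the genuine independent set oracle on $G$. Given this pathwise identity, both equalities follow at once by taking probabilities over $\bG$ (and, on the right-hand side, additionally over the oracle's coins), and the query-complexity claim is immediate from the construction.

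The construction of $\ALG^*$ is the obvious one: it runs $\ALG$ internally and maintains the knowledge triple $(K,\ell,e)$ returned by the augmented oracle, starting from $K=\emptyset$. Whenever the simulated $\ALG$ asks an independent set query on a set $S\subseteq[n]$, the algorithm $\ALG^*$ does the following. If $S\subseteq K$ it answers directly from $e$, reporting ``$S$ is independent'' iff $e(i,j)=0$ for all $i,j\in S$. Otherwise it sends the query $Q = S\setminus K$ to the augmented oracle (note $Q\neq\emptyset$ and $Q\subseteq[n]\setminus K$, so this is a legal query), receives the updated triple $(K',\ell',e')$, and then: if $S\subseteq K'$ it again reads the answer off $e'$; if $S\not\subseteq K'$ it answers ``$S$ is not independent.'' It feeds this answer to $\ALG$ and continues; when $\ALG$ halts, $\ALG^*$ outputs whatever $\ALG$ outputs. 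Crucially, $\ALG^*$ uses the extra information in $(\ell,e)$ only to compute these yes/no answers — it never exposes labels or edges to the simulated $\ALG$ beyond what an independent set oracle would — so the simulated run of $\ALG$ is driven only by yes/no answers, exactly as in a real execution.

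It remains to check that the answer $\ALG^*$ feeds to $\ALG$ is always the true independent set oracle answer on $S$. The knowledge triple always records the true induced subgraph on $K$ (that is how the oracle maintains $e$), so when $S\subseteq K$ (or $S\subseteq K'$) the reported answer is correct. The only other case is $S\not\subseteq K'$ after querying $Q=S\setminus K$; by the definition of the augmented oracle this happens only when $|Q|>t$ and the uniformly sampled $\bL\subseteq Q$ with $|\bL|=t$ is \emph{not} an independent set of $G$ (if $|Q|\le t$ then $K'\supseteq K\cup Q\supseteq S$, and if $|Q|>t$ with $\bL$ independent the oracle ``fails'' and sets $K'=[n]\supseteq S$). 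In that remaining case $\bL\subseteq Q\subseteq S$ contains an edge of $G$, so $S$ is genuinely not an independent set and the answer ``not independent'' is correct. A straightforward induction on the number of queries made so far now shows the simulated transcript of $\ALG$ inside $\ALG^*$ coincides with the transcript of $\ALG$ against the true oracle on $G$: the histories start equal, identical history forces $\ALG$ to issue the same next query by determinism, and $\ALG^*$ returns the true oracle answer to it. Finally $\ALG^*$ makes at most one augmented query per independent set query of $\ALG$, so its query complexity is no larger (and can be padded to match exactly); taking probabilities over $\bG\sim\Dno$ and $\bG\sim\Dyes$ gives the two stated equalities.

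The only subtle point, and the one I would flag as the ``hard'' step, is the large-query case $|Q|>t$: one must observe that the augmented oracle's seemingly unhelpful behavior never leaves $\ALG^*$ stuck — if the random subsample $\bL$ already witnesses non-independence it exhibits an edge \emph{inside} $S$, and if $\bL$ is independent the oracle's ``failure'' reveals the entire graph, which is far more than enough to answer. Everything else (the faithfulness of $e$, the determinism of $\ALG$, the bookkeeping of which vertices are already in $K$) is routine, and I would keep the write-up at the level of the induction above rather than belaboring it.
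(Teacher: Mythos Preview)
Your proof is correct and follows essentially the same approach as the paper: simulate $\ALG$ query by query, send $S\setminus K$ to the augmented oracle, and use the returned triple (or the witnessed edge in $\bL$, or the full graph on failure) to recover the true independent-set answer for $S$. Your write-up is in fact a bit more careful than the paper's---you explicitly handle the case $S\subseteq K$, make the pathwise coupling over the oracle's coins explicit, and note the padding needed to match query counts exactly---but the underlying argument is identical.
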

\begin{proof}
	%First, note that this augmented oracle is strictly stronger than the independent set oracle. 
	The algorithm $\ALG^*$ simulates $\ALG$ query by query as follows.
	Let $(K,\ell,e)$ be the current knowledge triple of $\ALG^*$ (with $K=\emptyset$ initially),
	and let $S\subseteq [n]$ be the next query of $\ALG$.
	So $\ALG^*$ needs~to know if $S$ is an independent set or not in order to continue 
	the simulation of $\ALG$.
	%In particular, suppose an algorithm wants to make an independent set query $S \subset [n]$. 
	%Then $\ALG^*$ first checks if $S \cap K$ contains an edge using $e$.
	%If $S\cap K$ contains an edge, $\ALG^*$ knows that $S$ is not an independent set
	%  and can continue the simulation of $\ALG$ (without making any query to the augmented oracle).
	
	%If $S\cap K$ is an independent set,
	%, the algorithm simply returns that $S$ is not an independent set (and makes no new queries to the oracle). When $S \cap K$ is in an independent set, 
	For this purpose $\ALG^*$ queries $S \setminus K$ to the augmented oracle.
	If $|S \setminus K| \leq t$, 
	$\ALG^*$ will receive~an updated knowledge triple from the augmented oracle with $K\leftarrow S\cup K$.
	With the updated $e$, $\ALG^*$ can determine if $S$ is an independent set or not and
	continue the simulation of $\ALG$.
	
	On the other hand,  when $|S \setminus K|> t$, one of the following two events will occur:
	either  1)  $\ALG^*$ will receive an update from the augmented oracle
	with $K \leftarrow K \cup \bL$, where $\bL \subseteq (S \setminus K)$ is not an independent set,
	(2) or the oracle ``fails'' and $\ALG^*$ receives the whole graph.
	In the first case, $\ALG^*$ knows that $S$ is not an independent set and can continue the
	simulation.
	In the second case, $\ALG^*$ can use the graph to finish the simulation of $\ALG$.
	%$K \leftarrow [n]$, gaining access to the whole graph. 
	The above simulation of $\ALG$ uses no more queries that $\ALG$ itself.
	This finishes the proof of the lemma.
	%.so a lower bound on algorithms with access to the augmented oracle will give a lower bound for algorithms with access to the independent set oracle.
\end{proof}\medskip

Given Lemma \ref{simulationlem}, Lemma \ref{mainmainlm} follows directly from the following lemma:

\begin{lemma}\label{lem:augmented}
	%For $n\in\N$ and $\log n \le d\le n/(2\log n)$, 
	Let $\emph{\ALG}^{*}$ be any  deterministic algorithm that makes $q$ queries to the augmented oracle
	(over graphs $\bG$ drawn from either $\Dyes$ or $\Dno$ only). Then we have
	%\hspace{0.065cm}\footnote{Note that the probabilities below are  over not only $\bG$ drawn from $\Dno$ or $\Dyes$
	%but also over the randomness (i.e. samples $\bL$) of the augmented oracle.}
	%ny deterministic algorithm making $q \lsim \sqrt{n/d} / \log^3 n$ queries, $\bQ_1, \dots, \bQ_q$ to the augmented oracle responding with $\bL_1, \dots, \bL_q$ which outputs ``yes'' or ``no''. Then,
	\[ \Prx_{\substack{\bG \sim \Dno}}\big[\hspace{0.04cm} \emph{\ALG}^*(\bG) \text{ outputs ``no''}\hspace{0.03cm} \big] - \Prx_{\substack{\bG \sim \Dyes }}\big[\hspace{0.04cm} \emph{\ALG}^*(\bG) \text{ outputs ``no''}\hspace{0.03cm} \big] \leq o(1). \]
\end{lemma}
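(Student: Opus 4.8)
The plan is to prove Lemma~\ref{lem:augmented} via a coupling of $\Dyes$ and $\Dno$ built on the first observation in Section~\ref{sec:overview}: a $\Dno$ graph is obtained from a $\Dyes$ graph by planting the set $\bB$ and adding all edges between $\bB$ and $\ol{\bA}$. Concretely, I would sample $\bG'\sim\Dyes$ with partition $(\bA,\ol{\bA})$, then independently sample $\bB\subseteq\bA$ by including each vertex with probability $d\log n/n$, and let $\bG$ be $\bG'$ together with the complete bipartite graph between $\bB$ and $\ol{\bA}$; then $\bG\sim\Dno$. I would also couple the internal randomness of the augmented oracle, using the \emph{same} random $t$-subset $\bL$ on each large query in the two runs of $\ALG^*$. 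Since $\ALG^*$ is deterministic, its entire execution (in particular its output) is a function of the sequence of knowledge triples it is handed, so it suffices to show that these two sequences coincide except on an event of probability $o(1)$: on the complement of that event one has $\ALG^*(\bG)=\ALG^*(\bG')$, so the left-hand side of the lemma is at most the probability of the event.

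I would take the ``bad event'' to be $\mathcal{B}=\mathcal{B}_1\cup\mathcal{B}_2$, where $\mathcal{B}_1$ is the event that some vertex set ever added to the knowledge set $K$ meets $\bB$, and $\mathcal{B}_2$ is the event that on some large query ($|Q|>t$) the random subset $\bL$ is an independent set of $\bG'$ --- i.e.\ the oracle ``fails'' in the $\Dyes$-run (which also forces failure in the $\Dno$-run since $\bG\supseteq\bG'$). The inductive core is that, as long as neither $\mathcal{B}_1$ nor $\mathcal{B}_2$ has yet occurred, the knowledge triples in the two runs are identical: under this hypothesis $K\cap\bB=\emptyset$, so all revealed labels lie in $\{\ol{a},a\}$ and never in $b$; moreover $\bG$ and $\bG'$ agree on every pair of vertices neither of which lies in $\bB$, so the revealed edge information agrees on $K$; and on a large query, $\bL$ is non-independent already in $\bG'$ (hence also in $\bG$), so in both runs the oracle reveals exactly the $t$-set $\bL$ rather than the whole graph. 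Since the triples agree, $\ALG^*$ issues the same next query and the coupled oracle responds identically, which extends the induction; hence the event $\mathcal{E}$ that the two traces ever diverge satisfies $\mathcal{E}\subseteq\mathcal{B}$.

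It remains to bound $\Pr[\mathcal{B}]$, which I would do by summing over rounds the probability that $\mathcal{B}$ first occurs in that round. First, $|K|$ never exceeds $qt=\Theta\bigl((n/d)/\log^2 n\bigr)=o(n)$, since each round enlarges $K$ by at most $t$ vertices. Conditioning on $\mathcal{B}$ not having occurred so far, the history is a deterministic function of ``used'' randomness that reveals nothing about the partition, the edges, or the $\bB$-membership of any vertex outside $K$, so those remain fresh with their original product distributions. For the $\mathcal{B}_1$-contribution of a round, at most $t$ vertices of $[n]\setminus K$ are revealed (a small query reveals $|Q|\le t$, a non-failing large query reveals $|\bL|=t$), each lying in $\bB$ independently with probability $d\log n/(2n)$; summing over all revealed vertices gives $\Pr[\mathcal{B}_1]\le qt\cdot d\log n/(2n)=\Theta(1/\log n)=o(1)$. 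For the $\mathcal{B}_2$-contribution, a deferred-decisions argument gives that on a large query $\bL$ (fixed as a set of $t$ indices before the partition is exposed on $Q$) has $|\bL\cap\bA|,|\bL\cap\ol{\bA}|\ge t/3$ except with probability $e^{-\Omega(t)}$ by Chernoff, and conditioned on such a balanced split the $\bA$--$\ol{\bA}$ pairs inside $\bL$ --- at least $t^2/9=\Omega\bigl((n/d)\log^2 n\bigr)$ of them --- are present in $\bG'$ independently with probability $d/n$ each, so $\Pr[\bL\text{ is independent in }\bG']\le (1-d/n)^{t^2/9}\le e^{-\Omega(\log^2 n)}$, using $t\ge\log^4 n$, which follows from $d\le n/\log^6 n$. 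Since $q\le\sqrt n=\poly(n)$, this yields $\Pr[\mathcal{B}_2]\le q\cdot e^{-\Omega(\log^2 n)}=o(1)$. Combining, $\Pr[\mathcal{E}]\le\Pr[\mathcal{B}]=o(1)$, which completes the proof.

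The step I expect to be the main obstacle is making the ``traces coincide'' induction and the round-by-round union bound fully rigorous --- in particular, arguing carefully that conditioning on ``no bad event so far'' leaves the partition, the edge set, and the $\bB$-membership of all unexposed vertices with their original product distribution, so that the oracle's random $\bL$ and the algorithm's next query are genuinely independent of this fresh randomness, and handling the (mild) hypergeometric concentration needed for the split of $\bL$. The probability estimates themselves reduce to routine Chernoff and union bounds.
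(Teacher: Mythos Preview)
Your proposal is correct and follows essentially the paper's approach: the paper also couples $\Dyes$ and $\Dno$ exactly as you describe (its Lemma~\ref{lem:good}), bounds the probability that a vertex of $\bB$ is ever revealed or that the oracle fails (its Lemma~\ref{lem:bad}), and handles your acknowledged conditioning obstacle by explicitly pinning down the conditional distribution of $\bG$ at each node of the algorithm's decision tree (its Lemma~\ref{lem:distribution}), which is precisely the ``condition on the full knowledge triple rather than just on no-bad-event-so-far'' refinement you anticipate needing. One small slip that does not affect your argument: the parenthetical claim that $\Dyes$-failure ``forces failure in the $\Dno$-run since $\bG\supseteq\bG'$'' has the implication reversed---more edges make $\bL$ \emph{less} likely to be independent---but you only actually use the correct direction, namely that on $\neg\mathcal{B}_2$ neither run fails.
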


We start with some intuition.  
%an algorithm which distinguishes $\Dyes$ and $\Dno$ will look for a node $i \in \bA$, where $\deg(i) \geq d \log n$. If such an algorithm succeeds, the algorithm will know that the underlying graph must come from $\Dno$, and so it outputs ``no''. If the algorithm cannot find such an occurrence, it should output ``yes''. The 
Given access to the augmented oracle, 
an algorithm will aim to make the set $K$ as large as possible in order to maximize the chance of 
$\bB \cap K \neq \emptyset$ (in~which~case one can conclude that $\bG \sim \Dno$). 
Now if the algorithm makes a query $Q\subseteq [n]\setminus K$ with
$|Q|\le t=$ $\sqrt{n/d}\cdot \log n$, the probability of $Q\cap \bB\ne \emptyset$ is $o(1/q)$ given that 
$|\bB|$ is only roughly $d\log n$ (this will be made formal in the proof of Lemma \ref{lem:bad}).
On the other hand, if $|Q|>t$, then a vertex in $\bB$
can be added to $K$ because either $\bL\cap \bB\ne \emptyset$ (which happens 
with low probability by a similar analysis since $|\bL|=t$) or the oracle ``fails'' (which 
we show that is unlikely to happen).

%  As a result, the algorithm will make a query $Q \subset [n] \setminus K$ which is as large as possible; however, the oracle only gets to increase $K$ by a random subset of $Q$ of size $\sqrt{n/d}\log n$. Thus, throughout the execution of the algorithm, $K$ only grows by at most $\sqrt{n / d}\log n$; and since $|\bB| = d\log n$, the probability that $Q \cap \bB \neq \emptyset$ is $\tilde{O}(\sqrt{d/n})$, which means $\tilde{\Omega}(\sqrt{n/d})$ such queries are necessary. 

To proceed with the proof of Lemma \ref{lem:augmented} we view $\ALG^*$ as a tree of depth $q$ in which each 
internal node is labelled by a query set, each leaf is labelled either ``yes'' or ``no,'' 
and each edge is labelled by
a knowledge triple $(K,\ell,e)$ as the result of the previous query received from the augmented~oracle.
Let $(u,v)$ be an edge with $v$ being a child of $u$.
The label of $(u,v)$ is the current knowledge triple $(K,\ell,e)$
of the algorithm when it arrives at $v$ and thus, the query set $Q$ at $v$ is a subset of $[n]\setminus K$.
We will refer to the label of $(u,v)$ as the current knowledge triple of $v$;
for the root we have $K=\emptyset$.

We introduce the following definition of good and bad nodes, which is
inspired by the intuition that an algorithm would aim for
reaching a $K$ with $K\cap \bB\ne \emptyset$.
We then prove two lemmas based on this definition and use them to prove Lemma \ref{lem:augmented}.

\begin{definition}
	We say a node $v$ in the tree of an algorithm $\emph{\ALG}^*$ is \emph{good} if its current 
	knowledge triple $(K, \ell, e)$ satisfies $\ell^{-1}(b) = \emptyset$.
	We say $v$ is \emph{bad} otherwise.
\end{definition}

We note that $\bG \sim \Dyes$ can never reach a bad node
since $\bB=\emptyset$ in this case. 
\begin{lemma}\label{lem:bad}
	We have (the probability is over $\bG\sim\Dno$ and randomness of
	the augmented oracle) %$q \lsim \sqrt{n/d} / \log^3 n$,
	\[ \Prx_{\substack{\bG \sim \Dno}}\big[\hspace{0.04cm}  \emph{\ALG}^*(\bG) \text{ reaches a bad node}\hspace{0.06cm}\big] = o(1). \]
\end{lemma}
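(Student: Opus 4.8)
The plan is to bound $\Pr[\,\ALG^*(\bG)\text{ reaches a bad node}\,]$ by a union bound over the at most $q$ query steps. A knowledge triple can only turn bad when the oracle's response to some query inserts a vertex of $\bB$ into $K$: the set $K$ only grows, labels of known vertices never change, so once a $b$-label appears it persists, and hence $\ALG^*(\bG)$ reaches a bad node iff at some step a vertex of $\bB$ enters $K$. Writing $v_{j-1}$ for the node reached after $j-1$ queries and $\mathcal{G}_{j-1}$ for the event that $v_{j-1}$ is good, monotonicity of goodness along a path gives $\Pr[\,\ALG^*(\bG)\text{ reaches a bad node}\,]\le\sum_{j=1}^{q}\Pr[\,v_j\text{ bad}\mid\mathcal{G}_{j-1}\,]$, and I would control each summand by conditioning on the exact good triple at $v_{j-1}$.

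The engine is a ``lazy-revelation'' claim: conditioned on reaching any fixed good node with known set $K$, the membership of the unknown vertices in $\bB$ is i.i.d., each vertex of $[n]\setminus K$ lying in $\bB$ independently with probability $p:=\tfrac{d\log n}{2n}$. I would prove this from the generative process of $\Dno$ ($\bA$ uniform; $\bB\subseteq\bA$ included w.p.\ $d\log n/n$; then cross-pairs are edges w.p.\ $d/n$ except those out of $\bB$, which are forced): a good triple records only the $\bA/\ol{\bA}$-labels of vertices in $K$ (no $b$-labels, so $K\cap\bB=\emptyset$) and the edges inside $K$, and these edges — being between $\ell^{-1}(a)\subseteq\bA\setminus\bB$ and $\ell^{-1}(\ol{a})$ — are fresh $\mathrm{Bernoulli}(d/n)$ variables independent of $\bB$ given $\bA$, so the conditioning leaves the $\bB$-membership of $[n]\setminus K$ exactly at the prior. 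One subtlety to handle carefully is that the augmented oracle is itself randomized on large queries, so ``reaching a node'' also conditions on the sets $\bL$ sampled at previous large-query steps; but each such $\bL$ is drawn independently of $(\bA,\bB,\bG)$, so reaching a node only additionally exposes the labels and internal edges of a $\bB$-independent subset of $[n]\setminus K$, and the same reasoning goes through. This claim (together with this interaction) is the main obstacle; the rest is routine Chernoff/union-bound bookkeeping.

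Granting the claim, fix a good $v_{j-1}$ with known set $K$, so its query $Q\subseteq[n]\setminus K$ is determined. If $|Q|\le t$, the oracle adds all of $Q$ and the triple turns bad iff $Q\cap\bB\ne\emptyset$, which has probability $1-(1-p)^{|Q|}\le tp$. If $|Q|>t$, the oracle adds a uniform $t$-subset $\bL\subseteq Q$ (or fails); since $\bL$ is independent of $\bB$ and lies in $[n]\setminus K$, the claim gives $\Pr[\,\bL\cap\bB\ne\emptyset\,]=1-(1-p)^{t}\le tp$, and it remains to bound the failure probability $\Pr[\,\bL\text{ is an independent set of }\bG\,]$. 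For this I would use that $|K|\le qt=O(n/\log^2 n)=o(n)$ throughout (before any failure), so by the $\bA$-analogue of the claim the vertices of $Q$ have i.i.d.\ $\mathrm{Bernoulli}(1/2)$ membership in $\bA$; a Chernoff bound puts $\ge t/4$ of them on each side of $(\bA,\ol{\bA})$, and then the random $t$-subset $\bL$ has $\Omega(t)$ vertices on each side, except with probability $e^{-\Omega(t)}$. Conditioned on that, $\bL$ contains $\Omega(t^2)$ cross pairs, each an edge with probability at least $d/n$ independently, so $\bL$ is an independent set with probability at most $(1-d/n)^{\Omega(t^2)}\le e^{-\Omega(t^2 d/n)}\le e^{-\Omega(\log^2 n)}$, using $t^2 d/n\ge\log^2 n$. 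Hence each summand is at most $tp+e^{-\Omega(t)}+e^{-\Omega(\log^2 n)}$.

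Finally I would plug in $q=\Theta(\sqrt{n/d}/\log^3 n)$, $t=\Theta(\sqrt{n/d}\log n)$, and $p=\tfrac{d\log n}{2n}$, so that $q\cdot tp=\Theta(1/\log n)=o(1)$, while $q\cdot(e^{-\Omega(t)}+e^{-\Omega(\log^2 n)})\le n\cdot e^{-\Omega(\log^2 n)}=o(1)$ since $d\le n/\log^6 n$ forces $t\ge\log^4 n$. Summing over the $\le q$ steps then yields $\Pr[\,\ALG^*(\bG)\text{ reaches a bad node}\,]=o(1)$, which is exactly the statement of the lemma.
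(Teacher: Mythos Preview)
Your proposal is correct and follows essentially the same approach as the paper: a union bound over the $q$ queries, with the step-wise bound split into the small-query case ($Q\cap\bB\ne\emptyset$ has probability at most $t\cdot\tfrac{d\log n}{2n}=o(1/q)$) and the large-query case ($\bL\cap\bB\ne\emptyset$ plus the oracle-failure event, each $o(1/q)$). Your ``lazy-revelation'' claim is exactly the content of the paper's Lemma~\ref{lem:distribution}, which the paper states separately and invokes here; the only cosmetic difference is that the paper applies Chernoff directly to $\bL$ (whose vertices are in $\bA$ independently with probability $1/2$ given the conditioning) rather than going through $Q$ first.
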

\begin{lemma}\label{lem:good}
	For every good node $v$ in the tree of $\emph{\ALG}^*$, we have
	\[ \Prx_{\substack{\bG \sim \Dno }}\big[\hspace{0.04cm} \emph{\ALG}^*(\bG) \text{ reaches $v$}\hspace{0.04cm}\big] \leq \Prx_{\substack{\bG \sim \Dyes }}\big[\hspace{0.04cm} \emph{\ALG}^*(\bG) \text{ reaches $v$}\hspace{0.04cm}\big]. \]
\end{lemma}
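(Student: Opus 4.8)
The plan is to prove Lemma~\ref{lem:good} by a coupling of the executions of $\ALG^*$ on $\bG \sim \Dno$ and on $\bG \sim \Dyes$, under which, as long as the $\Dno$-run occupies a good node, it occupies exactly the same node as the $\Dyes$-run; the containment of the corresponding ``reach $v$'' events then gives the inequality.

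I would first set up the coupling, building on the observation from Section~\ref{sec:overview}: a draw $\bG \sim \Dno$ can be produced by sampling the partition $(\bA,\ol{\bA})$ and, independently, including each base pair $i \in \bA$, $j \in \ol{\bA}$ with probability $d/n$ --- which is exactly a draw $\bG' \sim \Dyes$ --- then sampling $\bB \subseteq \bA$ (each vertex independently with probability $d\log n/n$) and setting $\bG := \bG' \cup (\bB \times \ol{\bA})$. I use the same $(\bA,\ol{\bA},\bG',\bB)$ for both runs (the $\Dyes$-run ignores $\bB$, i.e.\ uses the default $\bB=\emptyset$), and I couple the augmented oracle's internal coins so that, as long as the two runs have made identical queries, a large query $Q$ (with $|Q|>t$) gets the same uniformly random size-$t$ subset $\bL \subseteq Q$ in both. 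This is a valid coupling: its $\Dyes$-marginal is the genuine $\Dyes$-process, and since overriding the coins of the pairs in $\bB \times \ol{\bA}$ to $1$ does not change the law of $\bG' \cup (\bB \times \ol{\bA})$, its $\Dno$-marginal is the genuine $\Dno$-process.

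The heart of the argument is the claim, proved by induction on the number of queries, that whenever the $\Dno$-run reaches a good node it has made exactly the same queries and received exactly the same knowledge triples as the $\Dyes$-run, so both runs are at that node. The base case ($K=\emptyset$) is trivial. For the step, suppose both runs are at a common node $u$ and hence pose the same query $Q$, and suppose the node $v$ the $\Dno$-run moves to is good, so its knowledge triple carries no label $b$; equivalently $K_v \cap \bB = \emptyset$, where $K_v$ contains $K_u$ together with all vertices whose labels and incident edges are revealed by this query (namely $Q$ itself if $|Q|\le t$; the sampled $\bL$ if $|Q|>t$ and the oracle does not fail; and all of $[n]$ if it fails, which forces $\bB=\emptyset$ and makes the two runs literally identical). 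Thus no revealed pair of vertices has an endpoint in $\bB$, so $\bG$ and $\bG'$ agree on every revealed pair and all revealed labels agree; moreover $\bL$ (when $|Q|>t$) has the same edge set in $\bG$ and in $\bG'$, so the \emph{fails} branch is taken in one run iff it is taken in the other. Hence the augmented oracle returns an identical knowledge triple in both runs, and the deterministic $\ALG^*$ moves to $v$ in both.

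Finally, since a good node $v$ forces every node on the root-to-$v$ path to be good (the set $K$ only grows and a vertex once labeled $b$ keeps that label), the claim shows that under the coupling $\{\,\Dno\text{-run reaches }v\,\} \subseteq \{\,\Dyes\text{-run reaches }v\,\}$; taking probabilities and using the correctness of the two marginals gives $\Pr_{\bG \sim \Dno}[\ALG^*(\bG)\text{ reaches }v] \le \Pr_{\bG \sim \Dyes}[\ALG^*(\bG)\text{ reaches }v]$, which is the statement of the lemma. I expect the main delicate point to be the case analysis for the \emph{fails} branch of the augmented oracle (together with the degenerate event $\bB=\emptyset$ under $\Dno$) inside the inductive step; the remainder is bookkeeping about which vertices are revealed at each query.
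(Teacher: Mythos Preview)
Your proposal is correct and follows essentially the same approach as the paper: the same coupling (draw $\bG'\sim\Dyes$, then sample $\bB\subseteq\bA$ and set $\bG=\bG'\cup(\bB\times\ol{\bA})$), the same coupling of the augmented oracle's internal randomness for $\bL$, and the same conclusion via event containment. The paper compresses your inductive case analysis into a single sentence --- it asserts directly that, with shared oracle randomness, $\ALG^*(\bG)$ reaches $v$ if and only if $\ALG^*(\bG')$ reaches $v$ and every vertex in $\ell^{-1}(a)$ avoids $\bB$ --- whereas you spell out the per-query step (including the fails branch) explicitly; your treatment of the fails branch is a useful addition, since the paper glosses over it.
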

We delay the proofs of these two lemmas and first use them to prove Lemma \ref{lem:augmented}.\medskip

\begin{proofof}{Lemma~\ref{lem:augmented} Assuming Lemma \ref{lem:bad} and Lemma \ref{lem:good}}
	Let $N$ denote the set of leaves of $\ALG^*$ that are labelled ``no,''
	and let $N_g\subseteq N$ denote those that are also good.
	%where if ${\ALG}^*$ reaches a leaf $v \in N$, ${\ALG}^*$ outputs ``no'', and $N_g \subset N$ be those leaves which are also good. 
	Then we have
	\begin{align}
		&\Prx_{\bG \sim \Dno}\big[\hspace{0.04cm} \ALG^*(\bG) \text{ outputs ``no''} 
		\hspace{0.04cm}\big]\\[0.5ex] &\le \sum_{v \in N_g} \Prx_{\bG \sim \Dno}\big[\hspace{0.04cm} \ALG^*(\bG) \text{ reaches $v$}\hspace{0.04cm}\big] + \Prx_{\bG \sim \Dno}\big[\hspace{0.04cm} \ALG^*(\bG) \text{ reaches a bad node}\hspace{0.04cm}\big] \nonumber\\
		&\leq \sum_{v \in N_g} \Prx_{\bG\sim \Dyes}\big[\hspace{0.04cm} \ALG^*(\bG) \text{ reaches $v$}\hspace{0.04cm}\big] + o(1) \label{eq:hehe}\\[0.3ex]
		&\leq \Prx_{\bG \sim \Dyes}\big[\hspace{0.04cm} \ALG^*(\bG) \text{ outputs ``no''}\hspace{0.04cm}\big] + o(1), \nonumber
	\end{align}
	where we used both Lemma~\ref{lem:bad} and Lemma~\ref{lem:good} in (\ref{eq:hehe}).
\end{proofof}\medskip

Finally we prove Lemma \ref{lem:bad} and Lemma \ref{lem:good}.
We start with a lemma that, given
any  node~$v$~in the tree of $\ALG^*$,  describes exactly the
distribution~of $\bG \sim \Dyes$ (or $\bG\sim\Dno$) conditioning on 
$\ALG^*$ arriving at $v$.
Roughly speaking,  
%at any point during the execution of $\ALG^*$, the distribution supported on graphs $\bG$ sampled from $\Dyes$ and $\Dno$ conditioning on the knowledge that the algorithm receives, i.e., $(K, \ell, e)$, being correct, 
this conditional distribution 
is given by including all edges and labelings of $K$ indicated in $\ell$ and $e$,
and otherwise assigning vertices and edges independently as in the constructions of $\Dyes$ and $\Dno$. 

\begin{lemma}\label{lem:distribution}
	Let $v$ be a node of $\emph{\ALG}^*$ and let $(K,\ell,e)$ be its current knowledge triple.
	Then \begin{flushleft}\begin{itemize}
			\item When $v$ is a good node (recall that $\bG\sim\Dyes$ can never 
			reach a bad node), a graph $\bG \sim \Dyes$ conditioning on reaching node $v$ can be generated as follows. 
			We assign each $i\in K$ to $\bA$ or $\ol{\bA}$ according to $\ell(i)$.
			For each $i\notin K$, we include $i \in \bA$ independently 
			with probability $1/2$, and otherwise include it in $\ol{\bA}$.
			For any two vertices $i,j \in K$, we include $(i,j)$ as an edge in $\bG$ if and only if $e(i,j) = 1$. 
			{For each pair $(i,j)$ such that $i\in \bA$, $j\in \ol{\bA}$ and at least one of $i,j$ does not lie in $K$,
				we include $(i,j)$ as an edge in $\bG$ independently with probability $d/n$.}
			%For each $i \notin K$, we  Finally, for each $i \in \bA \setminus K$ and $j \in \ol{\bA}$, we include the edge $(i,j)$ independently with probability $d/n$.
			\item A graph $\bG \sim \Dno$ conditioning on reaching $v$ can be generated as follows.
			We first assign each vertex $i\in K$ to $\ol{\bA}$, $\bA\setminus \bB$ and $\bB$ 
			according to $\ell(i)$.
			We then include each vertex $i\notin K$ in $\bA$ independently with probability $1/2$,
			and otherwise include it in $\ol{\bA}$.
			For each $i\notin K$ included in $\bA$, we also include it in $\bB$ independently with probability $d\log n/n$.  
			Next for each $i,j\in K$, we include $(i,j)$ as an edge in $\bG$ if and only if $e(i,j)=1$.
			{Finally, for each pair $(i,j)$ such that $i\in \bA$, $j\in \ol{\bA}$ and at least one of $i,j$ does not lie in $K$,
				we include it as an edge in $\bG$ independently with probability $d/n$ if $i\notin \bB$
				and always include it in $\bG$ if $i\in \bB$.}
			%  and including edges within $K$ according to $e$. For all $i \notin K$, we include it in $\bA$ with probability $1/2$, and otherwise include it in $\ol{\bA}$. For all $i\in \bA$, we include $i \in \bB$ with probability $d \log n/n$. For each $i \notin \bA \setminus (K \cup \bB)$ and each $j \in \ol{\bA}$, we include the edge $(i,j)$ with probability $d/n$. Finally, for all $i \in \bB$ and $j \in \ol{\bA}$, we include the edge $(i,j)$.
	\end{itemize}\end{flushleft}
\end{lemma}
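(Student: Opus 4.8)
The plan is to prove the lemma by induction on the depth of $v$ in the tree of $\ALG^*$. Two structural facts drive the argument. First, $\ALG^*$ is deterministic, so the query set $Q$ issued at any node is a fixed function of that node's current knowledge triple; hence the entire tree — including the labels of all its edges — is fixed in advance, and ``$\ALG^*$ reaches $v$'' is an event about the responses of the augmented oracle only. Second, both $\Dyes$ and $\Dno$ have a product structure: each vertex is independently assigned to $\ol{\bA}$, $\bA\setminus\bB$, or $\bB$, and, conditioned on this assignment, every ordered pair $(i,j)$ with $i\in\bA$ and $j\in\ol{\bA}$ is an edge independently (with probability $d/n$ if $i\notin\bB$, and with probability $1$ if $i\in\bB$ in the $\Dno$ case). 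Consequently, ``learning'' the induced structure on a vertex set $S$ pins down only the labels of the vertices of $S$ and the edge-indicators of pairs both of whose endpoints are already labelled; it leaves every other vertex-label and every other edge-indicator untouched and mutually independent.

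For the base case $v$ is the root, $K=\emptyset$, there is no conditioning, and the claimed description coincides with the definition of $\Dyes$ (resp.\ $\Dno$) in Section~\ref{sec:dist}. For the inductive step, let $u$ be the parent of $v$ with knowledge triple $(K',\ell',e')$, and let $Q\subseteq[n]\setminus K'$ be the (deterministically chosen) query at $u$. If $|Q|\le t$, reaching $v$ means $K=K'\cup Q$ (a disjoint union) and that the true labels on $Q$ and the true edge-indicators on all pairs within $Q$ and between $Q$ and $K'$ agree with the extension $(\ell,e)$ of $(\ell',e')$; conditioning the law supplied by the inductive hypothesis for $u$ on this event merely fixes the fresh vertex-labels of $Q$ and the fresh edge-indicators of all pairs incident to $Q$ with other endpoint in $K'\cup Q$, and by the product structure the conditional law of everything else is exactly the claimed form for $(K,\ell,e)$. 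If $|Q|>t$ and $v$ has $K\ne[n]$, the oracle sampled some size-$t$ set $\bL\subseteq Q$, found it not independent, and set $K=K'\cup\bL$; since $Q\cap K'=\emptyset$, the set $\bL=K\setminus K'$ is recoverable from $v$'s knowledge triple, so reaching $v$ is precisely the event that the oracle's graph-independent uniform choice landed on this $\bL$ \emph{and} the true structure on $\bL$ matches $(\ell,e)$ — the condition ``$\bL$ is not an independent set'' being automatic, since $e$ restricted to $\bL\subseteq K$ already records an edge. The graph-independent factor $1/\binom{|Q|}{t}$ drops out of the conditioning, and the remaining conditioning on the revealed structure on a freshly-labelled set is handled exactly as in the previous case. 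When $K=[n]$ (the oracle ``failed'') the claimed description is a point mass and matches trivially. Finally, the $\Dyes$ statement is about good nodes only because $\bB=\emptyset$ forces $\ell^{-1}(b)=\emptyset$ at every reachable node.

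I expect the main obstacle to be the careful pair-by-pair bookkeeping behind the phrase ``conditioning merely fixes the fresh choices'': one must verify that conditioning the inductive-hypothesis law on the newly revealed induced structure genuinely leaves the still-hidden part in product form — separating edges with both endpoints in $K$ (frozen by $e$), edges with exactly one endpoint in the newly added set, and edges disjoint from $K$, and likewise vertex labels inside versus outside the new set. In the $\Dno$ case this includes checking that the not-yet-determined $\bB$-membership coins of un-queried vertices and the (deterministically present) $\bB$-to-$\ol{\bA}$ edges are not perturbed. All of this is a direct consequence of the independence built into the constructions, but it is the only place where anything needs to be spelled out; once it is in place, the lemma follows.
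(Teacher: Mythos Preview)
Your proof is correct, and its backbone---exploiting the product structure of $\Dyes$ and $\Dno$ so that conditioning on the revealed part of the graph leaves the unrevealed part fresh---is the same idea the paper uses. The presentation differs: the paper gives a direct argument, reordering the sampling of $\bG$ into four stages (labels on $K$; edges within $K$; labels outside $K$; remaining edges) and then asserting in one line that the event ``reach $v$'' is measurable with respect to the first two stages, so steps~(3)--(4) survive unchanged. You instead induct on the depth of $v$, peeling off one query at a time and explicitly separating the oracle's internal coin flips (the uniform choice of $\bL$) from the graph randomness. Your route is longer but more transparent on exactly the point the paper's proof leaves implicit---namely why the augmented oracle's own randomness, and the ``$\bL$ not independent'' check, do not leak any information about the graph outside $K$. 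Both arguments are valid; yours would serve well as an expanded version of the paper's sketch.
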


\begin{proof}
	We consider the case of $\bG \sim \Dyes$ since the argument will be analogous for the other case. To this end, we may consider generating  $\bG \sim \Dyes$ (the original $\Dyes$ distribution) according to the following process
	with four steps: (1) We first assign each vertex $i \in K$ independently to $\bA$ or $\ol{\bA}$ with probability $1/2$; (2) Then, for each $i \in K \cap \bA$ and $j \in K \cap \ol{\bA}$, we include $(i, j)$ as an edge in $\bG$ independently with probability $d/n$; (3) Next we assign for each $i \notin K$ to $\bA$ or $\ol{\bA}$ with probability $1/2$; (4)~{Finally,~for each $(i,j)$ such that $i\in \bA$, $j\in \ol{\bA}$ 
		and at least one of $i,j$ does not lie in $\bA$,
		we include it as an edge in $\bG$~independently with probability $d/n$.}
	%$i \in K \cap \bA$ and $j \in \ol{\bA} \setminus K$, we add the edge $(i, j)$ with probability $d/n$, for $i \in \bA \setminus K$ and $j \in K \cap \ol{\bA}$, we add $(i,j)$ with probability $d/n$, and for $i \in \bA \setminus K$ and $j \in \ol{\bA}\setminus K$, we add edge $(i,j)$ with probability $d/n$. 
	We note that the event of $\bG \sim \Dyes$ reaching~$v$ fixes the randomness of steps (1) and (2). 
	As a result, the conditional distribution of $\Dyes$ described 
	in the lemma matches  the randomness that remains in steps (3) and (4).
	% the remaining steps 3 and 4 follow exactly the distribution stated in the lemma. 
\end{proof}\medskip

We are now ready to prove Lemma \ref{lem:bad} and Lemma \ref{lem:good}.\medskip

\begin{proofof}{Lemma \ref{lem:bad}}
	Consider a good node $v$ of $\ALG^*$ and let
	$(K, \ell, e)$ be its current knowledge triple with $\ell^{-1}(b) = \emptyset$. 
	Let $Q \subseteq [n] \setminus K$ be the query made at $v$. 
	We show below that the probability of $\ALG^*$ reaching a child of $v$ that is bad after this query is $o(1/q)$.
	The lemma then follows from a union bound over the $q$ queries of $\ALG^*$.
	
	%  The oracle will reveal either $Q$ if $|Q| \leq \sqrt{n/d}\log n$, or a random subset $\bL \subset Q$ of size  $\sqrt{n/d}\log n$ when $|Q| \geq \sqrt{n/d}\log n$. 
	The analysis considers two cases. If $|Q| \leq t$, then by Lemma~\ref{lem:distribution},
	\begin{align} 
		\Prx_{\bG \sim \Dno}\big[\hspace{0.04cm} Q \cap \bB \neq \emptyset \mid {\ALG}^*(\bG) \text{ reaches $v$}\hspace{0.04cm}\big] \leq |Q| \cdot \frac{1}{2}\cdot \frac{d\log n}{n} \leq \sqrt{d/n}\cdot  \log^2 n &= o\left(1/q\right).  \label{eq:small-bad}
	\end{align}
	If $|Q| >t$, $\ALG^*$ reaches a bad node if $\bL\cap \bB\ne \emptyset$ or 
	$\bL$ is independent. So the probability is at most  
	\begin{align*}
		%&\Prx_{\substack{\bG \sim \Dno \\ \bL \subseteq Q}}\big[\hspace{0.04cm} \bL \cap \bB \neq \emptyset \vee \bL \text{\ contains no edges} \mid {\ALG}^*(\bG) \text{ reaches $v$}\hspace{0.04cm}\big] \\[0.4ex]
		\Prx_{\substack{\bG \sim \Dno \\ \bL \subseteq Q}}\big[\hspace{0.04cm} \bL \cap \bB \neq \emptyset \mid {\ALG}^*(\bG) \text{ reaches $v$} \hspace{0.04cm}\big] + \Prx_{\substack{\bG \sim \Dno \\ \bL \subseteq Q}}\big[\hspace{0.04cm} \bL \text{ contains no edges} \mid {\ALG}^*(\bG) \text{ reaches $v$} \hspace{0.04cm}\big]. 
	\end{align*}
	We note that the first summand is at most $o(1/q)$, similarly to (\ref{eq:small-bad}). 
	
	For the second summand  we note that with probability at least $1 -{1}/{\poly(n)}$ over the draw~of $\bG \sim \Dno$
	conditioning on reaching $v$ and $\bL \subseteq Q$,  we have both $$|\bL \cap \bA| \geq \Omega\left(\sqrt{n/d}\cdot \log n\right)\quad\text{and}\quad
	|\bL \cap \ol{\bA}| = \Omega\left(\sqrt{n/d}\cdot \log n\right).$$ When this occurs, by Lemma~\ref{lem:distribution}, each $i \in \bL \cap \bA$ and $j \in \bL \cap \ol{\bA}$ satisfies that either $(i,j)$ is always included as an edge
	(as $i\in \bB$) or $(i,j)$ is included as an edge independently with probability $d/n$.
	As a result, the probability there are no edges in $\bL$ is at most $1/\poly(n)=o(1/q)$.
	%Thus, we have that for each query of the algorithm, the probability that we will end in a bad node 
	%is at most $o(1/q)$. %This finishes the proof of the lemma. %, so that we take a union bound over $q$ queries to conclude the desired lemma.
\end{proofof}\medskip

\begin{proofof}{Lemma \ref{lem:good}}
	We consider the following coupling $(\bG,\bG')$ of the two distributions $\Dyes$ and $\Dno$.
	We first draw $\bG\sim \Dyes$ together with the vertex set $\bA$.
	Then we draw a string $\boldr\in \{0,1\}^\bA$ where each bit $\boldr_i$ is set to 
	$1$ with probability $d\log n/n$ independently.
	Finally we convert $\bG$ into $\bG'$ by adding $(i,j)$ as an edge in $\bG'$
	for all $i\in \bA$ with $\boldr_i=1$ and all $j\in \ol{\bA}$.
	It is then clear that the marginal distribution of $\bG'$ is exactly $\Dno$ so this is a coupling of $\Dyes$ and $\Dno$.
	
	Now let $v$ be a good node in the tree of $\ALG^*$ with $(K,\ell,e)$ being its current knowledge triple.
	Then under the coupling $(\bG,\bG')$ described above we have  
	%We first note that reaching a particular node of $v$ means that the sets $\bL_1, \dots, \bL_q$ must take on specific values $L_1, \dots, L_q$ along the root-to-$v$ path, and since these are uniform random subsets of a particular size, the distribution of $\bL_1, \dots, \bL_q$ is the same for both $\Dyes$ and $\Dno$. We note that we may consider generating a graph $\bG \sim \Dno$ as the process of first generating a graph $\bG \sim \Dyes$, and then generating a string $\boldr \sim \{0,1\}^{|\bA|}$ where $\boldr_i$ is set to $1$ with probability $d\log n/n$; when $\boldr_i = 1$, we include $i \in \bB$ and add all edges from $i$ to every $j \in \ol{\bA}$. Therefore, for any good leaf $v$ revealing $(K, \ell, e)$, 
	\[ \Prx_{(\bG,\bG')}\big[\hspace{0.04cm} \ALG^*(\bG') \text{ reaches $v$}\hspace{0.04cm}\big]= \Prx_{ (\bG,\bG')}\big[\hspace{0.04cm} \ALG^*(\bG ) \text{ reaches $v$} \wedge \forall\hspace{0.04cm}i \in \ell^{-1}(a), \boldr_i = 0 \hspace{0.04cm}\big], \]
	because with the same randomness (in sampling $\bL$ each time when needed) in the augmented oracle 
	$\ALG^*(\bG')$ would reach the same node $v$ if and only if $\ALG^*(\bG)$ reaches $v$ and 
	$\boldr_i=0$ for all $i\in \ell^{-1}(a)$.
	As a result, the probability of $\bG \sim\Dyes$ reaching $v$ is at least as large as 
	that of $\bG'\sim\Dno$.
	%so the probability of reaching a good leaf $v$ is always greater when $\bG \sim \Dyes$.
\end{proofof}\medskip

%As a result of the Lemma~\ref{lem:bad}, if $\ALG^*$ outputs ``no'' on a bad leaf, this will only constitute $o(1)$-fraction of the probability mass that an algorithm outputs ``no'' on $\Dno$. Therefore, the $\ALG^*$ must output ``no'' on some good leaves, but by Lemma~\ref{lem:good}, this will mean that $\ALG^*$ will output ``no'' on $\bG \sim \Dyes$ too often, and we obtain the desired lower bound.\medskip

\subsection{Proof of Theorem~\ref{thm:lb}}\label{finalsec}

%The case where $n\in\N$ and $m\ge n\log n$ follows directly from Lemma~\ref{lem:augmented}. 
The case when $m\ge n$ follows directly from Lemma \ref{lem:augmented}.

When $m<n$, we use the observation that every randomized algorithm with parameters $n$ and $m$
(i.e., determining whether an input graph $G=([n],E)$ satisfies $|E|\le m/2$ or $|E|\ge m$)~implies a randomized algorithm with parameters $m$ and $m$
(i.e., determining whether a given graph $G'=([m],E')$ has $|E'|\le m/2$ or $|E'|\ge m$)
with the same query complexity by simply embedding the input graph $G'=([m],E')$ in a graph $G=([n],E)$ using its first $m$ vertices 
(and noting that the independent set oracle of $G$ can be simulated using that of $G'$ query by query).
The latter task, by Lemma \ref{lem:augmented}, has a lower bound of $\tilde{\Omega}(\sqrt{m})$.
This finish the proof of the theorem when $m<n$.

\begin{flushleft}
\bibliographystyle{alpha}
\bibliography{waingarten}
\end{flushleft}
\end{document}